\keywords{Interval Temporal Logics, Expressiveness, Model Checking}
\algrenewcommand{\algorithmiccomment}[1]{\hfill$\triangleleft$ {\footnotesize \textsl{#1}}}
\algrenewcommand\algorithmicindent{2.0em}
\renewcommand{\ALG@beginalgorithmic}{\small}
\newcommand{\obsD}{\cO bs_D}
\newcommand{\reqD}{\cR eq_D}
\newcommand{\rank}{rank}
\newcommand{\row}{row}
\newcommand{\hsBE}{\textup{BE}}
\newcommand{\hsD}{\textup{D}}
\newcommand{\hsB}{\textup{B}}
\newcommand{\hsE}{\textup{E}}
\newcommand{\hsAAbar}{\textup{A$\overline{\textup{A}}$}}
\newcommand{\hsDhom}{\textup{D}$|_{\cH om}$}
\newcommand{\hsDhomStrict}{\textup{D}$|_{{\cH om},\ssubint}$}
\newcommand{\bD}{[D]}
\newcommand{\genDphiS}{\begin{tikzpicture}
\node(A)[inner sep=0pt]{$\scriptstyle{\Dphi\ssubint}$};
\node[inner sep=0pt](B) at (0.7,0){};
\node[inner sep=0pt](C) at (-0.6,0){};
\draw[double](C) -- (A);
\draw[double,->](A) --(B);
\end{tikzpicture}}
\newcommand{\RelMIN}{\Rightarrow_{\varphi}^{min}}
\newcommand{\Atoms}{\cA_{\varphi}}
\newcommand{\AP}{\mathpzc{AP}}
\newcommand{\Rows}{\cR ows_{\varphi}}
\newcommand{\RowsMIN}{\cR ows_{\varphi}^{min}}
\newcommand{\bbD}{\mathbb{S}}
\newcommand{\bbI}{\mathbb{I}}
\newcommand{\bbP}{\mathbb{P}}
\newcommand{\LTL}{\text{\sffamily LTL}}
\newcommand{\CTL}{\text{\sffamily CTL}}
\newcommand{\CTLstar}{\text{\sffamily CTL*}}
\newcommand{\cA}{\mathcal{A}}
\newcommand{\cG}{\mathcal{G}}
\newcommand{\cH}{\mathcal{H}}
\newcommand{\cL}{\mathcal{L}}
\newcommand{\cO}{\mathcal{O}}
\newcommand{\cR}{\mathcal{R}}
\newcommand{\hA}{\hat{A}}
\newcommand{\hn}{\hat{n}}
\newcommand{\hm}{\hat{m}}
\newcommand{\om}{\overline{m}}
\newcommand \details[1]{}
\newcommand{\Psp}{\textbf{PSPACE}}
\newcommand{\EXPSPACE}{\textbf{EXPSPACE}}
\newcommand{\NPsp}{\textbf{NPSPACE}}
\newcommand{\NLOGSP}{\textbf{NLOGSPACE}}
\newcommand{\co}{\textbf{co-}}
\newcommand{\Ku}{\ensuremath{\mathpzc{K}}}
\newcommand{\States}{W}
\newcommand{\Edges}{E}
\newcommand{\sinit}{s_0}
\newcommand{\SUCC}{succ_\varphi}
\newcommand{\SUCCStrict}{succ_{\varphi,\ssubint}}
\newcommand{\SuccMIN}{succ_\varphi^{min}}
\newcommand{\Instance}{\mathcal{I}}
\newcommand{\tupleof}[1]{(#1)}
\newcommand{\Init}{\textit{init}}
\newcommand{\Final}{\textit{final}}
\newcommand{\Left}{\textit{left}}
\newcommand{\Right}{\textit{right}}
\newcommand{\Down}{\textit{down}}
\newcommand{\Up}{\textit{up}} 
\DeclareMathOperator{\CL}{CL}
\DeclareMathOperator{\REQ}{REQ}
\DeclareMathAlphabet{\mathpzc}{OT1}{pzc}{m}{it}
\begin{document}

\title[SAT and MC for D under homogeneity]{Satisfiability and Model Checking for the Logic of Sub-Intervals
under the Homogeneity Assumption\rsuper*}
\titlecomment{{\lsuper*}This paper is an extended and revised version of \cite{bmmpsICALP}.}

\author[L.~Bozzelli]{Laura Bozzelli\rsuper{a}}	

\author[A.~Molinari]{Alberto Molinari\rsuper{b}}	

\author[A.~Montanari]{Angelo Montanari\rsuper{b}}	

\author[A.~Peron]{Adriano Peron\rsuper{a}}	

\author[P.~Sala]{Pietro Sala\rsuper{c}}	

\address{University of Napoli ``Federico II'', Napoli, Italy}	
\email{\texttt{lr.bozzelli@gmail.com}, \texttt{adrperon@unina.it}}  

\address{University of Udine, Udine, Italy}	
\email{\texttt{molinari.alberto@gmail.com}, \texttt{angelo.montanari@uniud.it}}  

\address{University of Verona, Verona, Italy}	
\email{\texttt{pietro.sala@univr.it}}  

\begin{abstract}
  The expressive power of interval temporal logics (ITLs) makes them
one of the most natural choices in a number of application domains, ranging from the specification and verification of complex reactive systems to automated planning.
However, for a long time, because of their high computational complexity, they were considered not suitable for practical purposes.
The recent discovery of several computationally well-behaved ITLs has finally changed the scenario.

In this paper, we investigate the finite satisfiability and model checking problems for the ITL 
\hsD, that has a single modality for the sub-interval relation, under the homogeneity assumption (that constrains a proposition letter to hold over an interval if and  only if it holds over all its points).
We first prove that the satisfiability problem for \hsD, over finite linear orders, is $\Psp$-complete, and then
we show that the same holds for its model checking problem, over finite Kripke structures. 
In such a way, we enrich the set of tractable interval temporal logics with a new meaningful representative.
\end{abstract}

\maketitle
\section{Introduction}

For a long time, interval temporal logics (ITLs) were considered an attractive, but impractical, alternative to standard point-based ones. On the one hand, as stated by Kamp and Reyle~\cite{from_discourse_to_logic}, ``truth, as it pertains to language in the way we use it, relates sentences not to instants but to temporal intervals'', and thus ITLs are a natural choice as a specification/representation language; on the other hand, the high undecidability of the satisfiability problem for the most well-known ITLs, such as Halpern and Shoham's modal logic of time intervals (HS for short)~\cite{interval_modal_logic} and Venema's CDT~\cite{chopping_intervals}, prevented an extensive use of them (in fact, some quite restricted variants of them have been applied in formal verification and artificial intelligence over the years).

The present work fits into the context of the logic HS, which features one modality for each of the  thirteen Allen's relations, apart from equality. In Table~\ref{allen}, we depict six Allen's relations, each one corresponding to a different ordering relation between a pair of intervals,
together with the corresponding HS (existential) modalities. The other seven relations are the inverses
of the six depicted ones, plus the equality relation.

\begin{table}[tp]
	\renewcommand{\arraystretch}{1.1}
	\centering
	\caption{Allen's relations and the corresponding HS modalities.}\label{allen}
	\medskip
	\begin{tabular}{cclc}
		\toprule
		\rule[-1ex]{0pt}{3.5ex} Allen relation & HS & Definition w.r.t.\ interval structures &  Example\\
		\midrule
		
		&   &   & \multirow{7}{*}{\begin{tikzpicture}[scale=1.03]
\draw[draw=none, use as bounding box](-0.4,0.2) rectangle (3.4,-3.1);

\coordinate [label=left:\textcolor{red}{$x$}] (A0) at (0,0);
\coordinate [label=right:\textcolor{red}{$y$}] (B0) at (1.5,0);
\draw[red] (A0) -- (B0);
\fill [red] (A0) circle (2pt);
\fill [red] (B0) circle (2pt);

\coordinate [label=left:$v$] (A) at (1.5,-0.5);
\coordinate [label=right:$z$] (B) at (2.5,-0.5);
\draw[black] (A) -- (B);
\fill [black] (A) circle (2pt);
\fill [black] (B) circle (2pt);

\coordinate [label=left:$v$] (A) at (2,-1);
\coordinate [label=right:$z$] (B) at (3,-1);
\draw[black] (A) -- (B);
\fill [black] (A) circle (2pt);
\fill [black] (B) circle (2pt);

\coordinate [label=left:$v$] (A) at (0,-1.5);
\coordinate [label=right:$z$] (B) at (1,-1.5);
\draw[black] (A) -- (B);
\fill [black] (A) circle (2pt);
\fill [black] (B) circle (2pt);

\coordinate [label=left:$v$] (A) at (0.5,-2);
\coordinate [label=right:$z$] (B) at (1.5,-2);
\draw[black] (A) -- (B);
\fill [black] (A) circle (2pt);
\fill [black] (B) circle (2pt);

\coordinate [label=left:\textcolor{red}{$v$}] (A) at (0.5,-2.5);
\coordinate [label=right:\textcolor{red}{$z$}] (B) at (1,-2.5);
\draw[red] (A) -- (B);
\fill [red] (A) circle (2pt);
\fill [red] (B) circle (2pt);

\coordinate [label=left:$v$] (A) at (1.3,-3);
\coordinate [label=right:$z$] (B) at (2.3,-3);
\draw[black] (A) -- (B);
\fill [black] (A) circle (2pt);
\fill [black] (B) circle (2pt);

\coordinate (A1) at (0,-3);
\coordinate (B1) at (1.5,-3);
\draw[dotted] (A0) -- (A1);
\draw[dotted] (B0) -- (B1);
\end{tikzpicture}}\\
		
		\textsc{meets} & $\langle A \rangle$ & $[x,y]\mathpzc{R}_A[v,z]\iff y=v$ &\\
		
		\textsc{before} & $\langle L\rangle$ & $[x,y]\mathpzc{R}_L[v,z]\iff y<v$ &\\
		
		\textsc{started-by} & $\langle B \rangle$ & $[x,y]\mathpzc{R}_B[v,z]\iff x=v\wedge z<y$ &\\
		
		\textsc{finished-by} & $\langle E \rangle$ & $[x,y]\mathpzc{R}_E[v,z]\iff y=z\wedge x<v$ &\\
		
		\textcolor{red}{\textsc{contains}} & \textcolor{red}{$\langle D \rangle$} & $[x,y]\mathpzc{R}_D[v,z]\iff [v,z] \subset [x,y]$ &\\
		
		\textsc{overlaps} & $\langle O \rangle$ & $[x,y]\mathpzc{R}_O[v,z]\iff x<v<y<z$ &\\
		
		\bottomrule
	\end{tabular}
\end{table}

The recent discovery of a significant number of expressive and computationally well-behaved fragments of full HS  changed the landscape of research in ITLs (~\cite{DBLP:journals/eatcs/MonicaGMS11,DBLP:conf/time/Montanari16}. Meaningful examples are the logic \hsAAbar\ of temporal neighborhood~\cite{DBLP:journals/apal/BresolinGMS09} (the HS fragment featuring modalities for the \emph{meets} relation and its inverse \emph{met by}) and the logic \hsD\ of (temporal) sub-intervals~\cite{DBLP:journals/logcom/BresolinGMS10} (the HS fragment featuring only one modality for the \emph{contains} relation). In this paper, we focus on the latter.

To begin with, we observe that  \hsD{} is a (proper) fragment of the logic \hsBE\ with modalities for the Allen's relations \emph{started-by} and \emph{finished-by} (see Table~\ref{allen}) since any sub-interval is just an initial sub-interval of an ending one, or, equivalently, an ending sub-interval of an initial one.
From a computational point of view, \hsD\ is a real character.
Indeed, it has been shown that its satisfiability problem is $\Psp$-complete over the class of dense linear orders~\cite{DBLP:journals/logcom/BresolinGMS10,DBLP:conf/aiml/Shapirovsky04} (the same problem is undecidable for \hsBE~\cite{DBLP:conf/asian/Lodaya00}), while it becomes undecidable when the logic is interpreted over the classes of finite and discrete linear orders~\cite{DBLP:journals/fuin/MarcinkowskiM14}. The situation is still unknown over the class of all linear orders. As for its expressiveness, unlike \hsAAbar---which is expressively complete with respect to the two-variable fragment of first-order logic for binary relational structures over various linearly-ordered domains~\cite{DBLP:journals/apal/BresolinGMS09,DBLP:journals/jsyml/Otto01}---three variables are needed to encode \hsD\ in first-order logic (the two-variable property is a sufficient condition for decidability, but it is not a necessary one).

In this paper, we show that the decidability of the satisfiability problem for \hsD\ over the class of finite linear orders can be recovered by assuming homogeneity (the \emph{homogeneity assumption} constrains a proposition letter to hold over an interval if and only if it holds over all its constituent points).
First, by exploiting a suitable contraction method, we prove that, under the homogeneity assumption, the problem belongs to $\Psp$. Then, we show that the proposed  algorithm  for satisfiability checking can be transformed into a $\Psp$ model-checking procedure for \hsD\ formulas over finite Kripke structures (under homogeneity). Finally, $\Psp$-hardness of both satisfiability and model checking is proved via a reduction from a domino-tiling problem for grid with rows of linear length. 

The result of this paper about $\Psp$-completeness of the model checking problem for \hsD\  is particularly interesting when compared to the model checking problem for \hsBE, which includes \hsD\ as a proper fragment and, at first sight, may seem to be quite close to \hsD.
The exact complexity of the model checking problem for \hsBE, over finite Kripke structures, under the homogeneity assumption, is still unknown and it is a difficult open question whether it can be solved elementarily. At the present time, only a nonelementary model checking procedure is known for it~\cite{DBLP:journals/acta/MolinariMMPP16}, and the closest proved lower bound is $\EXPSPACE$ \cite{TCSBMMPS19}. The complete picture for the complexity of the model checking (MC for short) problem for HS and its fragments (under homogeneity) is reported in Figure~\ref{fGr}. Being \hsD\ the most significant fragment of \hsBE, the results
given in this paper allow us to gain a better insight into such an open question. The exact complexity of the satisfiability problem for \hsBE\ over finite linear orders (under homogeneity) is an open issue as well: the same upper/lower bounds can be shown to hold by reductions to/from the MC problem.

\paragraph{Related work.}
Temporal logics of sub-intervals have been already investigated in the literature. In particular, they come into play in the study of temporal prepositions in natural language~\cite{DBLP:journals/ai/Pratt-Hartmann05}.
The connections between the temporal logic of (strict) sub-intervals and the logic of Minkowski space-time have been explored by Shapirovsky and Shehtman~\cite{DBLP:journals/logcom/ShapirovskyS05}, while
the temporal logic of reflexive sub-intervals has been studied for the first time by van Benthem, who proved that, when interpreted over dense linear orderings, it is equivalent to the standard modal logic S4~\cite{DBLP:journals/jphil/Benthem91}.
Satisfiability for the temporal logic of reflexive sub-intervals over the class of finite linear orders is decidable and $\Psp$-complete~\cite{DBLP:conf/time/MontanariPS10}.

The satisfiability problem for HS has been extensively studied in the literature. The problem in its full generality, that is, without the homogeneity assumption, has been proved to be highly undecidable over all relevant (classes of) linear orders \cite{interval_modal_logic}. The same holds for most of its fragments~\cite{DBLP:journals/amai/BresolinMGMS14,DBLP:conf/asian/Lodaya00,DBLP:journals/fuin/MarcinkowskiM14}.

The MC problem for HS and its fragments over finite Kripke structures (under the homogeneity assumption)
has been systematically explored in a series of papers~\cite{LM13,LM14,LM16,DBLP:journals/acta/MolinariMMPP16,ICMMP18,ICBMMPS18,TCSBMMPS19}.
In Figure~\ref{fGr}, we add the contribution of this paper to the general picture of known complexity results
making it clear that it enriches the set of ``tractable'' fragments with a meaningful new member.
An expressive comparison of the MC problem for HS and 
point-based temporal logics \LTL, \CTL, and \CTLstar\ can be found in \cite{TOCLMMPS19}.

\begin{figure}[t]
	\centering
	\newcommand{\A}{\mathrm{A}}
\newcommand{\Abar}{\mathrm{\overline{A}}}
\newcommand{\AAbar}{\mathrm{A\overline{A}}}
\newcommand{\AAbarBbarEbar}{\mathrm{A\overline{A}\overline{B}\overline{E}}}
\newcommand{\ABbar}{\mathrm{A\overline{B}}}
\newcommand{\Bbar}{\mathrm{\overline{B}}}
\newcommand{\Ebar}{\mathrm{\overline{E}}}
\newcommand{\B}{\mathrm{B}}
\newcommand{\E}{\mathrm{E}}
\newcommand{\BE}{\mathrm{B}\mathrm{E}}
\newcommand{\AAbarB}{\mathrm{A\overline{A}B}}
\newcommand{\AAbarE}{\mathrm{A\overline{A}E}}
\newcommand{\HSexi}{\mathrm{\exists A\overline{A}BE}}
\newcommand{\HSforall}{\mathrm{\forall A\overline{A}BE}}
\newcommand{\ABBbar}{\mathrm{AB\overline{B}}}
\newcommand{\AAbarBBbarEbar}{\mathrm{A\overline{A}B\overline{B}\overline{E}}}
\newcommand{\AAbarEBbarEbar}{\mathrm{A\overline{A}E\overline{B}\overline{E}}}
\newcommand{\AAbarBBbar}{\mathrm{A\overline{A}B\overline{B}}}
\newcommand{\AAbarEEbar}{\mathrm{A\overline{A}E\overline{E}}}
\newcommand{\AAbarBE}{\mathrm{A\overline{A}BE}}
\newcommand{\HSprop}{\mathrm{Prop}}

\newcommand{\PTIME}{\mathbf{P}}
\newcommand{\NP}{\mathbf{NP}}
\newcommand{\LOGSPACE}{\mathbf{LOGSPACE}}
\newcommand{\NEXPTIME}{\mathbf{NEXPTIME}}
\newcommand{\PH}{\mathbf{PH}}
\newcommand{\coNP}{\mathbf{co-NP}}
\newcommand{\Th}{\PTIME^{\NP[O(\log n)]}}
\newcommand{\Thsq}{\PTIME^{\NP[O(\log^2 n)]}}

\newcommand{\cellThree}[3]{
\begin{tabular}{c|c}
\rule[-1ex]{0pt}{3.5ex}
\multirow{2}{*}{#1} & #2 \\ 
\hhline{~=}\rule[-1ex]{0pt}{3.5ex}
 & #3 
\end{tabular}}

\newcommand{\cellTwo}[2]{\begin{tabular}{c|c}
\rule[-1ex]{0pt}{3.5ex}
#1 & #2 \\
\end{tabular}}

\newcommand{\cellOne}[1]{\begin{tabular}{c}
		\rule[-1ex]{0pt}{3.5ex}
		#1
	\end{tabular}}

\resizebox{\linewidth}{!}{
\begin{tikzpicture}[->,>=stealth',shorten >=1pt,auto,semithick,main node/.style={rectangle,draw, inner sep=0pt}]  

\tikzstyle{gray node}=[fill=gray!30]

    \node [main node](0) at (-4,0) {\cellTwo{$\AAbarBbarEbar$}{$\Psp$-complete}};
    \node [main node](1) at (3,0)  {\cellTwo{$\Bbar$}{$\Psp$-complete}};
    \node [main node](21) at (3,0.8)  {\cellTwo{$\Ebar$}{$\Psp$-complete}};
    \node [main node](31) at (3,2.3)  {\cellTwo{$\AAbarEEbar$}{$\Psp$-complete}};
    \node [main node](31a) at (3,3.2)  {\cellTwo{$\mathrm{D}$}{\cellcolor{gray!30}{$\Psp$-complete}}};
    \node [main node](32) at (-2.5,2.3)  {\cellTwo{$\AAbarBBbar$}{$\Psp$-complete}};
    \node [main node](2) at (-6.6,-3.4) {\cellThree{$\AAbar$}{$\Thsq$}{$\Th$-hard }};
    \node [main node](22) at (0,-3.4) {\cellThree{$\A$, $\Abar$}{$\Thsq$}{$\Th$-hard }};
    \node [main node](92) at (6.5,-3.4) {\cellThree{$\Abar\B$, $\A\E$}{$\Thsq$}{$\Th$-hard}};
    
    \node [main node](72) at (-3.3,-1.0)  {\cellTwo{$\AAbarB$}{$\PTIME^{\NP}$-complete}};
    \node [main node](73) at (3.5,-1.0)  {\cellTwo{$\AAbarE$}{$\PTIME^{\NP}$-complete}};
    \node [main node](74) at (-3.3,-2)  {\cellTwo{$\A\B$}{$\PTIME^{\NP}$-complete}};
    \node [main node](75) at (3.5,-2)  {\cellTwo{$\Abar\E$}{$\PTIME^{\NP}$-complete}};
    
    
    \node [main node](5) at (-4,4) {\cellThree{$\AAbarBBbarEbar$}{$\EXPSPACE$ }{$\Psp$-hard }};
    
    \node [main node](9) at (3,5) {\cellThree{$\BE$}{\textbf{nonELEMENTARY}}{$\EXPSPACE$-hard }};
    \node [main node](10) at (3,7) {\cellThree{full HS}{\textbf{nonELEMENTARY} }{$\EXPSPACE$-hard}};
   
    \path
    (1) edge [swap] node {hardness} (0) 
    (0) edge  [out=150,in=190] node {hardness} (5.south west)
    (2.north east) edge [swap,out=370,in=170] node {up.-bound} (22.north west)
    (22.south west) edge [swap,out=190,in=-10] node {hardness} (2.south east)
    (22.east) edge node {hardness} (92.west)
    (9) edge [swap] node {hardness} (10)
    (21.north) edge node {hardness} (31.south)
    (1.west) edge [out=120,in=-45, near end] node {hardness} (32.south)
    (74.west) edge [out=130, in=230, near end] node {hardness} (72.west)
    (72.east) edge [out=310, in=50] node {upper-bound} (74.east)
    (75.west) edge [out=130, in=230, near end] node {hardness} (73.west)
    (73.east) edge [out=310, in=50] node {upper-bound} (75.east)
    ;
    
    \draw [dashed,-,red] (-6.5,4) -- (6,4);
    \draw [dashed,-,red] (-6.5,-2.5) -- (6,-2.5);
    \draw [dashed,-,red] (-6.5,-0.5) -- (6,-0.5);
\end{tikzpicture}
}
	\vspace{-0.4cm}
	\caption{Complexity of the MC problem for HS and its fragments (under homogeneity).}
	\label{fGr}
\end{figure}

\paragraph{Organization of the paper.}
In Section~\ref{sec:preliminaries}, we provide some background knowledge. In particular, we introduce the logic \hsD, the notion of interval model, and an extremely useful spatial representation of interval models called compass structure.
In Section~\ref{sec:decidability}, we prove $\Psp$ membership of the satisfiability problem for \hsD\ over finite linear orders
(under homogeneity) making use of a contraction technique applied to compass structures, which relies on a suitable equivalence relation of finite index.
In Section~\ref{sec:mc}, we show that the MC problem for  \hsD\  (under homogeneity), over finite Kripke structures, is in $\Psp$ as well. The proposed MC algorithm is basically a satisfiability procedure driven by the computation traces of the system model.
In Section~\ref{sec:hardness}, we show $\Psp$-hardness of both problems via a polynomial-time  reduction from a domino-tiling problem for grids with rows of linear length. 
Conclusions summarize the work done and outline future research directions.

\newcommand{\D}{\langle D\rangle}
\newcommand{\TD}{\langle \overline D\rangle}
\newcommand{\BD}{[D]}
\newcommand{\bM}{\mathbf{M}}
\newcommand{\cV}{\mathcal{V}}

\newcommand{\DCirc}{\langle D_\circ\rangle}


\newcommand{\A}{\ang{A}}
\newcommand{\BA}{[A]}
\newcommand{\TA}{\ang{\overline{A}}}
\newcommand{\BTA}{[\overline{A}]}
\newcommand{\CA}{(A)}
\newcommand{\CTA}{(\overline{A})}
\newcommand{\Dr}{\Diamond_r}
\newcommand{\Br}{\Box_r}
\newcommand{\Dl}{\Diamond_l}
\newcommand{\Bl}{\Box_l}

\newcommand{\RB}{\mathrel{R_{\cB}}}
\newcommand{\fS}{f_{\bS}}
\newcommand{\fbS}{f^{b}_{\bS}}
\newcommand{\feS}{f^{e}_{\bS}}



\newcommand{\Sse}[1]{%
  \stackrel{\text{#1}}{\Longleftrightarrow}%
}

\newcommand{\Allora}[1]{%
  \stackrel{\text{#1}}{\Longrightarrow}%
}
\newcommand{\subint}{\sqsubset}
\newcommand{\subinteq}{\sqsubseteq}
\newcommand{\ssubint}{{\text{$\sqsubset$\llap{$\;\cdot\;$}}}}
\newcommand{\ssupint}{{\text{$\sqsupset$\llap{$\;\cdot\;$}}}}
\newcommand{\Dphi}{\mathrel{D_\varphi}}
\newcommand{\bbG}{\mathbb{G}}
\newcommand{\Dref}{\textup{D$_\subinteq$}\xspace}
\newcommand{\Dirr}{\textup{D$_\subint$}\xspace}
\newcommand{\Dstr}{\textup{D$_\ssubint$}\xspace}
\newcommand{\Dsim}{\textup{D$_\circ$}\xspace}

\section{The logic \hsD\ of the sub-interval relation}\label{sec:preliminaries}

To begin with, we introduce some preliminary notions.
Let $\mathbb{S} = \langle S, <\rangle$ be a  linear order over a nonempty set $S\neq \emptyset$, and $\leq$
be the reflexive closure of $<$. Given two elements $x,y\in S$, with $x\leq y$, we denote by $[x,y]$
the (closed) \emph{interval} over $S$ consisting of the set of elements $z\in S$ such that $x\leq z$ and $z\leq y$.
For $z\in S$, we write $z\in [x,y]$ to mean that $z$ is an element of the interval $[x,y]$.
We denote the set of all intervals over
$\mathbb{S}$ by $\mathbb{I(S)}$.
We consider two possible \emph{sub-interval relations}:
\begin{enumerate}
    %
    \item the \emph{proper} sub-interval relation (denoted as $\subint$), defined
    by $[x, y]\subint  [x', y']$ if $x' \leq x$, $y\leq y'$, and $[x, y] \neq [x', y']$ (corresponding to the proper subset relation over intervals), 
    \item and the \emph{strict}
    sub-interval relation (denoted as $\ssubint$), defined by
    $[x, y] \ssubint [x', y'] $ if and only if $x' < x$ and $y<y'$.
\end{enumerate}

The two modal logics  $\Dirr$  and $\Dstr$ feature the same
language, consisting of a finite set $\AP$ of proposition
letters/variables, the logical connectives $\neg$ and $\vee$, and the
modal operator $\D$. Formally, formulae are
defined by the grammar: \[\varphi ::= p \ |\ \neg\varphi\ |\ \varphi \vee \varphi\ |\ \D \varphi,\]
with $p\in\AP$.
The other connectives, as well as
the logical constants $\top$ (true) and $\bot$ (false), are defined as usual; moreover, the dual universal modality $\BD\varphi$ is defined as $\neg\D\neg\varphi$. The length of a formula $\varphi$, denoted as $|\varphi|$, is the number of sub-formulas of $\varphi$.

The semantics of~\Dirr  and \Dstr  only differ in the
interpretation of the $\D$ modality. For the sake of brevity, we use
$\circ\  \in \{\subint,\ssubint\}$ as a shorthand for
any of the two sub-interval relations. The semantics of a
sub-interval logic \Dsim is defined in terms of \emph{interval models}
$\bM=\langle\mathbb{I(S)},\cV\rangle$, where $\cV : \AP \mapsto 2^{\mathbb{I(S)}}$ is a \emph{valuation function}  assigning to
every proposition letter $p$ the set of intervals $\cV(p)$ over
which it holds. Given an interval model $\bM=\langle \mathbb{I(S)}, \cV\rangle$, an interval $[x,y]\in \mathbb{I(S)}$, and a formula $\psi$,  the \emph{satisfaction relation} $\bM,[x,y]\models\psi$, stating that $\psi$ holds over the interval $[x,y]$ of $\bM$, is inductively defined as~follows:
%
\begin{itemize}
  \item for every proposition letter  $p \in \AP$, $\bM, [x,y]
        \models p$ \,\emph{if}\,  $[x,y] \in \cV(p)$;

  \item $\bM, [x,y] \models \neg \psi$ \,\emph{if}\,  $\bM, [x,y]
        \not\models \psi$, that is, it is not true that $\bM, [x,y]
        \models \psi$;

  \item $\bM, [x,y] \models \psi_1 \vee \psi_2$ \,\emph{if}\, either  $\bM, [x,y] \models \psi_1$ or $\bM, [x,y] \models \psi_2$;

  \item $\bM, [x,y] \models \D \psi$ \,\emph{if}\,   there exists an interval $[x',y'] \in
        \bbI(\mathbb{S})$ such that $[x',y'] \circ [x,y]$ and~$\bM, [x',y'] \models \psi$.
\end{itemize}
A \Dsim-formula is \Dsim-\emph{satisfiable} if it holds over some interval of an interval model, and it is \Dsim-\emph{valid} if it holds over every interval of every interval model.

%
In this paper, we restrict our attention to the finite satisfiability problem, that is, satisfiability over the class of finite linear orders. The problem has been shown to be \emph{undecidable} for $\Dirr$ and $\Dstr$ \cite{DBLP:journals/fuin/MarcinkowskiM14}.  
In the following, we show that decidability  for $\Dirr$ and $\Dstr$ can be recovered by restricting to the class of \emph{homogeneous} interval models. We fully work out the case of $\Dirr$ (for the sake of simplicity, hereafter we will write \hsD\ for $\Dirr$), and then we briefly explain how to adapt the proofs~to~$\Dstr$.

\begin{defi}\label{def:homogeneous_models}
A model
$\bM=\langle\mathbb{I(S)},\cV\rangle$ is \emph{homogeneous} if
for every interval $[x,y]  \in \mathbb{I(S)}$ and every $p \in \AP$,
it holds that $[x,y] \in \cV(p)$ if and only if $[x',x'] \in \cV(p)$
for every $x'\in [x,y]$.
\end{defi}
In the following, we will refer to \hsD\ interpreted over homogeneous models as \hsDhom.
Moreover, we refer to \Dstr\ interpreted over homogeneous models as \hsDhomStrict.

\subsection{Closure, atoms, and temporal requests}\label{sec:catr}

We now introduce some basic definitions and notation which will be
extensively used in the following. Given a \hsD-formula (resp., \Dstr-formula) $\varphi$,
we define the \emph{closure of $\varphi$}, denoted by
$\CL(\varphi)$, as the set of all sub-formulas $\psi$ of $\varphi$ and of their
negations $\neg\psi$ (we identify $\neg\neg \psi$ with $\psi$ and $\neg\D\psi$ with $\BD\neg\psi$). 

\begin{defi}\label{def:d-atom}
A \emph{$\varphi$-atom $A$} is a subset of
$\CL(\varphi)$ such that:
\begin{itemize}
    \item  for every $\psi \in \CL(\varphi)$, $\psi \in A$ if and only if $\neg\psi \notin A$, and
    \item for every $\psi_1 \vee \psi_2 \in \CL(\varphi)$, $\psi_1 \vee \psi_2 \in A$ if and only if either $\psi_1 \in A$ or $\psi_2 \in A$.
\end{itemize}
\end{defi}

Intuitively, a $\varphi$-atom describes a maximal set of sub-formulas of $\varphi$ which can hold at an interval of an interval model.
In particular, the idea underlying atoms is to enforce a \lq\lq local\rq\rq{} (or Boolean) form of consistency among the formulas it contains, that is, a $\varphi$-atom $A$ is a \emph{maximal, locally consistent subset} of $\CL(\varphi)$. As an example, $\neg(\psi_1 \vee \psi_2) \in A$ iff $\neg\psi_1\in A$ and $\neg\psi_2\in A$. Note, however, that the definition does not set any constraint on sub-formulas of $\varphi$ of the form $\D\psi$, hence the word ``local''.
We denote the set of all $\varphi$-atoms by $\cA_\varphi$. Its cardinality is clearly bounded by $2^{|\varphi|}$ (by the first item of Definition~\ref{def:d-atom}). Atoms are connected by the following binary relation $\Dphi$ which, intuitively, represents the ``symbolic'' counterpart of the inverse $\sqsupset$ of relation $\sqsubset$  between pairs of intervals, if $\varphi$ is a \hsD-formula,
and the inverse $\ssupint$ of relation $\ssubint$, otherwise.

\begin{defi}\label{def:Dphi-relation}
Let $\Dphi$ be the binary relation over $\cA_\varphi$ defined as follows: for each pair of atoms $A, A' \in \cA_\varphi$, $A \Dphi A'$ holds if
for all formulas $\BD\psi \in A$,
both $\psi \in A'$ and $\BD\psi \in A'$.
\end{defi}

Note that, by the semantics of \hsD\ (resp., \Dstr), if $A$ and $A'$ are the sets of formulas in $\CL(\varphi)$ that hold at  intervals $[x,y]$ and $[x',y']$, respectively, of an interval model such that $[x,y] \sqsupset [x',y']$ (resp., $[x,y]\, \ssupint\, [x',y']$), then $A \Dphi A'$.

Let $A$ be a $\varphi$-atom. We denote by $\reqD(A)$ the set $\{\psi \in \CL(\varphi) : \D \psi \in A\}$ of \lq\lq \emph{temporal requests}\rq\rq{} of $A$. In particular, if $\D\psi \in \CL(\varphi)$ and $ \psi \notin \reqD(A)$, then $\BD \neg \psi \in A$ (by definition of $\varphi$-atom). Moreover, we denote by $\REQ_{\varphi}$ the set of all  arguments of $\D$-formulas in $\CL(\varphi)$, namely, $\REQ_{\varphi}=\{ \psi: \D \psi\in \CL(\varphi)\}$. Finally, we denote by $\obsD(A)$ the set $\{\psi \in A : \psi \in \REQ_{\varphi}\}$ of \lq\lq \emph{observables}\rq\rq{} of $A$.

The next proposition, stating that, once proposition letters and temporal requests of $A$ have been fixed, $A$ gets unambiguously determined, can be easily proved by induction.
\begin{prop}\label{prop:unique}
    Let $\varphi$ be a \hsD-formula (resp., \Dstr-formula), $R \subseteq \REQ_{\varphi}$, and $P \subseteq \CL(\varphi)\cap \AP$. Then, there is a unique $\varphi$-atom $A$ that satisfies $\reqD(A)=R$ and $A\cap \AP = P$.
\end{prop}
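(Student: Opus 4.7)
The plan is to argue by structural induction on the subformulas of $\varphi$, using the atom conditions to show that, once $P$ and $R$ are fixed, membership in $A$ is forced for every formula in $\CL(\varphi)$. This simultaneously yields uniqueness and, by defining $A$ in the obvious way, existence.

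First I would set up the induction. Recall that every formula $\psi \in \CL(\varphi)$ is either a subformula of $\varphi$ or the negation of one, and by the identifications $\neg\neg\chi = \chi$ and $\neg\D\chi = \BD\neg\chi$ every such $\psi$ has exactly one of the forms $p$, $\neg p$, $\chi_1\vee\chi_2$, $\neg(\chi_1\vee\chi_2)$, $\D\chi$, or $\BD\chi$. The induction parameter is the structural size of the underlying subformula. Given any candidate atom $A$ with $\reqD(A)=R$ and $A\cap\AP = P$, I would show by induction on $\psi$ that the truth of ``$\psi\in A$'' is determined by $P$ and $R$ alone, via the forced equivalences
\begin{itemize}
\item $p\in A \iff p\in P$ for $p\in\AP$,
\item $\D\chi\in A \iff \chi\in R$ (by the definition of $\reqD$), and dually $\BD\chi\in A \iff \neg\chi\notin R$ (using the identification $\neg\D\neg\chi=\BD\chi$),
\item $\neg\psi\in A \iff \psi\notin A$ (first bullet of Definition~\ref{def:d-atom}),
\item $\chi_1\vee\chi_2\in A \iff \chi_1\in A \text{ or } \chi_2\in A$ (second bullet of Definition~\ref{def:d-atom}).
\end{itemize}
In each inductive case the right-hand side is already fixed by the induction hypothesis, so ``$\psi\in A$'' is determined. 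This proves uniqueness.

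For existence I would turn the same recursion into a definition: let $A\subseteq \CL(\varphi)$ be the set obtained by declaring $p\in A \iff p\in P$, $\D\chi\in A \iff \chi\in R$, and then closing under the two clauses of Definition~\ref{def:d-atom} by structural recursion on the subformulas of $\varphi$. The resulting $A$ clearly satisfies $A\cap\AP = P$ and $\reqD(A)=R$; a straightforward check against Definition~\ref{def:d-atom} (using that the construction is forced to decide each $\psi$ vs.\ $\neg\psi$ consistently, and each disjunction by the disjunction clause) shows that $A$ is indeed a $\varphi$-atom.

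The only subtle point, and hence the one I would take care of explicitly, is the bookkeeping caused by the syntactic identifications $\neg\neg\chi=\chi$ and $\neg\D\chi=\BD\neg\chi$: one must verify that the two ways of reading a formula in $\CL(\varphi)$ (as a subformula or as a negation) lead to the same membership decision. For $\neg\neg\chi$ this is automatic from the first atom clause; for $\BD\neg\chi$ it reduces to checking that $\BD\neg\chi\in A \iff \neg\chi\notin\reqD(A) \iff \chi\in R \text{ fails for } \neg\chi$, which matches the clause for $\D$. No other case requires more than a direct appeal to the definitions.
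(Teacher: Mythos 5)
Your proof is correct and is exactly the argument the paper has in mind: the paper gives no explicit proof, stating only that the proposition ``can be easily proved by induction,'' and your structural induction (membership of proposition letters and $\D$-formulas forced by $P$ and $R$, membership of negations and disjunctions forced by the two atom clauses, with the bookkeeping for the identifications $\neg\neg\chi=\chi$ and $\neg\D\chi=\BD\neg\chi$) is the intended elaboration.
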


In the rest of the paper, we will exploit the following characterization of the relation $\Dphi$.

\begin{prop}\label{prop:characterizeTemporalConsistency}
  Let $A$ and $A'$ be two $\varphi$-atoms. Then, $A \Dphi A'$ if and only if $\reqD(A') \subseteq \reqD(A)$ and $\obsD(A') \subseteq \reqD(A)$.
\end{prop}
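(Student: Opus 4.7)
The statement is a clean algebraic rewriting of the definition of $\Dphi$, and the plan is to prove it by a direct chase through the atom axioms, handling the two directions separately and exploiting the duality $\BD\chi \equiv \neg\D\neg\chi$ together with the maximality clause that $\chi \in B$ iff $\neg\chi \notin B$ for every $\chi \in \CL(\varphi)$. Throughout, the closure identifications $\neg\neg\chi \equiv \chi$ and $\neg\D\chi \equiv \BD\neg\chi$ mean that $\D\chi \in \CL(\varphi)$ iff $\BD\neg\chi \in \CL(\varphi)$, so statements about $\reqD$ and $\obsD$ translate immediately into statements about $\BD$-membership, and vice versa.

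For the forward direction, I would assume $A \Dphi A'$ and derive both inclusions contrapositively. To see $\reqD(A') \subseteq \reqD(A)$, pick $\D\psi \in A'$: if $\D\psi \notin A$ then $\BD\neg\psi \in A$, so the hypothesis propagates $\BD\neg\psi$ into $A'$, which is incompatible with $\D\psi \in A'$ by the atom maximality applied to $\D\psi$. To see $\obsD(A') \subseteq \reqD(A)$, pick $\psi \in A'$ with $\D\psi \in \CL(\varphi)$: if $\D\psi \notin A$ then again $\BD\neg\psi \in A$, and the hypothesis transports $\neg\psi$ into $A'$, contradicting $\psi \in A'$.

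For the backward direction, assume both inclusions and fix an arbitrary $\BD\psi \in A$. The key observation is that $\BD\psi \in \CL(\varphi)$ forces $\neg\psi \in \REQ_\varphi$, while $\BD\psi \in A$ rules out $\D\neg\psi \in A$, i.e.\ $\neg\psi \notin \reqD(A)$. Applying the contrapositive of $\obsD(A') \subseteq \reqD(A)$ to $\neg\psi$ (which lies in $\REQ_\varphi$) yields $\neg\psi \notin A'$, hence $\psi \in A'$; applying the contrapositive of $\reqD(A') \subseteq \reqD(A)$ yields $\neg\psi \notin \reqD(A')$, i.e.\ $\D\neg\psi \notin A'$, hence $\BD\psi \in A'$. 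This exhausts the definition of $A \Dphi A'$.

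I do not expect a real obstacle: the whole content is a tidy bookkeeping exercise that repackages the definition of the relation so that, in later sections, consistency between atoms can be certified by comparing only the lightweight data $(\reqD, \obsD)$ instead of scanning all $\BD$-formulas in $A$. The only point that warrants a moment of care is not confusing $\psi$ with $\neg\psi$ while matching the closure identifications, especially in the second inclusion of the forward direction, where the hypothesis is applied to $\BD\neg\psi$ but the contradiction must be extracted at $\psi$.
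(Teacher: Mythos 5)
Your proof is correct and follows essentially the same route as the paper's: both arguments are a direct unfolding of the definition of $\Dphi$ using atom maximality and the identification $\neg\D\psi\equiv\BD\neg\psi$, the only difference being that the paper compresses the two directions into a single chain of equivalences while you spell them out contrapositively.
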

\begin{proof}
We have that   $A \Dphi A'$ \emph{if and only if} for all
 $\BD\psi \in \CL(\varphi)$, $\BD\psi \in A$ entails that $\BD\psi \in A'$ and $\psi \in A'$ \emph{if and only if}
 (by definition of atom and the identification of $\neg\D\psi$ with $\BD\neg\psi$) for all
 $\D \psi \in \CL(\varphi)$, (i)  $\D\psi \in A'$ entails that $\D\psi \in A$, and (ii)  $\psi \in A'$ entails that $\D\psi \in A$
  \emph{if and only if} $\reqD(A') \subseteq \reqD(A)$ and $\obsD(A') \subseteq \reqD(A)$.
\end{proof}

\subsection{A spatial representation of interval models}\label{sec:compass}

We now provide a natural interpretation of \hsD\ (resp., \Dstr) over grid-like
structures, called \emph{compass structures}, by exploiting the existence
of a natural bijection between intervals $[x,y]$ and
points $(x,y)$, with $x \leq y$, of an $S\times S$ grid, where $\mathbb{S} = \langle S, <\rangle$ is a finite linear order. Such
an interpretation was originally proposed by Venema in~\cite{venema1990},
and it can also be given for HS and all its (other) fragments.

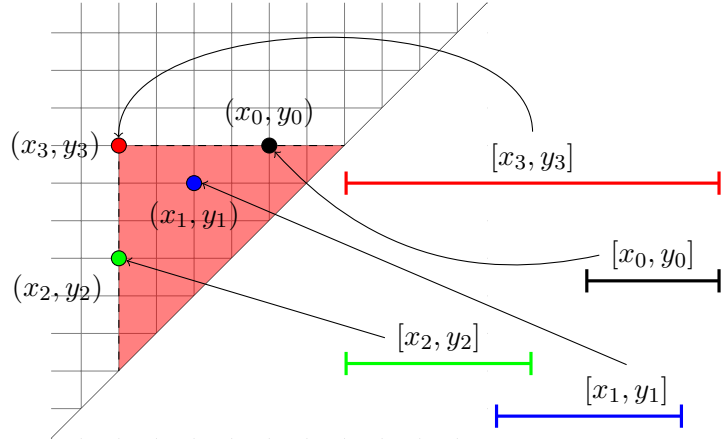
\begin{figure}[tb]
\centering
\begin{tikzpicture}[scale=1]

\draw[step=0.5cm,gray,very thin] (-2.9,-2.9) grid (2.9,2.9);
\fill[color=red, opacity=.5] (-2,1) -- (1,1)-- (-2,-2);
\draw (-2.9,-2.9) -- (2.9,2.9);
\fill[color=white] (-2.9,-2.9) -- (2.9,2.9) -- (2.9,-2.9);
\draw[dashed] (-2,1) -- (-2, -2);
\draw[dashed] (-2,1) -- (1, 1);

\node[shape=circle,draw=black,inner sep=2pt,fill=black, label={above :$(x_0,y_0)$}](A) at (0,1) {};

\node[shape=circle,draw=black,inner sep=2pt,fill=red, label={left :$(x_3,y_3)$}](C) at (-2,1) {};

\node[shape=circle,draw=black,inner sep=2pt,fill=blue, label={below :$(x_1,y_1)$}](B) at (-1,0.5) {};
\node[shape=circle,draw=black,inner sep=2pt,fill=green, label={below left:$(x_2,y_2)$}](D) at (-2,-0.5) {};

\pgftransformshift{\pgfpoint{4cm}{-0.5cm}}

\draw[very thick,|-|] (0.2,-0.3) -- (2,-0.3)node[pos=0.5, above](AI) {$[x_0,y_0]$};

\draw(AI.west) edge[->, bend left] (A);

\draw[very thick,|-|,red] (-3,1) -- (2,1)node[pos=0.5, above=0.001cm,black](CI) {$[x_3,y_3]$};

\draw(CI.north) edge[->, bend right=90, looseness=0.8] (C);

\draw[very thick,|-|,blue] (-1,-2.1) -- (1.5,-2.1)node[pos=0.7, above=0.001cm,black](BI) {$[x_1,y_1]$};

\draw(BI.north) edge[->] (B.east);

\draw[very thick,|-|,green] (-3,-1.4) -- (-0.5,-1.4)node[pos=0.5, above=0.001cm,black](DI) {$[x_2,y_2]$};

\draw(DI.west) edge[->] (D);

\end{tikzpicture}
\caption{\label{fig:compassstructure}Correspondence between intervals and points of the compass structure.}
\end{figure}

As an example, Figure~\ref{fig:compassstructure} shows four intervals
$[x_0,y_0],\ldots,[x_3,y_3]$, respectively represented by the points in the grid $(x_0,y_0),\ldots,(x_3,y_3)$, such that: $(i)$~$[x_0,y_0],\allowbreak [x_1,y_1], [x_2,y_2]\subint [x_3,y_3]$, $(ii)$~$[x_1,y_1] \ssubint\allowbreak [x_3,y_3]$, and $(iii)$~$[x_0,y_0], [x_2,y_2]\not\!\ssubint [x_3,y_3]$.
The red
region highlighted in Figure~\ref{fig:compassstructure} contains all and only those points $(x,y)$ such that $[x,y]\subint[x_3,y_3]$.
Allen interval relation \emph{contains} can thus be represented as a spatial relation between pairs of points. In the following, we make use of $\subint$ and $\ssubint$ also for relating points: given two points $(x,y),(x',y')$ of the grid,
$(x',y')\subint (x,y)$ iff $[x',y'] \subint [x,y]$; similarly, 
$(x',y')\,\ssubint\, (x,y)$ iff $[x',y'] \,\ssubint \,[x,y]$.

Compass structures, that will be repeatedly exploited to solve the satisfiability and model-checking problems for \hsDhom\ and \hsDhomStrict,
can be formally defined as follows.
\begin{defi}\label{def:compassstructure}
Given a 
linear order $\mathbb{S} = \langle S, <\rangle$ and a \hsD-formula (resp., \Dstr-formula) $\varphi$, a  \emph{compass}
$\varphi$-\emph{structure} is a pair $\cG=(\bbP_\bbD,\cL)$,
where $\bbP_\bbD$ is the set of points of the form $(x,y)$,
with $x,y \in S$ and $ x\leq y$, and $\cL$ is a function that
maps any point $(x,y)\in\bbP_\bbD$ to a $\varphi$-atom $\cL(x,y)$
in such a way that for every pair of points $(x,y),(x',y')\in\bbP_\bbD$,
        if $(x',y')\subint (x,y) $ (resp., $(x',y')\,\ssubint\, (x,y)$), then $\cL(x,y) \Dphi \cL(x',y')$
        (\emph{temporal consistency}).

A \emph{weak compass} $\varphi$-\emph{structure} is  a $\varphi$-compass structure where the temporal consistency requirement has been relaxed.
  A (weak) compass $\varphi$-structure $\cG=(\bbP_\bbD,\cL)$ induces the interval model $\bM(\cG)$ over $\mathbb{I(S)}$ whose valuation function  $\cV$
        is defined as follows: for each interval $[x,y]$,   $\cV^{-1}([x,y])=\cL(x,y)\cap \AP$.

\end{defi}
By exploiting Proposition~\ref{prop:characterizeTemporalConsistency} and temporal consistency, we can prove the following lemma, that states an important property of compass structures.
\begin{lem}\label{lem:transitive_req} Let $\varphi$ be a \hsD-formula (resp., \Dstr-formula) and
 $\cG\!=\!(\bbP_\bbD,\cL)$   a compass $\varphi$-structure. $\!\!$ For
all pairs of points $(x'\!,y'),\! (x,y)\!\in\! \bbP_\bbD$,
if $(x',y')\!\subint\! (x,y)$ (resp., $(x',y')\,\ssubint\, (x,y)$), then it holds that $\reqD(\cL(x',y')) \!\subseteq\! \reqD(\cL(x,y))$ and $\obsD(\cL(x',y'))$ $\!\subseteq\! \reqD(\cL(x,y))$.
\end{lem}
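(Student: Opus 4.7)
The proof plan is essentially a direct chaining of two facts already established in the excerpt, so I would keep it very short.

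First, I would invoke the temporal consistency requirement from Definition~\ref{def:compassstructure}: since $\cG$ is a (genuine) compass $\varphi$-structure and $(x',y') \subint (x,y)$ (resp.\ $(x',y') \ssubint (x,y)$), we immediately get $\cL(x,y) \Dphi \cL(x',y')$. Note this is the only place the hypothesis $(x',y') \subint (x,y)$ enters; after this step the argument is purely atom-theoretic.

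Then I would apply Proposition~\ref{prop:characterizeTemporalConsistency} with $A := \cL(x,y)$ and $A' := \cL(x',y')$: the relation $A \Dphi A'$ is equivalent to the conjunction $\reqD(A') \subseteq \reqD(A)$ and $\obsD(A') \subseteq \reqD(A)$. Substituting back yields exactly the two inclusions claimed by the lemma.

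There is no real obstacle here — the lemma is a one-line corollary that bundles together the semantic condition on compass structures with the syntactic characterization of $\Dphi$. The only thing worth noting is that the same argument works uniformly for both $\subint$ and $\ssubint$, because in each case the appropriate clause of Definition~\ref{def:compassstructure} delivers $\cL(x,y) \Dphi \cL(x',y')$, and Proposition~\ref{prop:characterizeTemporalConsistency} is stated purely in terms of the atom relation $\Dphi$ without reference to which geometric relation is being symbolically represented. So a single uniform paragraph suffices, and no induction on formula structure or on grid position is needed.
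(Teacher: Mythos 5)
Your proof is correct and follows exactly the route the paper intends: the paper introduces the lemma with the remark that it is proved ``by exploiting Proposition~\ref{prop:characterizeTemporalConsistency} and temporal consistency,'' which is precisely your two-step chaining of the temporal-consistency clause of Definition~\ref{def:compassstructure} with the characterization of $\Dphi$. Nothing further is needed.
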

%

We now introduce an additional requirement on compass $\varphi$-structures stating that each temporal request is eventually fulfilled.
Formally, \emph{fulfilling}
structures are defined as follows.
\begin{defi}\label{def:fulfillingcompass}
Let $\varphi$ be a \hsD-formula (resp., \Dstr-formula) and $\cG=(\bbP_\bbD,\cL)$   a compass $\varphi$-structure. We say that
$\cG$ is \emph{fulfilling} if for every point $(x,y)\in\bbP_\bbD$ and any formula $\psi\in \reqD(\cL(x,y))$, there exists a point $(x',y')\subint (x,y)$ (resp., $(x',y')\,\ssubint\, (x,y)$) in $\bbP_\bbD$ such that $\psi \in \cL(x',y')$.
\end{defi}
It is worth pointing out that if $\cG$ is fulfilling, then $\reqD(\cL(x,x))=\emptyset$ for all points ``on the diagonal'' $(x,x)\in\bbP_\bbD$
(corresponding to the singleton intervals of $\mathbb{I(S)}$).
%

As proved by Proposition~\ref{prop:compassstructure} below, the fulfillment requirement ensures that, for each point $(x,y)$, the atom $\cL(x,y)$ represents the set of formulas in $\CL(\varphi)$ that hold over the interval $[x,y]$ of the underlying interval model  $\bM(\cG)$.

We say that a compass $\varphi$-structure $\cG=(\bbP_\bbD,\cL)$
\emph{features} a formula $\psi$ if there exists a point $(x,y)\in\bbP_\bbD$
such that $\psi \in \cL(x,y)$.

The next proposition provides a characterization of the set of satisfiable
\hsD-formulas.
%
\begin{prop}\label{prop:compassstructure} Let $\varphi$ be a \hsD-formula (resp., \Dstr-formula). The, the following statements hold:
\begin{enumerate}
  \item given a fulfilling compass $\varphi$-structure $\cG=(\bbP_\bbD, \cL)$, it holds that for all points $(x,y)$  of $\cG$ and $\psi\in \CL(\varphi)$, $\psi\in \cL(x,y)$ if and only if $\bM(\cG),[x,y]\models \psi$;
  \item $\varphi$ is satisfiable if and only if there is a fulfilling compass $\varphi$-structure that
features it.
\end{enumerate}
\end{prop}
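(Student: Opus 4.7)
The proof proceeds in two parts, matching the two items of the statement.

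For part (1), the plan is to proceed by structural induction on the formula $\psi \in \CL(\varphi)$. The base case, when $\psi$ is a proposition letter $p \in \AP$, follows directly from Definition~\ref{def:compassstructure}: since $\cV^{-1}([x,y]) = \cL(x,y) \cap \AP$, we have $p \in \cL(x,y)$ iff $[x,y] \in \cV(p)$ iff $\bM(\cG),[x,y] \models p$. The boolean cases $\neg \psi'$ and $\psi_1 \vee \psi_2$ are routine, using Definition~\ref{def:d-atom} of $\varphi$-atom together with the identification of $\neg\neg\psi$ with $\psi$. The interesting case is $\psi = \D \psi'$. For the forward direction, if $\D\psi' \in \cL(x,y)$, then $\psi' \in \reqD(\cL(x,y))$, and the fulfilling property yields some $(x',y') \subint (x,y)$ with $\psi' \in \cL(x',y')$; by the inductive hypothesis, $\bM(\cG),[x',y'] \models \psi'$, hence $\bM(\cG),[x,y] \models \D\psi'$. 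For the converse, if $\bM(\cG),[x,y] \models \D\psi'$, there exists $[x',y'] \subint [x,y]$ with $\bM(\cG),[x',y'] \models \psi'$; by the inductive hypothesis, $\psi' \in \cL(x',y')$. Since $\D\psi' \in \CL(\varphi)$, we have $\psi' \in \REQ_\varphi$, so $\psi' \in \obsD(\cL(x',y'))$. Invoking Lemma~\ref{lem:transitive_req}, $\obsD(\cL(x',y')) \subseteq \reqD(\cL(x,y))$, whence $\D\psi' \in \cL(x,y)$.

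For part (2), the right-to-left direction is immediate from part (1): if $\cG$ is a fulfilling compass $\varphi$-structure featuring $\varphi$, there is a point $(x,y)$ with $\varphi \in \cL(x,y)$, and part (1) yields $\bM(\cG),[x,y] \models \varphi$, so $\varphi$ is satisfiable. For the left-to-right direction, assume $\bM,[x_0,y_0] \models \varphi$ for some interval model $\bM = \langle \bbI(\bbS),\cV\rangle$ with $\bbS = \langle S,<\rangle$ finite, and define $\cG = (\bbP_\bbD, \cL)$ on the grid induced by $\bbS$ by setting
\[
\cL(x,y) \;=\; \{\psi \in \CL(\varphi) : \bM,[x,y] \models \psi\}.
\]
I would then verify three things in turn: that each $\cL(x,y)$ is a $\varphi$-atom (by direct inspection against Definition~\ref{def:d-atom}, using the semantics of $\neg$ and $\vee$); that the temporal consistency condition holds (if $(x',y') \subint (x,y)$ and $\BD\psi \in \cL(x,y)$, then the semantics of $\BD$ forces $\psi \in \cL(x',y')$ and, transitively via nested sub-intervals, $\BD \psi \in \cL(x',y')$); and that $\cG$ is fulfilling (if $\psi \in \reqD(\cL(x,y))$, i.e.\ $\D\psi \in \cL(x,y)$, then $\bM,[x,y] \models \D\psi$ yields some sub-interval witnessing $\psi$). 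Finally, $\varphi \in \cL(x_0,y_0)$ by construction, so $\cG$ features $\varphi$.

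The only step requiring a bit of care is the inductive step for $\D\psi'$ in part (1), because the compass structure does not a priori encode $\D$-formulas via the $\Dphi$ relation (which tracks $\BD$-formulas); it is precisely here that Lemma~\ref{lem:transitive_req}, expressing the transitive propagation of observables into requests, is essential. The rest of the argument is routine bookkeeping, and the whole proof transfers verbatim to the $\Dstr$ case by replacing $\subint$ with $\ssubint$ throughout.
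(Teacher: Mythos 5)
Your proposal is correct and follows essentially the same route as the paper's own proof: structural induction on $\psi$ for part~(1), with Lemma~\ref{lem:transitive_req} carrying the converse of the $\D\psi'$ case, and for part~(2) the construction of $\cL(x,y)$ as the set of formulas of $\CL(\varphi)$ true at $[x,y]$ in a satisfying model. You merely spell out in more detail the verification that this labeling yields a fulfilling compass structure, which the paper compresses into a single appeal to the semantics.
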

\begin{proof} We assume that $\varphi$ is a \hsD-formula (the case where $\varphi$ is a \Dstr-formula is similar). 

We first prove statement~(1) by induction on the structure of the formula $\psi\in\CL(\varphi)$. The base case ($\psi$ is a proposition letter) directly follows from the definition of $\bM(\cG)$. The cases of the Boolean connectives follow from the induction hypothesis and the definition of $\varphi$-atoms. It remains to consider the case where $\psi$ is of the form $\D \psi'$. If $\psi\in \cL(x,y)$, then, being $\cG$ fulfilling, there exists a point $(x',y')$ such that $(x',y')\subint (x,y)$ and
$\psi' \in \cL(x',y')$. By the induction hypothesis, it follows that $\bM(\cG),[x',y']\models \psi'$, hence, $\bM(\cG),[x,y]\models \psi$. As for the opposite direction, assume that $\bM(\cG),[x,y]\models \psi$. Hence, $\bM(\cG),[x',y']\models \psi'$ for some interval $[x',y']$ such that
$[x',y']\subint [x,y]$. By the induction hypothesis, $\psi'\in \cL(x',y')$;  by Lemma~\ref{lem:transitive_req}, we obtain that $\psi\in \cL(x,y)$.

Let us prove now statement~(2). 
First, assume that $\varphi$ is satisfiable. Hence, there exists
an interval model   $\bM$ over $\mathbb{I(S)}$ and an interval $[x,y]\in \mathbb{I(S)}$  such that $\bM,[x,y]\models \varphi$. Let
$\cG=(\bbP_\bbD,\cL)$ be the weak compass $\varphi$-structure where for all points $(x,y)$, $\cL(x,y)$ is the set of formulas $\psi\in\CL(\varphi)$
such that $\bM,[x,y]\models \psi$. Since $\bM,[x,y]\models \varphi$, by the semantics of $\hsD$, it follows that $\cG$ is a fulfilling compass $\varphi$-structure that features $\varphi$. The opposite direction directly follows from statement~(1).
\end{proof}


The notion of homogeneous models directly transfers to compass structures.
\begin{defi}\label{def:hom_compass}
A compass $\varphi$-structure $\cG=(\bbP_\bbD,\cL)$ is \emph{homogeneous}
if for every point $(x,y)\in \bbP_\bbD$ and any $p\in \AP$, $p\in \cL(x,y)$ if and only if $p\in \cL(x',x')$ for all $x'\in [x,y]$.
\end{defi}

Proposition \ref{prop:compassstructure} (item 2) can be tailored to homogeneous compass structures as follows.
\begin{prop}\label{prop:satiffcompass}
A \hsDhom-formula (resp., \hsDhomStrict-formula) $\varphi$ is satisfiable if and only if there is a fulfilling homogeneous compass $\varphi$-structure that
features it.
\end{prop}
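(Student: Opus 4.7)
The plan is to derive Proposition~\ref{prop:satiffcompass} from Proposition~\ref{prop:compassstructure} by showing that the correspondence between satisfiable formulas and fulfilling compass structures restricts correctly to the homogeneous sub-class on both sides. Concretely, I would prove the two directions separately, using that the homogeneity condition on compass structures in Definition~\ref{def:hom_compass} is the literal grid-translation of the homogeneity condition on interval models in Definition~\ref{def:homogeneous_models}, via the bijection between intervals $[x,y]$ and points $(x,y)$.

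For the left-to-right direction, suppose $\varphi$ is \hsDhom-satisfiable, so there exist a homogeneous interval model $\bM=\langle\bbI(\bbS),\cV\rangle$ and an interval $[x_0,y_0]$ with $\bM,[x_0,y_0]\models\varphi$. Following the construction in the proof of Proposition~\ref{prop:compassstructure}(2), define the weak compass $\varphi$-structure $\cG=(\bbP_\bbD,\cL)$ by setting $\cL(x,y)=\{\psi\in\CL(\varphi) : \bM,[x,y]\models\psi\}$. The standard verification shows each $\cL(x,y)$ is a $\varphi$-atom, temporal consistency holds (by the semantics of $\BD$), and $\cG$ is fulfilling and features $\varphi$. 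To conclude homogeneity of $\cG$, observe that for $p\in\AP$ we have $p\in\cL(x,y)$ iff $[x,y]\in\cV(p)$, and by homogeneity of $\bM$ this is equivalent to $[x',x']\in\cV(p)$ for every $x'\in[x,y]$, that is, $p\in\cL(x',x')$ for every $x'\in[x,y]$. Hence $\cG$ satisfies Definition~\ref{def:hom_compass}.

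For the right-to-left direction, assume $\cG=(\bbP_\bbD,\cL)$ is a fulfilling homogeneous compass $\varphi$-structure featuring $\varphi$, so $\varphi\in\cL(x_0,y_0)$ for some $(x_0,y_0)\in\bbP_\bbD$. By Proposition~\ref{prop:compassstructure}(1), $\bM(\cG),[x_0,y_0]\models\varphi$, so it only remains to verify that the induced model $\bM(\cG)$ is homogeneous. By the definition of $\bM(\cG)$, $[x,y]\in\cV(p)$ iff $p\in\cL(x,y)\cap\AP$, and likewise $[x',x']\in\cV(p)$ iff $p\in\cL(x',x')$. Hence the equivalence $p\in\cL(x,y)\Leftrightarrow\bigl(\forall x'\in[x,y],\ p\in\cL(x',x')\bigr)$ guaranteed by Definition~\ref{def:hom_compass} translates directly into $[x,y]\in\cV(p)\Leftrightarrow\bigl(\forall x'\in[x,y],\ [x',x']\in\cV(p)\bigr)$, which is exactly the homogeneity condition of Definition~\ref{def:homogeneous_models}. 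Thus $\bM(\cG)$ is a homogeneous model satisfying $\varphi$, establishing \hsDhom-satisfiability.

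There is essentially no hard step: the entire content of the proposition is that the homogeneity requirements on models and on compass structures match up under the interval-to-point bijection, so the previous Proposition~\ref{prop:compassstructure} can be applied off-the-shelf. The \Dstr\ case is verbatim the same, with $\subint$ replaced by $\ssubint$ everywhere (the homogeneity condition does not involve the sub-interval relation at all, so its verification is identical).
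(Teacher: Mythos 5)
Your proof is correct and follows exactly the route the paper intends: the paper gives no explicit proof of this proposition, merely noting that Proposition~\ref{prop:compassstructure}(2) ``can be tailored'' to the homogeneous setting, and your argument supplies precisely that tailoring by checking that the homogeneity conditions of Definitions~\ref{def:homogeneous_models} and~\ref{def:hom_compass} correspond under the interval-to-point bijection in both directions. Nothing is missing.
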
 

\section{Satisfiability of \hsDhom\ and \hsDhomStrict\ over finite linear orders}\label{sec:decidability}

\newcommand{\genDphi}{\begin{tikzpicture}
\node(A)[inner sep=0pt]{$\scriptstyle\Dphi$};
\node[inner sep=0pt](B) at (0.45,0){};
\node[inner sep=0pt](C) at (-0.4,0){};
\draw[double](C) -- (A);
 \draw[double,->](A) --(B);
\end{tikzpicture}}

In this section, we devise a satisfiability checking procedure for \hsDhom-formulas over finite linear orders, which will also allow us to easily derive a model checking algorithm for \hsDhom{} over finite Kripke structures (see Section~\ref{sec:mc}).
At the end of this section (see Subsection~\ref{sec:SATStrict}), we show how to adapt the proposed approach for \hsDhom\ in order to obtain a decision procedure for satisfiability of \hsDhomStrict\ over finite linear orders.

In the following, we fix a \hsDhom-formula $\varphi$. 
We  first introduce a ternary relation among $\varphi$-atoms, that we denote by $\genDphi$, such that if it holds among all atoms in consecutive positions of a weak compass $\varphi$-structure, then the structure is homogeneous and satisfies both the temporal consistency requirement and the fulfilling one. Hence, we may say that $\genDphi$ is the rule for labeling fulfilling compasses.

Next, we introduce an equivalence relation $\sim$ between \emph{rows} of a compass $\varphi$-structure. Since it has finite index---exponentially bounded by $|\varphi|$---and it preserves fulfillment of compasses, it
makes it possible to ``contract'' the structures when we
identify two related rows.
Moreover, any contraction done according to $\sim$ keeps the same atoms (only the number of their occurrences may vary), and thus if a compass features $\varphi$ before the contraction, then $\varphi$ is still featured after it. This fact is exploited to build a satisfiability checking algorithm for \hsDhom-formulas which makes use of \emph{polynomial working space} only, because $(i)$~it only needs to keep track of two rows of a compass at a time, $(ii)$~all rows satisfy some nice properties that allow one to succinctly encode them, and $(iii)$~compass contractions are implicitly done by means of a reachability check in a suitable graph, whose nodes are the minimal representatives of the  equivalence classes of $\sim$.

\subsection{Labeling of homogeneous fulfilling compasses}
We first show how to label homogeneous fulfilling compass $\varphi$-structures. Such a labeling is based on the aforementioned ternary relation $\genDphi$ among atoms, which is defined as follows.
\begin{defi}\label{def:d_generator}
Given three $\varphi$-atoms $A_1, A_2$ and $A_3$, we say that
$A_3$ is $\Dphi$-generated by $A_1, A_2$
(written $A_1A_2\genDphi A_3$) if:
\begin{itemize}
    \item $A_3\cap\AP = A_1 \cap A_2 \cap \AP$ and
    \item $\reqD(A_3)=\reqD(A_1)\cup \reqD(A_2) \cup \obsD(A_1 )\cup \obsD (A_2)$.
\end{itemize}
\end{defi}

%

It is immediate to show that $A_1A_2\genDphi A_3$ iff $A_2A_1\genDphi A_3$ (i.e., the order of the first two components in the ternary relation is irrelevant).
Intuitively, the first item of the definition enforces the homogeneity assumption.

The next result, which immediately follows from Proposition~\ref{prop:unique}, proves that $\genDphi$ expresses a \emph{functional dependency} on $\varphi$-atoms.
\begin{lem}\label{lem:functional}
Given two $\varphi$-atoms $A_1, A_2\in \Atoms$, there exists \emph{exactly one} $\varphi$-atom  $A_3\in \Atoms$  such that $A_1A_2\genDphi A_3$.
\end{lem}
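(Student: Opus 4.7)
The plan is to derive the lemma as a direct consequence of Proposition~\ref{prop:unique}, since that proposition already asserts that a $\varphi$-atom is uniquely determined by its proposition letters and its set of temporal requests. All that $\genDphi$ does is prescribe exactly these two pieces of data from $A_1$ and $A_2$, so the existence-and-uniqueness statement is essentially a rewriting of Proposition~\ref{prop:unique}.

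More concretely, given $A_1, A_2 \in \Atoms$, I would define
\[
P \;=\; A_1 \cap A_2 \cap \AP, \qquad
R \;=\; \reqD(A_1) \cup \reqD(A_2) \cup \obsD(A_1) \cup \obsD(A_2),
\]
and then verify the hypotheses of Proposition~\ref{prop:unique}: clearly $P \subseteq \CL(\varphi) \cap \AP$ because $A_1 \subseteq \CL(\varphi)$; and $R \subseteq \REQ_{\varphi}$ because both $\reqD(A_i) \subseteq \REQ_{\varphi}$ (by definition of $\reqD$) and $\obsD(A_i) \subseteq \REQ_{\varphi}$ (by definition of $\obsD$). Proposition~\ref{prop:unique} then yields a unique $\varphi$-atom $A_3$ with $A_3 \cap \AP = P$ and $\reqD(A_3) = R$; by Definition~\ref{def:d_generator} this is precisely the statement $A_1 A_2 \genDphi A_3$, giving existence.

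For uniqueness, I would note that if $A_3' \in \Atoms$ also satisfies $A_1 A_2 \genDphi A_3'$, then by Definition~\ref{def:d_generator} we have $A_3' \cap \AP = P$ and $\reqD(A_3') = R$ as well, so the uniqueness clause of Proposition~\ref{prop:unique} forces $A_3' = A_3$.

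There is no real obstacle here: the only thing one has to be slightly careful about is checking that the sets $P$ and $R$ produced from $A_1,A_2$ lie in the domains required by Proposition~\ref{prop:unique} (i.e.\ $P \subseteq \CL(\varphi)\cap \AP$ and $R \subseteq \REQ_{\varphi}$), but this is immediate from the definitions of $\reqD$ and $\obsD$. Hence the whole argument is a short bookkeeping application of the previously established uniqueness result.
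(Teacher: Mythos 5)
Your proof is correct and follows exactly the route the paper intends: the paper states that Lemma~\ref{lem:functional} immediately follows from Proposition~\ref{prop:unique}, and your argument simply spells out that derivation (defining $P$ and $R$, checking $P\subseteq \CL(\varphi)\cap\AP$ and $R\subseteq\REQ_\varphi$, and invoking the existence-and-uniqueness clause). Nothing is missing.
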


Definition~\ref{def:d_generator} and Lemma~\ref{lem:functional} can be exploited to label a homogeneous compass $\varphi$-structure $\cG$, namely, to determine the $\varphi$-atoms labeling all the points $(x,y)$ of $\cG$, starting from the ones on the diagonal.
The idea is the following: if two $\varphi$-atoms $A_1$ and $A_2$ label respectively the greatest proper prefix $[x,y-1]$ 
and the greatest proper suffix $[x+1,y]$ 
of the same non-singleton interval $[x,y]$, then the atom $A_3$ labeling point $(x,y)$ is unique, and it is precisely the one satisfying $A_1A_2\genDphi A_3$ (see Figure~\ref{fig:labelling}). The next lemma proves that this is the general rule for labeling homogeneous fulfilling  compasses.

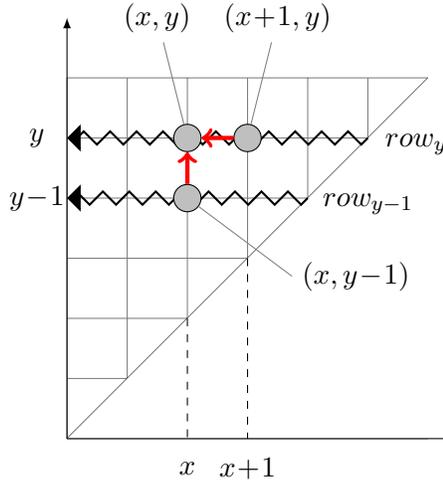
\begin{figure}[tb]
    \centering
    \begin{tikzpicture}[scale=0.8]

\draw[step=1cm,gray, thin] (-3,-3) grid (3,3);

\draw (-3,-3) -- (3,3);
\fill[color=white] (-3,-3) node (v1) {} -- (3.1,3.1) -- (3.1,-3);

\draw [-latex](-3,-3) -- (4,-3);
\draw [-latex](-3,-3) -- (-3,4);

\draw[thick,- triangle 90, decorate, decoration=zigzag] (2,2) -- (-3,2);
\draw[thick,- triangle 90, decorate, decoration=zigzag] (1,1) -- (-3,1);

\node[draw,circle,fill=lightgray] (v3) at (-1,2) {};
\node[draw,circle,fill=lightgray] (v2) at (-1,1) {};
\node[draw,circle,fill=lightgray] (v4) at (0,2) {};


\node (v6) at (-1.5,4) {$(x,y)$};
\node (v5) at (0.5,4) {$(x\!+\!1,y)$};
\node (v7) at (1.8,-0.3) {$(x,y\!-\! 1)$};

\draw [dashed](-1,-1) -- (-1,-3);
\draw [dashed](0,0) -- (0,-3);

\node at (-1,-3.5) {$x$};
\node at (0,-3.5) {$x\!+\!1$};
\node at (-3.5,2) {$y$};
\node at (-3.5,1) {$y\!-\! 1$};

\node at (2,0.9) {$row_{y - 1}$};
\node at (2.8,1.9) {$row_y$};

\draw [gray] (v5) edge (v4);
\draw [gray] (v6) edge (v3);
\draw [gray] (v7.west) edge (v2);

\draw [red,ultra thick,<-] (v3) edge (v2);
\draw [red,ultra thick,<-] (v3) edge (v4);
\end{tikzpicture}
    \caption{Rule for labeling homogeneous fulfilling compass $\varphi$-structures.}
    \label{fig:labelling}
\end{figure}

\begin{lem}\label{lem:compass_hom_gen}
Let $\cG=(\bbP_\bbD,\cL)$ be a weak compass $\varphi$-structure. Then, $\cG$ is a homogeneous fulfilling compass $\varphi$-structure \emph{if and only if}  for each point $(x,y) \in \bbP_\bbD$,  it holds that $(i)$~$\cL(x,y-1)\cL(x+1, y)\genDphi \cL(x, y)$ if $x<y$, and $(ii)$~$\reqD(\cL(x,y))=\emptyset$ if $x=y$.
\end{lem}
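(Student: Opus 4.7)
The plan is to prove the two implications separately, in each case by induction on the diagonal distance $y-x$. For the ($\Rightarrow$) direction, assume $\cG$ is a homogeneous fulfilling compass $\varphi$-structure. Condition~(ii) is immediate from fulfillment: any $\D\psi\in\cL(x,x)$ would demand a point strictly $\subint$-below $(x,x)$, of which there is none. For condition~(i), I fix $(x,y)$ with $x<y$ and verify both clauses of $\cL(x,y-1)\cL(x+1,y)\genDphi\cL(x,y)$. The propositional clause is immediate from homogeneity together with $[x,y]=[x,y-1]\cup[x+1,y]$. For the request clause, the $\supseteq$ inclusion follows from Lemma~\ref{lem:transitive_req} applied to the sub-intervals $(x,y-1),(x+1,y)\subint(x,y)$; for $\subseteq$, take $\psi\in\reqD(\cL(x,y))$, use fulfillment to locate $(x',y')\subint(x,y)$ with $\psi\in\cL(x',y')$ (so $\psi\in\REQ_\varphi$), and split on $y'<y$ versus $y'=y$ with $x'>x$, which places $(x',y')\subinteq(x,y-1)$ or $(x',y')\subinteq(x+1,y)$ respectively. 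When $(x',y')$ coincides with $(x,y-1)$ or $(x+1,y)$, $\psi$ lies in the corresponding $\obsD$; when $(x',y')$ is strictly inside, Lemma~\ref{lem:transitive_req} puts $\psi$ into the corresponding $\reqD$.

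For the ($\Leftarrow$) direction, assume conditions~(i) and~(ii) and derive homogeneity, fulfillment, and temporal consistency separately. \emph{Homogeneity} is a short induction on $y-x$: the base is trivial; in the step, the propositional clause of $\genDphi$ yields $\cL(x,y)\cap\AP=\cL(x,y-1)\cap\cL(x+1,y)\cap\AP$, and the induction hypothesis rewrites the right-hand side as the conjunction of singleton labels over $[x,y-1]\cup[x+1,y]=[x,y]$. \emph{Fulfillment} is another induction on $y-x$, with (ii) as base; in the step, the request clause of $\genDphi$ partitions any $\psi\in\reqD(\cL(x,y))$ among the four sets $\obsD(\cL(x,y-1))$, $\obsD(\cL(x+1,y))$, $\reqD(\cL(x,y-1))$, $\reqD(\cL(x+1,y))$, and either an $\obsD$ component provides a witness directly at $(x,y-1)$ or $(x+1,y)$, or a $\reqD$ component invokes the induction hypothesis on the smaller point.

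For \emph{temporal consistency} I would again induct on $y-x$, relying on Proposition~\ref{prop:characterizeTemporalConsistency} and on the transitivity of $\Dphi$ that this characterization makes obvious. The request clause of $\genDphi$ delivers $\cL(x,y)\Dphi\cL(x,y-1)$ and $\cL(x,y)\Dphi\cL(x+1,y)$ at once. For a generic $(x',y')\subint(x,y)$ distinct from these, the same split $y'<y$ versus $y'=y$ places $(x',y')\subinteq(x,y-1)$ or $(x',y')\subinteq(x+1,y)$; when strict, the induction hypothesis combined with transitivity yields $\cL(x,y)\Dphi\cL(x',y')$. The main obstacle I expect is the careful case analysis in the request clause of the ($\Rightarrow$) direction, which must show that every fulfilling witness feeds into exactly one of the four generated components; this relies on discreteness of the finite linear order to pass from $y'<y$ to $y'\leq y-1$, and symmetrically on the $x$-side. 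Beyond this, the remaining arguments are routine inductions.
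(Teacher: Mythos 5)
Your proof is correct and follows essentially the same route as the paper's: both directions use Lemma~\ref{lem:transitive_req} and the characterization of $\Dphi$ from Proposition~\ref{prop:characterizeTemporalConsistency}, with homogeneity, fulfillment, and temporal consistency each handled by induction and transitivity of $\Dphi$. The only (immaterial) difference is that for temporal consistency the paper inducts on the distance $(x'-x)+(y-y')$ between the two points rather than on $y-x$ of the outer point, and your case split in the ($\Rightarrow$) request clause is in fact slightly more careful than the paper's, which as written omits the witnesses with $x'=x$, $y'<y-1$.
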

\begin{proof}
$(\Rightarrow)$
Let us consider a point $(x,y) \in \bbP_\bbD$.
First, we note that, since $\cG$ is fulfilling, it must be $\reqD(\cL(x,y))=\emptyset$ whenever $x=y$.
Otherwise, if $x<y$, we consider the labelings $\cL(x,y-1)$ and $\cL(x+1, y)$. By the homogeneity property
of Definition~\ref{def:hom_compass}, $\cL(x,y)\cap\AP = \cL(x,y-1) \cap \cL(x+1,y)\cap \AP$, and thus condition~$(i)$ of Definition~\ref{def:d_generator} holds.
Moreover,
since $\cG$ is fulfilling, for every $\psi \in \reqD(\cL(x,y))$ we have that either $\psi\in \cL(x,y-1)$, or   $\psi\in \cL(x+1,y)$, or $\psi \in \cL(x',y')$
for some $x<x'\leq y'<y$. In the first two cases, $\psi \in \obsD(\cL(x,y-1)) \cup \obsD(\cL(x+1,y))$. As for the last case,  by Lemma~\ref{lem:transitive_req},  $\obsD(\cL(x',y'))\subseteq \reqD(\cL(x,y-1))$ and $\obsD(\cL(x',y'))\subseteq \reqD(\cL(x+1,y))$,
hence
$\psi \in \reqD(\cL(x,y-1))$ and  $\psi \in \reqD(\cL(x+1,y))$.
We can conclude that $\reqD(\cL(x,y))\subseteq\obsD(\cL(x,y-1)) \cup \obsD(\cL(x+1,y)) \cup  \reqD(\cL(x,y-1)) \cup \reqD(\cL(x+1,y))$. The converse inclusion ($\supseteq$) follows by Lemma~\ref{lem:transitive_req}, hence condition $(ii)$ of Definition~\ref{def:d_generator} holds. This allows us to conclude that $\cL(x,y-1)\cL(x+1, y)\genDphi \cL(x, y)$.

$(\Leftarrow)$ Let us consider a weak compass $\varphi$-structure $\cG=(\bbP_\bbD,\cL)$ such that  for every point $(x,y) \in \bbP_\bbD$, we have $\cL(x,y-1)\cL(x+1, y)\genDphi \cL(x, y)$ if $x<y$, and $\reqD(\cL(x,y))=\emptyset$ if $x=y$. We have to prove that $\cG$ is a homogeneous fulfilling compass $\varphi$-structure.

First, we prove consistency with respect to the relation $\Dphi$.
Let us show that  for all pairs of
points $(x,y)$ and $(x',y')$ with $(x',y')\subint(x,y)$,
we have $\cL(x,y)\Dphi\cL(x',y')$.
The proof is by induction on $\Delta =(x'-x) + (y - y')\geq 1$.
If $\Delta=1$, either $(x',y')=(x+1,y)$
or $(x',y')=(x,y-1)$. Let us consider $(x',y')=(x+1,y)$
(the other case is symmetric). Since
$\cL(x,y-1)\cL(x+1, y)\genDphi \cL(x, y)$, by Proposition~\ref{prop:characterizeTemporalConsistency}
and condition $(ii)$ of Definition~\ref{def:d_generator},
we  get that $\cL(x,y)\Dphi \cL(x+1, y)$.
If $\Delta \geq 2$,
since $(x',y')\subint(x,y)$, then
$(x',y'+1)\subint(x,y)$
or $(x'-1,y')\subint(x,y)$. We only consider
$(x'-1,y')\subint(x,y)$, being the other case symmetric. By the inductive hypothesis,
$\cL(x,y)\Dphi \cL(x'-1,y')$.
Since $\cL(x'-1,y'-1)\cL(x', y')\genDphi \cL(x'-1, y')$,
it holds that $\cL(x'-1, y')\Dphi \cL(x',y')$.
Being $\Dphi$ a transitive relation,
we can conclude that $\cL(x, y)\Dphi \cL(x',y')$.

Let us now show that $\cG$ is fulfilling.
We need to prove that for every point
$(x,y)\in\bbP_\bbD$ and for every
$\psi\in \reqD(\cL(x,y))$, there exists $(x',y')\in\bbP_\bbD$  such that $(x',y')\subint(x,y)$ and
$\psi\in\cL(x',y')$. The proof is by induction on
$y-x\geq 0$. If $x=y$, we have $\reqD(\cL(x,y))=\emptyset$, hence the thesis holds vacuously.
If $y-x\geq 1$, since
$\cL(x,y-1)\cL(x+1, y)\genDphi \cL(x, y)$,
we have $\reqD(\cL(x, y))= \reqD(\cL(x,y-1))\cup\reqD(\cL(x+1, y))\cup \obsD(\cL(x,y-1))\cup\obsD(\cL(x+1, y))$. If $\psi \in \obsD(\cL(x,y-1))\cup\obsD(\cL(x+1, y))$, the thesis is verified. If $\psi \in \reqD(\cL(x+1, y))$ (the case $\psi \in \reqD(\cL(x, y-1))$ is symmetric and thus omitted), by the inductive hypothesis,
$\psi \in \cL(x'',y'')$ for some
$(x'',y'') \subint (x+1, y) \subint (x,y)$.

It remains to prove that $\cG$
is homogeneous. We have to show that for every $(x,y)\in\bbP_\bbD$, $\cL(x,y)\cap \AP= \bigcap_{x'\in [x,y]}\cL(x',x')\cap \AP$. The proof is by induction
on $y-x\geq 0$. If $x=y$, the property trivially holds. Let us assume now $y-x>0$ (inductive step).
Since $\cL(x+1,y)\cL(x,y-1)\genDphi \cL(x, y)$, by condition $(i)$ of Definition~\ref{def:d_generator} and the induction hypothesis,
we obtain that $\cL(x,y)\cap \AP= \bigcap_{x'\in [x+1,y]}\cL(x',x')\cap \bigcap_{x'\in [x,y-1]}\cL(x',x')\cap \AP$. Hence, the result directly follows.
 \end{proof}


\subsection{The contraction method}
In this section, we describe the proposed contraction method. To begin with,
we introduce the notion of \emph{$\varphi$-row}, which can be viewed as the ordered sequence of (the occurrences of) atoms labelling a row of a compass $\varphi$-structure. 

For a non-empty finite word (or sequence) $w$ over some finite alphabet $\Sigma$, we denote by $|w|$ the length of $w$. Moreover, for all $0\leq i <|w|$, $w[i]$ denotes the $(i+1)^{th}$ letter of $w$.

Given two non-empty finite words $w,w'$ over $\Sigma$, we denote by $w\cdot w'$ the concatenation of   $w$ and $w'$.
If the last letter of $w$ coincides with the first letter of $w'$,  we denote by $w\star w '$ the word  $w \cdot w'[1]\ldots w'[n-1] $, where $n=|w'|$, that is, the word obtained by concatenating $w$ with the word obtained from $w'$ by erasing the first letter. When $|w'|=1$, $w\star w'=w$.

%
\begin{defi}\label{def:row}
A \emph{$\varphi$-row $\row$} is a non-empty finite sequence of $\varphi$-atoms such that for all  $0\leq i<|\row|-1$, $\row[i+1]  \Dphi \row[i]$ and $(\row[i]\cap \AP)\supseteq (\row[i+1]\cap \AP)$. We say that the $\varphi$-row $\row$ is \emph{initialized} if $\reqD(\row[0])=\emptyset$.
We represent a   $\varphi$-row $\row$ in the form  $\row=A_0^{m_0}\cdots A_n^{m_n}$ (\emph{maximal factorization}), where $A^m$ stands for $m$ repetitions of the $\varphi$-atom $A$, $m_i>0$ for all $i\in [0,n]$, and $A_i\neq A_{i+1}$ for all $i\in [0,n-1]$.
\end{defi}

\newcommand{\rownext}{\begin{tikzpicture}
\node(A)[inner sep=0pt]{$\scriptstyle row_{\varphi}$};
\node[inner sep=0pt](B) at (0.6,0){};
\node[inner sep=0pt](C) at (-0.5,0){};
\draw[double](C) -- (A);
 \draw[double,->](A) --(B);
\end{tikzpicture}}

Let $\Rows$ be the set of all possible $\varphi$-rows. This set is infinite.
The next lemma proves that 
the number of distinct atoms in any $\varphi$-row $\row=A_0^{m_0}\cdots A_n^{m_n}$ is linearly bounded in the size of $\varphi$.

\begin{lem}\label{lem:compas_implies_rowA} 
The number of distinct atoms in a $\varphi$-row $\row$ is at most $2|\varphi|$. Moreover, if $A_0^{m_0}\cdots A_n^{m_n}$ is the maximal factorization of $\row$, then $A_0,\ldots,A_n$ are pairwise distinct.
\end{lem}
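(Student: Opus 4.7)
The plan is to combine Proposition~\ref{prop:characterizeTemporalConsistency} with the second clause of Definition~\ref{def:row} to obtain a monotonicity property along any $\varphi$-row, and then to use the uniqueness statement in Proposition~\ref{prop:unique} to convert this monotonicity into both the pairwise distinctness claim and the cardinality bound.

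First I would record the monotonicity. The requirement $\row[i+1]\Dphi\row[i]$, together with Proposition~\ref{prop:characterizeTemporalConsistency}, yields $\reqD(\row[i])\subseteq\reqD(\row[i+1])$; the second clause of Definition~\ref{def:row} gives $\row[i]\cap\AP\supseteq\row[i+1]\cap\AP$. Hence, as $i$ grows, $\reqD(\row[i])$ is non-decreasing while $\row[i]\cap\AP$ is non-increasing.

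For the pairwise distinctness of $A_0,\ldots,A_n$ in the maximal factorization I would argue by contradiction. If $A_i=A_j$ with $i<j$, then by the sandwich principle for monotone chains every atom $A_k$ occurring between $A_i$ and $A_j$ would share both $\reqD(A_k)=\reqD(A_i)$ and $A_k\cap\AP=A_i\cap\AP$. By Proposition~\ref{prop:unique} an atom is uniquely determined by its $\reqD$ and its $\AP$-intersection, so all such $A_k$ would coincide with $A_i$; in particular $A_{i+1}=A_i$, contradicting the defining property of the maximal factorization.

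For the cardinality bound, I would note that each transition $A_i\to A_{i+1}$ in the maximal factorization must strictly change at least one of the two monotone summaries, again by Proposition~\ref{prop:unique}, and by monotonicity each such change consumes one fresh element of either $\REQ_{\varphi}$ or $\CL(\varphi)\cap\AP$. Hence the number $n$ of transitions satisfies $n\leq |\REQ_{\varphi}|+|\CL(\varphi)\cap\AP|$. Since $|\REQ_{\varphi}|$ equals the number of distinct $\D$-subformulas of $\varphi$ and $|\CL(\varphi)\cap\AP|$ equals the number of propositional letters of $\varphi$, and these are two disjoint families of subformulas of $\varphi$, their sum is at most $|\varphi|$, giving $n+1\leq|\varphi|+1\leq 2|\varphi|$. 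The delicate point is the distinctness step: a priori a $\varphi$-atom carries much more information than just its $\reqD$ and its $\AP$-intersection, so the passage from constancy of these two summaries to constancy of the atom itself rests crucially on Proposition~\ref{prop:unique}, which is the key ingredient that makes the whole lemma work.
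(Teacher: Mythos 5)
Your argument follows essentially the same route as the paper's: extract the two monotone chains ($\reqD$ non-decreasing, $\AP$-part non-increasing) from Definition~\ref{def:row} and Proposition~\ref{prop:characterizeTemporalConsistency}, and then invoke Proposition~\ref{prop:unique} to turn the bounded number of (request set, letter set) pairs into a bound on the number of distinct atoms; your contradiction argument for pairwise distinctness is a correct expansion of what the paper dismisses as ``immediate''. One intermediate claim is wrong, however: $\REQ_{\varphi}$ and $\CL(\varphi)\cap\AP$ are \emph{not} disjoint in general. For instance, if $\varphi=\D p$ then $p$ belongs to both sets, since $p$ is simultaneously a proposition letter of $\varphi$ and the argument of a $\D$-subformula. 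So your strengthened bound $n\leq|\varphi|$ is not justified. Fortunately the lemma does not need it: the correct count is $n\leq|\REQ_{\varphi}|+|\CL(\varphi)\cap\AP|\leq(|\varphi|-1)+|\varphi|=2|\varphi|-1$, using that each strict increase of the request chain consumes an element of $\REQ_{\varphi}$ and each strict decrease of the letter chain consumes an element of $\CL(\varphi)\cap\AP$, which still yields $n+1\leq 2|\varphi|$ distinct atoms. This is exactly the accounting the paper performs (it bounds the two families separately rather than their union), so you should simply drop the disjointness claim and sum the two cardinalities.
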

\begin{proof}
First of all, since for each $0\leq i < n$, $A_{i+1} \Dphi A_i$,
it holds that $\reqD(A_i) \subseteq \reqD(A_{i+1})$.
Therefore, two monotonic sequences can be associated with every $\varphi$-row, one increasing ($\reqD(A_0) \subseteq \reqD(A_1)\subseteq \ldots \subseteq \reqD(A_n)$),
and one decreasing ($(A_0 \cap \AP)\supseteq (A_1\cap \AP)\supseteq \ldots \supseteq (A_n\cap \AP)$).
The number of distinct elements is bounded by $|\varphi|$ in the former sequence and by $|\varphi|+1$ in the latter, as $|\REQ_{\varphi}|\leq |\varphi|-1$ and $|\AP|\leq |\varphi|$, since, w.l.o.g., we can restrict ourselves to the proposition letters actually occurring in $\varphi$. Given that, as already shown (Proposition~\ref{prop:unique}), a set of requests and a set of proposition letters uniquely determine a $\varphi$-atom, any $\varphi$-row may feature at most $2|\varphi|$ distinct atoms, that is, $n\leq 2|\varphi|$. The proof of the second statement is immediate.
\end{proof}

Given a homogeneous compass $\varphi$-structure $\cG= (\bbP_\bbD,\cL)$ (for $\bbD =(S,<)$), for every  $y\in S$, we define $\row_y$ as the word of $\varphi$-atoms $\row_y=\cL(y,y)\cdots \cL(0, y)$, that is, the sequence of atoms labeling points of $\cG$ with the same $y$-coordinate, starting from the one on the diagonal inwards (see Figure~\ref{fig:labelling}). Since in a fulfilling compass $\varphi$-structure there are no temporal requests in the atoms labeling the diagonal points, we obtain the following result.

\begin{lem}\label{lem:compas_implies_rowB} 
Let $\cG= (\bbP_\bbD,\cL)$  (for $\bbD =(S,<)$) be a  homogeneous fulfilling compass $\varphi$-structure. For every $y\in S$, $\row_y$ is an initialized $\varphi$-row.
\end{lem}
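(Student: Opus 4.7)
The plan is to unpack the definition of $\row_y$ and verify the three conditions that make it an initialized $\varphi$-row: consecutive atoms must be related by $\Dphi$ (in the correct direction), proposition letters must be decreasing along the sequence, and the first atom must have no temporal requests. All three facts will follow directly from properties of homogeneous fulfilling compass structures already established in the section, so the argument should be short.

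For the initialization condition, I would first observe that $\row_y[0] = \cL(y,y)$, i.e.\ the atom labelling a point on the diagonal. The remark following Definition~\ref{def:fulfillingcompass} records that in any fulfilling compass $\varphi$-structure $\reqD(\cL(x,x)) = \emptyset$ for every diagonal point $(x,x)$, so $\reqD(\row_y[0]) = \emptyset$ immediately.

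For the $\Dphi$-chain condition, I would fix any index $0 \le i < |\row_y| - 1$. By the definition of $\row_y$, $\row_y[i] = \cL(y-i, y)$ and $\row_y[i+1] = \cL(y-i-1, y)$. Since $y-i-1 < y-i \le y$, the two points share the same $y$-coordinate and the second strictly extends the first on the left, so $(y-i, y) \subint (y-i-1, y)$. Temporal consistency of the compass structure (Definition~\ref{def:compassstructure}) then gives $\cL(y-i-1, y) \Dphi \cL(y-i, y)$, which is precisely $\row_y[i+1] \Dphi \row_y[i]$.

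For the monotonicity of the propositional part, I would invoke the homogeneity property (Definition~\ref{def:hom_compass}): for any point $(x,y) \in \bbP_\bbD$, $\cL(x,y) \cap \AP = \bigcap_{x' \in [x,y]} \cL(x', x') \cap \AP$. Applying this to the two points $(y-i,y)$ and $(y-i-1,y)$, and noting that $[y-i,y] \subseteq [y-i-1,y]$, the intersection over the larger interval is contained in the one over the smaller interval, so $\row_y[i] \cap \AP = \cL(y-i,y) \cap \AP \supseteq \cL(y-i-1, y) \cap \AP = \row_y[i+1] \cap \AP$. There is no hard step here; the only mild concern is to make sure the direction of the $\Dphi$-relation and the direction of the inclusion of atoms with $\AP$ match the convention chosen in Definition~\ref{def:row}, which they do because $\row_y$ lists atoms from the diagonal inward (i.e.\ from the ``smallest'' interval to a ``larger'' one sharing the same right endpoint).
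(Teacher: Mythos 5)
Your proof is correct and matches the paper's reasoning: the paper actually states this lemma without an explicit proof, treating it as an immediate consequence of temporal consistency (for the $\Dphi$-chain), homogeneity (for the monotonicity of the propositional parts), and the remark after Definition~\ref{def:fulfillingcompass} that diagonal atoms have empty request sets (for initialization) --- exactly the three facts you invoke, with the directions of $\subint$, $\Dphi$, and the $\AP$-inclusions all checked correctly.
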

%

We now define the \emph{successor function} over $\varphi$-rows,  which, given a $\varphi$-row $\row$ and a \mbox{$\varphi$-atom} $A$, returns the $\varphi$-row of length $|\row|+1$ and first atom $A$ obtained by a component-wise application of $\genDphi$ starting from $A$ and the first atom of $\row$. 
\begin{defi}\label{def:rownext}
%
 Given a $\varphi$-atom $A$ and a  $\varphi$-row
$\row$ with $|\row| =n$, the \emph{$A$-successor of $\row$}, denoted by $\SUCC(\row,A)$, is the sequence $B_0\ldots B_{n}$ of $\varphi$-atoms defined as follows:
$B_0= A$ and $\row[i] B_i \genDphi B_{i+1}$ for all $i\in [0,n-1]$.
\end{defi}

By Proposition~\ref{prop:characterizeTemporalConsistency} and Definition~\ref{def:d_generator}, we deduce the following lemma.
\begin{lem}\label{lem:propert_successor} The following properties hold:
\begin{enumerate}
  \item Let $\row$ be a   $\varphi$-row and $A$ be a $\varphi$-atom. Then, $\SUCC(\row,A)$ is a  $\varphi$-row.
  \item Let $\row$ be a $\varphi$-row of the form $\row = \row_1\cdot \row_2$ and  $A$ be a $\varphi$-atom.
  Then, $\SUCC(\row,A)=\SUCC(\row_1,A)\star \SUCC(\row_2,A_1)$, where $A_1$ is the last $\varphi$-atom of $\SUCC(\row_1,A)$.
\end{enumerate}
\end{lem}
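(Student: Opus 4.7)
My plan is to handle the two statements separately, reducing each to the machinery already built up in the paper.

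For statement~(1), write $\SUCC(\row,A) = B_0 \ldots B_n$ with $B_0 = A$ and $\row[i]B_i \genDphi B_{i+1}$ for every $i \in [0,n-1]$. I need to verify the two clauses of Definition~\ref{def:row}, namely $B_{i+1}\Dphi B_i$ and $(B_i\cap \AP)\supseteq (B_{i+1}\cap\AP)$ for each $i$. The proposition-letter inclusion is immediate from the first bullet of Definition~\ref{def:d_generator}, since $B_{i+1}\cap\AP = \row[i]\cap B_i\cap\AP\subseteq B_i\cap\AP$. For the $\Dphi$ condition, I invoke Proposition~\ref{prop:characterizeTemporalConsistency}: it suffices to show $\reqD(B_i)\subseteq \reqD(B_{i+1})$ and $\obsD(B_i)\subseteq \reqD(B_{i+1})$. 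Both inclusions follow from the second bullet of Definition~\ref{def:d_generator}, which gives $\reqD(B_{i+1}) = \reqD(\row[i])\cup \reqD(B_i)\cup \obsD(\row[i])\cup \obsD(B_i)$, so $\reqD(B_i)$ and $\obsD(B_i)$ are both syntactically contained in $\reqD(B_{i+1})$.

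For statement~(2), let $k = |\row_1|$ and $n = |\row|$, so $\row[i]=\row_1[i]$ for $i< k$ and $\row[k+j]=\row_2[j]$ for $0\leq j< n-k$. Write $\SUCC(\row,A) = B_0\cdots B_n$, $\SUCC(\row_1,A) = C_0\cdots C_k$, and $\SUCC(\row_2,A_1) = D_0\cdots D_{n-k}$ with $A_1 = C_k$ and $D_0 = A_1$. I will argue $B_i = C_i$ for $0\leq i\leq k$ and $B_{k+j}=D_j$ for $0\leq j\leq n-k$, which yields the claim because the $\star$-concatenation drops exactly the duplicated central atom $B_k = C_k = D_0$. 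The first equality is a straightforward induction on $i$: $B_0 = A = C_0$, and if $B_i = C_i$, then both $B_{i+1}$ and $C_{i+1}$ are $\genDphi$-generated by the same pair $\row_1[i], B_i$, so they coincide by the functional-dependency Lemma~\ref{lem:functional}. The second equality proceeds by induction on $j$: the base $j=0$ is $B_k = C_k = A_1 = D_0$ from the first part; for the inductive step, both $B_{k+j+1}$ and $D_{j+1}$ are $\genDphi$-generated from $\row_2[j]$ and $B_{k+j}=D_j$, and Lemma~\ref{lem:functional} again forces equality.

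I do not foresee a genuine obstacle here: once Proposition~\ref{prop:characterizeTemporalConsistency} and the functional-dependency Lemma~\ref{lem:functional} are in hand, both statements are essentially bookkeeping. The only subtle point is making sure the $\star$ operator, rather than plain concatenation, is used in statement~(2), which is why identifying the shared middle atom $A_1 = C_k = D_0$ is important.
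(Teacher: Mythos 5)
Your proof is correct and takes essentially the same route as the paper: part~(1) is established exactly as in the paper's proof, by extracting the inclusions $\reqD(B_i)\subseteq\reqD(B_{i+1})$, $\obsD(B_i)\subseteq\reqD(B_{i+1})$, and $(B_i\cap\AP)\supseteq(B_{i+1}\cap\AP)$ from Definition~\ref{def:d_generator} and then applying Proposition~\ref{prop:characterizeTemporalConsistency}. For part~(2) the paper simply states that it follows directly from Definition~\ref{def:rownext}; your induction via Lemma~\ref{lem:functional} merely makes that one-line claim explicit and is a valid elaboration of the same idea.
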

\begin{proof} Property~(2) directly follows  from Definition~\ref{def:rownext}. As for Property~(1), let $\SUCC(\row,A)$ $= B_0\ldots B_{n}$, where $n= |\row|$.
By Definitions~\ref{def:d_generator} and~\ref{def:rownext}, for all $i\in [0,n-1]$, $\reqD(B_i)\subseteq \reqD(B_{i+1})$, $\obsD(B_i)\subseteq \reqD(B_{i+1})$, and $(B_i\cap \AP)\supseteq (B_{i+1}\cap \AP)$. By
   Proposition~\ref{prop:characterizeTemporalConsistency}, it holds that for all $i\in [0,n-1]$, $B_{i+1}  \Dphi B_i$ and $(B_i\cap \AP)\supseteq (B_{i+1}\cap \AP)$.
   Hence, $\SUCC(\row,A)$ is a  $\varphi$-row.
\end{proof}

Moreover, by Lemma~\ref{lem:compass_hom_gen},  consecutive  rows in homogeneous fulfilling compass $\varphi$-structures respect the successor function. In particular, the next result directly follows from Lemmata~\ref{lem:compass_hom_gen} and~\ref{lem:compas_implies_rowB}.
\begin{lem}\label{lem:row_successor}
Let $\cG=(\bbP_\bbD,\cL)$ be a weak compass $\varphi$-structure such that $\reqD(\cL(x,x))=\emptyset$ for all $(x,x)\in\bbP_\bbD$. Then, $\cG$ is a  homogeneous fulfilling compass $\varphi$-structure
\emph{if and only if} for each $0\leq y < |S| -1$, $\row_{y+1}=\SUCC(\row_y,\row_{y+1}[0])$. 
\end{lem}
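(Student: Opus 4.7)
My plan is to reduce the lemma to Lemma~\ref{lem:compass_hom_gen} by translating the pointwise condition $\cL(x,y-1)\cL(x+1,y)\genDphi\cL(x,y)$ into a row-to-row statement involving $\SUCC$. The bookkeeping identity that drives the whole argument is $\row_y[i]=\cL(y-i,y)$ for $0\leq i\leq y$, together with the length match $|\row_{y+1}|=|\row_y|+1$, which is exactly what Definition~\ref{def:rownext} produces. Since the diagonal clause $\reqD(\cL(x,x))=\emptyset$ appears identically in both the hypothesis of the present lemma and in Lemma~\ref{lem:compass_hom_gen}, the whole argument boils down to matching the off-diagonal clauses.

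For the forward direction, I would assume $\cG$ is a homogeneous fulfilling compass $\varphi$-structure and fix $y$ with $0\leq y<|S|-1$. For each $i\in[0,y]$, I instantiate the off-diagonal clause of Lemma~\ref{lem:compass_hom_gen} at the point $(y-i,y+1)$, obtaining $\cL(y-i,y)\,\cL(y+1-i,y+1)\genDphi\cL(y-i,y+1)$, which, after rewriting via the identity above, reads $\row_y[i]\,\row_{y+1}[i]\genDphi\row_{y+1}[i+1]$. Combined with $\row_{y+1}[0]=\cL(y+1,y+1)$, Definition~\ref{def:rownext} yields $\row_{y+1}=\SUCC(\row_y,\row_{y+1}[0])$.

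For the converse, I would assume the SUCC equalities for all admissible $y$ and verify the hypotheses of Lemma~\ref{lem:compass_hom_gen}. The diagonal clause is part of the lemma's hypothesis. For a point $(x,y)$ with $x<y$, I set $i=y-1-x\in[0,y-1]$; then the defining equations of $\SUCC(\row_{y-1},\row_y[0])$ at index $i$ give $\row_{y-1}[i]\,\row_y[i]\genDphi\row_y[i+1]$, which is exactly $\cL(x,y-1)\,\cL(x+1,y)\genDphi\cL(x,y)$. Lemma~\ref{lem:compass_hom_gen} then delivers that $\cG$ is a homogeneous fulfilling compass $\varphi$-structure.

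I do not anticipate a substantive obstacle: the entire argument is an index translation. The only side check is that the objects we call $\row_y$ are genuine $\varphi$-rows in the sense of Definition~\ref{def:row}. In the forward direction this is Lemma~\ref{lem:compas_implies_rowB}. In the converse direction it follows inductively from Lemma~\ref{lem:propert_successor}(1), since $\row_0=\cL(0,0)$ is a one-atom sequence (whose single atom has empty request set by hypothesis, hence trivially a $\varphi$-row) and each $\SUCC$ step preserves the $\varphi$-row property.
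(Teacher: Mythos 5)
Your proposal is correct and follows essentially the same route as the paper, which derives this lemma directly from Lemma~\ref{lem:compass_hom_gen} (together with Lemma~\ref{lem:compas_implies_rowB}); you have merely spelled out the index translation $\row_y[i]=\cL(y-i,y)$ that the paper leaves implicit. The side check that the $\row_y$ are genuine $\varphi$-rows in the converse direction, via Lemma~\ref{lem:propert_successor}(1), is a reasonable addition.
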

%


We now illustrate the kernel of the proposed approach to solve satisfiability for  \hsDhom-formulas.
We introduce an equivalence relation $\sim$ of finite index over $\Rows$
whose number of classes is singly exponential in the size of $\varphi$ and such that each  class has a
representative whose length is polynomial in the size of $\varphi$. As a crucial result, we show that the successor function preserves the
equivalence between $\varphi$-rows.

 The equivalence relation $\sim$ is based on the notion of \emph{rank of $\varphi$-atoms}.
Given an atom $A \in \Atoms$, we define the \emph{rank of $A$}, written $\rank(A)$, as 
$|\REQ_\varphi| - |\reqD(A)|$.  Clearly, $0\leq \rank(A)< |\varphi|$. Whenever $A \Dphi A'$, for some $A' \in \Atoms$, $\reqD(A')\subseteq \reqD(A)$, and hence  $\rank(A)\leq \rank(A')$. 
We can see the $\rank$ of an atom as the ``number of degrees of freedom''
that it gives to 
the atoms  that stay ``above it''.
In particular, by Definition~\ref{def:row}, for every  $\varphi$-row $\row=A_0^{m_0} \cdots A_n^{m_n}$, we have $\rank(A_0)\geq \cdots \geq  \rank(A_n)$.
%
%
\begin{defi}\label{def:equivalence_class}
Given two   $\varphi$-rows $\row_1=A_0^{m_0} \cdots A_n^{m_n}$ and $\row_2=
\hA_0^{\hm_0} \cdots \hA_{\hn}^{\hm_{\hn}}$ (represented in maximal factorization form), we say that they are \emph{equivalent}, written
$\row_1 \sim \row_2$,
if $(i)$~$n=\hn$, and $(ii)$~for each $0\leq i \leq n$, $A_i =\hA_i$, and \emph{either} ${m_i=\hm_i}$ \emph{or}
both $m_i$ and $\hm_i$ are (strictly) greater than $\rank(A_i)$.

A \emph{minimal $\varphi$-row} is a $\varphi$-row whose maximal factorization $A_0^{m_0} \cdots A_n^{m_n}$ satisfies the following condition: $m_i\in [1, \rank(A_i) + 1]$, for each $0\leq i\leq n$.
\end{defi}

Note that if two rows feature the same set of atoms, the lower the rank of an atom
$A_i$, the lower the number of occurrences of $A_i$ both the rows have to feature in order to belong to the same equivalence class.
As an example, let $\row_1$ and $\row_2$ be two rows with $\row_1=A_0^{m_0}A_1^{m_1}$,
$\row_2=A_0^{\om_0}A_1^{\om_1}$, $\rank(A_0) = 4$,
and $\rank(A_1)= 3$. If $m_1 = 4$ and $\om_1=5$ they are both greater than $\rank(A_1)$, and
hence they do not violate the condition for $\row_1 \sim \row_2$. On the other hand,
if $m_0 = 4$ and $\om_0=5$, we have that $m_0$  is less than or equal to $\rank(A_0)$. Thus, in this case, $\row_1 \not\sim \row_2$ due to the indexes of $A_0$. This happens because $\rank(A_0)$ is greater than $\rank(A_1)$. Two cases in which $\row_1 \sim \row_2$
are $m_0 = \om_0$ and $m_0, \om_0 \geq 5$.

The next result directly follows from the definition of 
$\sim$ and Lemma~\ref{lem:compas_implies_rowA}.

\begin{lem}\label{lem:minimal_representative}
Each equivalent class of $\sim$ contains a unique minimal $\varphi$-row. Moreover, the length of a minimal $\varphi$-row is at most $O(|\varphi|^{2})$.
\end{lem}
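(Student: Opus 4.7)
My plan is to establish existence, uniqueness, and the length bound separately, all by direct manipulation of the maximal factorization.

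For existence of a minimal representative, I would take an arbitrary $\varphi$-row $\row = A_0^{m_0} \cdots A_n^{m_n}$ in maximal factorization form and define $\row^{min} = A_0^{\hm_0} \cdots A_n^{\hm_n}$, where $\hm_i = \min(m_i,\ \rank(A_i)+1)$. I first need to check that $\row^{min}$ is still a $\varphi$-row: shrinking the multiplicity of a block $A_i^{m_i}$ only removes self-copies of $A_i$, so the $\Dphi$-relation and the $\AP$-inclusion between consecutive distinct atoms $A_i, A_{i+1}$ are unaffected (while $A \Dphi A$ and $A \cap \AP \supseteq A \cap \AP$ are trivially true). The factorization of $\row^{min}$ is still maximal because the sequence $A_0,\ldots,A_n$ is unchanged and consecutive atoms remain distinct. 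Equivalence $\row^{min} \sim \row$ is immediate from Definition~\ref{def:equivalence_class}: for each $i$, either $m_i \leq \rank(A_i)+1$, in which case $\hm_i = m_i$, or $m_i > \rank(A_i)+1 > \rank(A_i)$, in which case $\hm_i = \rank(A_i)+1 > \rank(A_i)$. By construction $\hm_i \in [1, \rank(A_i)+1]$, so $\row^{min}$ is minimal.

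For uniqueness, suppose two minimal $\varphi$-rows $\row_1 = A_0^{m_0} \cdots A_n^{m_n}$ and $\row_2 = \hA_0^{\hm_0} \cdots \hA_{\hn}^{\hm_{\hn}}$ satisfy $\row_1 \sim \row_2$. By Definition~\ref{def:equivalence_class}, $n = \hn$ and $A_i = \hA_i$ for all $i$, so it suffices to show $m_i = \hm_i$. Fix $i$. If $m_i = \hm_i$ we are done; otherwise both exceed $\rank(A_i)$. But both rows are minimal, so $m_i, \hm_i \leq \rank(A_i)+1$, and combined with $m_i, \hm_i > \rank(A_i)$ this forces $m_i = \hm_i = \rank(A_i)+1$.

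For the length bound, I combine Lemma~\ref{lem:compas_implies_rowA} with the rank bound. That lemma yields $n + 1 \leq 2|\varphi|$, since the atoms $A_0,\ldots,A_n$ appearing in a maximal factorization are pairwise distinct. Moreover, $\rank(A_i) = |\REQ_\varphi| - |\reqD(A_i)| \leq |\REQ_\varphi| \leq |\varphi| - 1$, so every exponent in a minimal row satisfies $m_i \leq \rank(A_i) + 1 \leq |\varphi|$. Summing, $|\row^{min}| = \sum_{i=0}^{n} m_i \leq (n+1)\cdot |\varphi| \leq 2|\varphi|^{2} = O(|\varphi|^{2})$.

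No step here looks like a serious obstacle; the only point that requires a moment of care is verifying that shortening blocks in the factorization preserves row-hood and maximality of the factorization, which is handled above by noting that the consecutive-atom constraints of Definition~\ref{def:row} concern only the boundary between distinct blocks, which is left untouched by the operation $m_i \mapsto \hm_i$.
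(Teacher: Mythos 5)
Your proof is correct and follows the route the paper intends: the paper gives no explicit argument (it states that the lemma ``directly follows'' from the definition of $\sim$ and Lemma~\ref{lem:compas_implies_rowA}), and your capping construction $m_i\mapsto\min(m_i,\rank(A_i)+1)$, the forced-value argument for uniqueness, and the bound $(n+1)\cdot|\varphi|\le 2|\varphi|^2$ are exactly the details being elided. One small correction: $A\Dphi A$ is \emph{not} trivially true for an arbitrary atom, since by Proposition~\ref{prop:characterizeTemporalConsistency} it requires $\obsD(A)\subseteq\reqD(A)$; the right justification is that every consecutive pair in $\row^{min}$ (whether $(A_i,A_i)$ inside a block with $\hm_i\ge 2$, hence $m_i\ge 2$, or $(A_i,A_{i+1})$ at a block boundary) already occurred as a consecutive pair in the original $\varphi$-row, so the conditions of Definition~\ref{def:row} are inherited rather than trivial. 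With that one-line fix the argument is complete.
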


Thus, the equivalence relation $\sim$ has finite index given by the number of minimal $\varphi$-rows. This number is roughly bounded  by  the number of  all the possible $\varphi$-rows $\row=A_0^{m_0}\cdots A_n^{m_n}$, with exponents $m_i$ ranging from  $1$ to $ |\varphi|$. Since $(i)$~the number of possible $\varphi$-atoms is $2^{|\varphi|}$, $(ii)$~the number of \emph{distinct} atoms in any $\varphi$-row is at most $2|\varphi|$, and $(iii)$~the number of possible functions
$f: \{1,\ldots , \ell\} \rightarrow \{ 1,\ldots ,|\varphi|\}$ is $|\varphi|^{\ell}$, we have that the number of distinct equivalence classes of $\sim$ is bounded by
\[
    \sum_{j=1}^{2|\varphi|} (2^{|\varphi|})^j\cdot |\varphi|^j\leq  2^{3|\varphi|^2},
\]
which is exponential in the length of the input formula
$\varphi$.


Next, we observe that if we replace a segment (sub-row) of a  $\varphi$-row with an equivalent one, we obtain a  $\varphi$-row
which is equivalent to the original one. The following lemma holds.

\begin{lem}\label{lem:row_concatenation} Let $\row_1,\row'_1,\row_2,\row'_2$ be $\varphi$-rows such that
$\row_1 \!\sim\! \row'_1$ and \mbox{$\row_2 \!\sim\! \row'_2$}. If $\row_1\star\row_2$ and $\row'_1\star\row'_2$ are defined, then
  $\row_1\star \row_2 \sim \row'_1\star \row'_2$.
\end{lem}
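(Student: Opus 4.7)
The plan is to reduce the statement to a simple arithmetic check on the exponents of the maximal factorizations, after observing that the operator $\star$ acts predictably on them.

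First, I would fix maximal factorizations
$\row_1 = A_0^{m_0}\cdots A_n^{m_n}$, $\row_1'= A_0^{m_0'}\cdots A_n^{m_n'}$, $\row_2 = B_0^{k_0}\cdots B_p^{k_p}$, $\row_2' = B_0^{k_0'}\cdots B_p^{k_p'}$. The equivalences $\row_1\sim\row_1'$ and $\row_2\sim\row_2'$ give (by Definition~\ref{def:equivalence_class}) that the atom sequences agree factor-wise, and for every index the two corresponding exponents are either equal or both strictly greater than the rank of the common atom. Because $\row_1\star\row_2$ is defined, the last atom of $\row_1$ coincides with the first atom of $\row_2$, i.e.\ $A_n=B_0$; the same identity holds for the primed words, since $\sim$ preserves atom sequences. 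Note that $\rank(A_n)=\rank(B_0)$; call this common value $r$.

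Next, I would describe the effect of $\star$ on maximal factorizations. Erasing the first letter of $\row_2$ turns $B_0^{k_0}\cdots B_p^{k_p}$ into $B_0^{k_0-1}B_1^{k_1}\cdots B_p^{k_p}$ (or into $B_1^{k_1}\cdots B_p^{k_p}$ if $k_0=1$, which is absorbed in the formula below). Concatenating with $\row_1$ and collapsing the adjacent $A_n$-block gives the maximal factorization
\[
\row_1\star\row_2 \;=\; A_0^{m_0}\cdots A_{n-1}^{m_{n-1}}\,A_n^{m_n+k_0-1}\,B_1^{k_1}\cdots B_p^{k_p};
\]
this really is maximal because $A_{n-1}\neq A_n$ and $B_0\neq B_1$, hence $A_n\neq B_1$. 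The analogous identity holds for $\row_1'\star\row_2'$, with the same atom sequence and exponent $m_n'+k_0'-1$ at the merged position.

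Finally, I would verify the $\sim$-equivalence factor by factor. For every factor strictly before or strictly after the seam the exponent condition is inherited unchanged from $\row_1\sim\row_1'$ or $\row_2\sim\row_2'$. The only nontrivial check is the seam. By case analysis on the pairs $(m_n,m_n')$ and $(k_0,k_0')$: if both pairs are equal, then $m_n+k_0-1 = m_n'+k_0'-1$; if either pair consists of values strictly greater than $r$, then (since the other summand is $\geq 1$) both $m_n+k_0-1$ and $m_n'+k_0'-1$ are at least $r+1$, hence both strictly greater than $r=\rank(A_n)$. In every case the condition of Definition~\ref{def:equivalence_class} is met, so $\row_1\star\row_2\sim \row_1'\star\row_2'$. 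The only mild subtlety is the bookkeeping at the seam, where one must remember to identify $\rank(A_n)$ with $\rank(B_0)$ before running the case analysis; once that is noted, the argument is purely combinatorial.
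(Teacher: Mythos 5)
Your proof is correct and follows essentially the same route as the paper's: both arguments observe that the only factor whose exponent is affected by $\star$ is the seam atom (the last atom of $\row_1$, equal to the first atom of $\row_2$), and then check that its combined occurrence count $m_n+k_0-1$ versus $m'_n+k'_0-1$ satisfies the either-equal-or-both-above-rank condition of Definition~\ref{def:equivalence_class}. Your write-up simply makes explicit the maximal-factorization bookkeeping and the case analysis that the paper's proof leaves as "easily deduced."
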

\begin{proof}  Let $A$ be the first common $\varphi$-atom $A$ of $\row_2$ and $\row'_2$. By hypothesis, $A$ is also the last common atom
of $\row_1$ and $\row'_1$.
By hypothesis and Definition~\ref{def:equivalence_class}, one can easily deduce that, representing by $m$ (resp., $m'$) the number of occurrences of $A$ in $\row_1\star \row_2$ (resp., $\row'_1\star \row'_2$), it holds that either $m=m'$, or both of them are greater than $\rank(A)$.
Hence, the result easily follows.
\end{proof}

We now show that the successor function on  $\varphi$-rows preserves the equivalence of  $\varphi$-rows.
We first show that the result holds for $\varphi$-rows of the form $B^{m}$ for some $\varphi$-atom
$B$ and $m\geq 1$.

\begin{lem}\label{lem:succ_preserves_prelimin}
Let $A$ and $B$ be two $\varphi$-atoms and $m>\rank(B)$. Then, the following properties hold:
\begin{itemize}
  \item  the  $\varphi$-row $\SUCC(B^{m},A)$ is of the form
$A A_1\ldots A_{k}^{\ell}$ for some $k,\ell\geq 1$  such that $A_1,\ldots,A_{k}$ are pairwise
distinct $\varphi$-atoms  and $\ell>\rank(A_k)$ (note that $A$ and $A_1$ may be equal),
  \item for each $t>0$, $\SUCC(B^{m+t},A) = \SUCC(B^{m},A)\cdot A_k^{t}$.
\end{itemize}
\end{lem}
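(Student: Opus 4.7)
The plan is to analyze the sequence $B_0 B_1 \cdots B_m = \SUCC(B^m,A)$, where $B_0 = A$ and $BB_i \genDphi B_{i+1}$ for $i \in [0,m-1]$, and to exhibit its rapid stabilization. The first step I would take is to observe that the propositional content of the $B_i$'s is constant from index $1$ onward: Definition~\ref{def:d_generator} gives $B_{i+1} \cap \AP = B \cap B_i \cap \AP$, so $B_1 \cap \AP \subseteq B \cap \AP$ and a straightforward induction yields $B_i \cap \AP = B_1 \cap \AP$ for every $i \geq 1$. By Proposition~\ref{prop:unique}, it follows that for $i, j \geq 1$, the atoms $B_i$ and $B_j$ coincide iff $\reqD(B_i) = \reqD(B_j)$.

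Next I would control the rank. Definition~\ref{def:d_generator} yields $\reqD(B_{i+1}) \supseteq \reqD(B_i)$ for every $i$, with strict inclusion exactly when $B_{i+1} \neq B_i$ (for $i \geq 1$, by the first step). Also $\reqD(B_1) \supseteq \reqD(B)$, so $\rank(B_1) \leq \rank(B)$. Let $k$ be the largest index for which $B_1, \ldots, B_k$ are pairwise distinct; the strict chain $\rank(B_1) > \cdots > \rank(B_k)$ forces $\rank(B_k) \leq \rank(B) - (k-1)$ and $k \leq \rank(B) + 1 \leq m$. Setting $A_i := B_i$ for $i \in [1,k]$ and $\ell := m - k + 1$, the decomposition $\SUCC(B^m, A) = A A_1 \cdots A_k^\ell$ is immediate, and the key inequality $\ell > \rank(A_k)$ reduces to $m - k + 1 > \rank(B) - k + 1$, which is precisely the hypothesis. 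This settles item one.

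For item two, I would first establish the self-loop $B A_k \genDphi A_k$. If $k < m$, maximality of $k$ together with strict rank monotonicity on distinct atoms forces $B_{k+1} = B_k = A_k$, and functionality of $\genDphi$ (Lemma~\ref{lem:functional}) provides the claim. If $k = m$, item one forces $\ell = 1$ and therefore $\rank(A_k) = 0$, i.e.\ $\reqD(A_k) = \REQ_\varphi$, so $B A_k \genDphi A_k$ follows directly from Definition~\ref{def:d_generator}. With the self-loop in hand, applying Lemma~\ref{lem:propert_successor}(2) to $B^{m+t} = B^m \cdot B^t$ gives $\SUCC(B^{m+t},A) = \SUCC(B^m,A) \star \SUCC(B^t, A_k)$; iterating the self-loop via functionality yields $\SUCC(B^t, A_k) = A_k^{t+1}$, so the $\star$-concatenation amounts to appending exactly $A_k^t$ to $\SUCC(B^m,A)$.

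The principal obstacle is the arithmetic packaging behind $\ell > \rank(A_k)$: both bounds $k \leq \rank(B)+1$ and $\rank(A_k) \leq \rank(B) - k + 1$ depend on $k$, and their joint interplay with $m > \rank(B)$ is precisely what produces the required strict inequality; in particular, one must bound $\rank(A_k)$ against $\rank(B)$ (rather than against $\rank(B_1)$), which is why the inequality $\rank(B_1) \leq \rank(B)$ is extracted separately. A secondary subtlety is the case split $k < m$ versus $k = m$ in the self-loop verification, since the former exploits $B_{k+1}$ belonging to the defining sequence while the latter requires a direct algebraic check.
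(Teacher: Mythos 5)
Your proof is correct and follows essentially the same route as the paper's: constancy of the propositional component from index $1$ on, monotone growth of $\reqD$ with strict growth exactly at distinct consecutive atoms (hence the strict rank chain $\rank(B)\geq\rank(A_1)>\cdots>\rank(A_k)$, giving $\rank(B)\geq\rank(A_k)+k-1$ and thus $\ell=m-k+1>\rank(A_k)$ from $m>\rank(B)$), followed by the self-loop $BA_k\genDphi A_k$ with the same case split (your $k=m$ versus $k<m$ is exactly the paper's $\ell=1$ versus $\ell>1$) and an application of Lemma~\ref{lem:propert_successor}(2). The only cosmetic difference is that you route atom-equality through Proposition~\ref{prop:unique} explicitly, where the paper lists the corresponding structural conditions directly.
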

\begin{proof}
Let $\row$ be the $\varphi$-row of length $m+1$ given by $\SUCC(B^{m},A)$. Since $\row[0]=A$ and $\row[i] B \genDphi \row[i+1]$ for all $i\in [0,m-1]$,
by Definition~\ref{def:d_generator}, we easily deduce that for all $i\in [1,m]$, the following conditions hold:
\begin{itemize}
  \item $\row[i] \cap \AP = B\cap A \cap \AP$;
  \item $\rank(\row[i-1])\geq \rank(\row[i])$ and $\rank(B)\geq \rank(\row[i])$;
  \item if $i<m$, then $\row[i]\neq \row[i+1]$ if and only if $\rank(A_i)>\rank(A_{i+1})$;
  \item if $i<m$ and $\row[i]= \row[i+1]$, then $\row[j]=\row[i]$ for all $j\geq i$.
\end{itemize}

Since $\row[0]=A$ and $|\row|=m+1$, it follows that $\row$ is of the form $A A_1\ldots A_{k}^{\ell}$ for some $k,\ell\geq 1$ such that $k+\ell-1= m$, $A_1,\ldots,A_{k}$ are pairwise
distinct $\varphi$-atoms, and $\rank(B)\geq \rank(A_1)>\cdots >\rank(A_k)$. Hence, $\rank(B)\geq \rank(A_k) +k-1$.
By hypothesis,  $m>\rank(B)$ which entails that $\ell=m-k+1>\rank(A_k)$, and the first statement of Lemma~\ref{lem:succ_preserves_prelimin} follows.

As for the second statement, we have that $\SUCC(B^{m+t},A) = \SUCC(B^{m},A)\star  \SUCC(B^{t},A_k)$ (by Lemma~\ref{lem:propert_successor}(2)). If $\ell=1$, then, being
$0\leq \rank(A_k)<\ell$, it holds that $\rank(A_k)=0$. Hence, we deduce that
each atom occurring in $\SUCC(B^{t},A_k)$ is $A_k$. On the other hand, if $\ell>1$ then $A_k B \genDphi A_k$. Hence, in both the cases, we obtain that
$\SUCC(B^{t},A_k)$ is $A_k^{t+1}$ and $\SUCC(B^{m+t},A) = \SUCC(B^{m},A)\cdot A_k^{t}$, which concludes the proof.
\end{proof}

We now generalize Lemma~\ref{lem:succ_preserves_prelimin} to arbitrary $\varphi$-rows.

\begin{lem}\label{lem:succ_preserves_equiv}
Let $A$ be a $\varphi$-atom and $\row$, $\row'$ be two  $\varphi$-rows such that  $\row \sim \row'$.
Then,
it holds that $\SUCC(\row,A)\sim \SUCC(\row',A)$.
\end{lem}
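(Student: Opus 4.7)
The plan is to induct on the number $n+1$ of blocks in the common maximal factorization of $\row$ and $\row'$. By Definition~\ref{def:equivalence_class}, $\row \sim \row'$ forces the same block sequence $A_0,\ldots,A_n$ of distinct atoms; only the multiplicities $m_i$ and $m'_i$ may differ, and then only when both already exceed $\rank(A_i)$.

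For the base case $n = 0$, we have $\row = A_0^{m_0}$ and $\row' = A_0^{m'_0}$. If $m_0 = m'_0$ the two successors are literally equal. Otherwise both $m_0,m'_0 > \rank(A_0)$, and Lemma~\ref{lem:succ_preserves_prelimin} yields $\SUCC(A_0^{m_0}, A) = A\,A_1 \cdots A_k^{\ell}$ and $\SUCC(A_0^{m'_0}, A) = A\,A_1 \cdots A_k^{\ell'}$, where the prefix $A\,A_1\cdots A_k$ depends only on $A$ and $A_0$, and $\ell,\ell' > \rank(A_k)$. Hence these two $\varphi$-rows belong to the same $\sim$-class and, crucially, end with the same atom $A_k$.

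For the inductive step, split $\row = \row_1 \cdot \row_2$ with $\row_1 = A_0^{m_0}$ and $\row_2 = A_1^{m_1}\cdots A_n^{m_n}$, and analogously $\row' = \row'_1 \cdot \row'_2$ (the concatenation is the plain $\cdot$ since the last atom of $\row_1$, namely $A_0$, differs from the first of $\row_2$, namely $A_1$). By Lemma~\ref{lem:propert_successor}(2), $\SUCC(\row, A) = \SUCC(\row_1, A) \star \SUCC(\row_2, B)$ and $\SUCC(\row', A) = \SUCC(\row'_1, A) \star \SUCC(\row'_2, B')$, where $B$ and $B'$ denote the last atoms of $\SUCC(\row_1, A)$ and $\SUCC(\row'_1, A)$. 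The base case gives both $\SUCC(\row_1, A) \sim \SUCC(\row'_1, A)$ and $B = B'$. Since $\row_2 \sim \row'_2$ (they share the same shorter factorization by definition of $\sim$), the inductive hypothesis, applied with the common starting atom $B$, yields $\SUCC(\row_2, B) \sim \SUCC(\row'_2, B)$. Lemma~\ref{lem:row_concatenation} then glues the two equivalences at the common junction atom $B$, producing $\SUCC(\row, A) \sim \SUCC(\row', A)$.

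The main obstacle I anticipate is the identity $B = B'$: without it one could not hand the same initial atom to $\SUCC(\cdot, -)$ on $\row_2$ and $\row'_2$, and hence could not appeal to the inductive hypothesis. This identity hinges on the fact, isolated in Lemma~\ref{lem:succ_preserves_prelimin}, that once the exponent exceeds $\rank(A_0)$, the tail atom of $\SUCC(A_0^{m}, A)$ stabilizes and depends only on $A_0$ and $A$, not on the specific value of $m$. Everything else is just bookkeeping with the maximal factorization and two applications of lemmas already proved.
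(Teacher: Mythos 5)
Your proof is correct and follows essentially the same route as the paper's: induction on the number of blocks in the maximal factorization (which, by Lemma~\ref{lem:compas_implies_rowA}, coincides with the paper's induction on the number of distinct atoms), with the base case handled by Lemma~\ref{lem:succ_preserves_prelimin}, the decomposition via Lemma~\ref{lem:propert_successor}(2), the key observation that the junction atoms coincide, and the gluing via Lemma~\ref{lem:row_concatenation}. The only cosmetic difference is that you peel off exactly the first block, whereas the paper splits into two arbitrary sub-rows each with fewer distinct atoms.
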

\begin{proof} The proof is by induction on the number of distinct $\varphi$-atoms occurring in $\row$, denoted by $N(\row)$. Being $\row$ and $\row'$ equivalent, $N(\row')=N(\row)$.

\smallskip

\noindent \emph{Base case:} $N(\row)=N(\row') =1$. Assume that $|\row|\leq |\row'|$ (the case where $|\row'|\leq |\row|$ being symmetric).
Since $\row$ and $\row'$ are equivalent, there is a $\varphi$-atom $B$ such that
$\row =B^{m}$, $\row' = B^{m+t}$,   $m=|\row|$, $t=|\row|-|\row'|$, and either $t=0$ or $m>\rank(B)$. If $t=0$, that is, $\row=\row'$, the result is obvious.
Otherwise, the result directly follows from Lemma~\ref{lem:succ_preserves_prelimin}.

\smallskip

\noindent \emph{Inductive step:} $N(\row)=N(\row') >1$. Hence, being $\row \sim \row'$, $\row$ (resp., $\row'$) can be written in the form   $\row=\row_1\cdot \row_2$ (resp., $\row'=\row'_1\cdot \row'_2$) such that $\row_1\sim \row'_1$, $\row_2\sim \row'_2$, $N(\row_1)=N(\row'_1)<N(\row)=N(\row')$,
and $N(\row_2)=N(\row'_2)<N(\row)=N(\row')$. Let $A_1$ (resp., $A'_1$) be the last atom in $\SUCC(\row_1,A)$ (resp., $\SUCC(\row'_1,A)$). By the induction hypothesis,
$\SUCC(\row_1,A)\sim \SUCC(\row'_1,A) $, $A_1=A'_1$, and $\SUCC(\row_2,A_1)\sim \SUCC(\row'_2,A'_1) $.
By Lemma~\ref{lem:propert_successor}(2), $\SUCC(\row,A)=\SUCC(\row_1,A)\star \SUCC(\row_2,A_1)$ and $\SUCC(\row',A)=\SUCC(\row'_1,A)\star \SUCC(\row'_2,A'_1)$.
By applying Lemma~\ref{lem:row_concatenation}, we have that $\SUCC(\row,A)\sim \SUCC(\row',A)$ proving the thesis.
\end{proof}

\subsection{A satisfiability checking procedure for \hsDhom}
Let us now focus on the complexity of the
satisfiability checking problem for a \hsDhom-formula $\varphi$ over finite linear orders, which has been proved, by Proposition~\ref{prop:satiffcompass}, to be equivalent to the problem of deciding whether there is a homogeneous fulfilling compass $\varphi$-structure that features $\varphi$.
By exploiting Lemma~\ref{lem:succ_preserves_equiv}, we reduce such a problem to a reachability problem in a finite graph with the initialized minimal $\varphi$-rows as vertices.

\begin{defi}\label{def:equivalencegraph}
Let $\row$ be a minimal $\varphi$-row and $A$ 
an atom. We denote by $\SuccMIN(\row,A)$ the unique minimal $\varphi$-row in the equivalence class of $\sim$ containing  $\SUCC(\row,A)$.  

We associate with formula $\varphi$ the finite graph
$G_{\varphi}^{min}=(\RowsMIN,\RelMIN)$ defined as:
\begin{itemize}
  \item $\RowsMIN$ is the set of initialized minimal $\varphi$-rows;
  \item for all $\row,\row'\in \RowsMIN$, $\row\,\RelMIN\,\row'$  iff  $\row' = \SuccMIN(\row,\row'[0])$.
\end{itemize}
\end{defi}

We now prove the main technical result of the section.

 \begin{thm}\label{thm:path_iff_sat}
Let $\varphi$ be a \hsDhom-formula. Then, there exists a homogeneous fulfilling compass $\varphi$-structure $\cG=(\bbP_\bbD, \cL)$ that features $\varphi$
\emph{if and only if} there exist two initialized minimal $\varphi$-rows $\row_1$ and $\row_2$ such that:
\begin{enumerate}
    \item  $|row_1|=1$,  $\varphi\in \row_2[i]$ for some $0\leq i<|\row_2|$, and
    \item $\row_2$ is reachable from $\row_1$ in the finite graph $G_{\varphi}^{min}=(\RowsMIN,\RelMIN)$.
\end{enumerate}
\end{thm}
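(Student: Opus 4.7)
The plan is to leverage the row-based characterization of homogeneous fulfilling compass $\varphi$-structures (Lemmata~\ref{lem:compas_implies_rowB} and~\ref{lem:row_successor}) and translate it into paths in $G_{\varphi}^{min}$ via the minimal-representative map, whose coherence with $\SUCC$ is guaranteed by Lemma~\ref{lem:succ_preserves_equiv}. A preliminary remark I will use throughout is that, by Definition~\ref{def:equivalence_class}, the relation $\sim$ preserves the entire sequence of distinct atoms appearing in the maximal factorization; in particular, equivalent $\varphi$-rows share the same first atom, the same last atom, and the same set of atoms.

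For the direction from left to right, I start from a homogeneous fulfilling compass $\varphi$-structure $\cG=(\bbP_\bbD,\cL)$ over $\bbD=(S,<)$ that features $\varphi$ at a point $(\bar x,\bar y)$. I associate with each $y\in S$ the initialized $\varphi$-row $\row_y=\cL(y,y)\cdots \cL(0,y)$ from Lemma~\ref{lem:compas_implies_rowB}, and I denote by $\orow_y$ its unique minimal representative (Lemma~\ref{lem:minimal_representative}); since $|\row_0|=1$, $\orow_0=\row_0$ has length $1$. Lemma~\ref{lem:row_successor} gives $\row_{y+1}=\SUCC(\row_y,\row_{y+1}[0])$, and applying Lemma~\ref{lem:succ_preserves_equiv} to $\orow_y\sim \row_y$ yields $\SUCC(\orow_y,\row_{y+1}[0]) \sim \row_{y+1} \sim \orow_{y+1}$, whence $\SuccMIN(\orow_y,\orow_{y+1}[0])=\orow_{y+1}$, i.e., $\orow_y\RelMIN \orow_{y+1}$. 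Taking $\row_1:=\orow_0$ and $\row_2:=\orow_{\bar y}$ then meets both requirements, since $\varphi\in \cL(\bar x,\bar y)=\row_{\bar y}[\bar y-\bar x]$ and $\row_{\bar y}$, $\orow_{\bar y}$ share the same set of atoms.

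For the converse direction, given a path $\row_1=\rho_0\RelMIN \rho_1\RelMIN \cdots \RelMIN \rho_k=\row_2$ in $G_{\varphi}^{min}$, I unfold it into a sequence of actual $\varphi$-rows $r_0,\ldots,r_k$ of strictly increasing length by setting $r_0:=\rho_0$ and $r_{j+1}:=\SUCC(r_j,\rho_{j+1}[0])$. An easy induction on $j$, combining Lemma~\ref{lem:succ_preserves_equiv} with the definition of $\RelMIN$, shows that $r_j\sim \rho_j$; in particular, $|r_j|=j+1$, each $r_j$ is initialized, and $r_{j+1}[0]=\rho_{j+1}[0]$. I then define a weak compass $\varphi$-structure $\cG=(\bbP_\bbD,\cL)$ over the linear order $(\{0,\ldots,k\},<)$ by $\cL(x,y):=r_y[y-x]$, so that the row at height $y$ of $\cG$ is precisely $r_y$. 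Since $\reqD(\cL(x,x))=\reqD(r_x[0])=\emptyset$ and $r_{y+1}=\SUCC(r_y,r_{y+1}[0])$ by construction, Lemma~\ref{lem:row_successor} guarantees that $\cG$ is a homogeneous fulfilling compass $\varphi$-structure; moreover $\varphi$ appears in some atom of $r_k\sim \row_2$, so $\cG$ features $\varphi$.

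The main subtlety I expect concerns the converse direction: the successor step $\SUCC(r_j,\rho_{j+1}[0])$ must agree with the one later forced by the compass structure, namely, its first atom must be $r_{j+1}[0]$. This reduces to the preliminary remark that $\sim$ preserves the first atom of the maximal factorization, so the identities $r_{j+1}[0]=\rho_{j+1}[0]$ and $\reqD(r_j[0])=\reqD(\rho_j[0])=\emptyset$ hold automatically; apart from this bookkeeping, the proof essentially consists in invoking the lemmata of the previous subsections.
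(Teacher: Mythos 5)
Your proposal is correct and follows essentially the same route as the paper's own proof: both directions pass to minimal representatives (resp.\ unfold the path into actual $\varphi$-rows via $\SUCC$), invoke Lemma~\ref{lem:succ_preserves_equiv} to transfer the successor relation across $\sim$, and build the compass structure by $\cL(x,y)=r_y[y-x]$ together with Lemma~\ref{lem:row_successor}. The bookkeeping you flag (that $\sim$ preserves the first atom, so the guessed atoms agree) is exactly the implicit step the paper also relies on.
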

\begin{proof}
($\Rightarrow$)
Let us consider a homogeneous fulfilling compass $\varphi$-structure $\cG=(\bbP_\bbD, \cL)$ that features $\varphi$.
By Lemma~\ref{lem:compas_implies_rowB} and Lemma~\ref{lem:row_successor}, there is $m\geq 0$ and
$m+1$ initialized \mbox{$\varphi$-rows} $\row_0,\ldots,\row_m$ such that $|\row_0|=1$, $\varphi\in \row_m[i]$ for some $0\leq i<|\row_m|$,
and $\row_{i+1}=\SUCC(\row_i,\row_{i+1}[0])$ for all $0\leq i<m$.

For each $0\leq i\leq m$, let $\row_i^{min}$ be the unique minimal $\varphi$-row in the equivalence class $[\row_i]_{\sim}$. Note that
$\row_0^{min}=\row_0$, $\row_i^{min}$ is initialized, $\row_i^{min}[0]=\row_i[0]$, for all $i\in [0,m]$, and $\varphi\in \row_m^{min}[j]$, for some $0\leq j<|\row^{min}_m|$.
Moreover, by Lemma~\ref{lem:succ_preserves_equiv}, $\SUCC(\row_i^{min},\row_{i+1}[0])$ is equivalent to $\row_{i+1}=\SUCC(\row_i,\row_{i+1}[0])$,
for all $0\leq i<m$. By the definition of $\SuccMIN$, 
 $\row^{min}_{i+1}=\SuccMIN(\row^{min}_i,\row_{i+1}[0])$, for all $0\leq i<m$. Hence, $\row_m^{min}$ is reachable from  $\row_0^{min}$ in the finite graph $G_{\varphi}^{min}$, and thus the thesis.\vspace{0.1cm}

($\Leftarrow$) Let us assume that there exist  two initialized minimal $\varphi$-rows $\row_1$ and $\row_2$
such that $|row_1|=1$,  $\varphi\in \row_2[i]$ for some $0\leq i<|\row_2|$, and $\row_2$ is reachable from $\row_1$ in the finite graph $G_{\varphi}^{min}$.
Hence, there is $m\geq 0$ and $m+1$ initialized minimal $\varphi$-rows $\row_0^{min},\ldots,\row_m^{min}$ such that $\row_0^{min}=\row_1$,
$\row_m^{min}=\row_2$,
and $\row^{min}_{i+1}=\SuccMIN(\row^{min}_i,\row^{min}_{i+1}[0])$, for all $0\leq i<m$. Let $\row'_0,\ldots,\row'_m$ be the sequence of $\varphi$-rows
defined as follows: $\row'_0=\row_0^{min}=\row_1$ and  $\row'_{i+1}=\SUCC(\row'_i,\row^{min}_{i+1}[0])$, for all $0\leq i<m$. By applying
Lemma~\ref{lem:succ_preserves_equiv}, we deduce that $\row'_i\sim \row_i^{min}$, for all $i\in [0,m]$. Hence, $\row'_i$ is initialized for all $i\in [0,m]$,
and $\varphi\in \row'_m[j]$, for some $0\leq j< |\row'_m|$.
Now, let $\cG=(\bbP_\bbD, \cL)$ be a weak compass $\varphi$-structure, with $S= \{0,\ldots, m\}$ and
$\cL(x,y)=\row_y'[y-x]$, for every $0\leq x\leq y\leq m$.
By Lemma~\ref{lem:row_successor}, $\cG$ is a  homogeneous fulfilling compass $\varphi$-structure that features $\varphi$. This concludes the proof. 
\end{proof}

\begin{algorithm}[tb]
\begin{algorithmic}[1]
\caption{
\hfill
\parbox[t]{0.83\linewidth}{
\texttt{SAT}$(\varphi)$ \hfill Input: a \hsDhom-formula $\varphi$\newline
\null\hfill \emph{\small Non-deterministic procedure deciding the satisfiability of a \hsDhom-formula $\varphi$}
}
}\label{NDAlgo}
    \State{$M\gets 2^{3|\varphi|^2}$, $step\gets 0$ and $\row\gets A$ for some atom $A\in \Atoms$ with  $\reqD(A)=\emptyset$}
    \If{there exists $0\leq i<|\row|$ such that $\varphi\in \row[i]$}
        \State{\textbf{return} ``satisfiable''}
    \EndIf
    \If{$step = M-1$}
        \State{\textbf{return} ``unsatisfiable''}
    \EndIf
    \State{Guess an atom $A\in \Atoms$ with  $\reqD(A)=\emptyset$ and set $\row'=\SuccMIN(\row,A)$}
    \State{$step \gets step +1$ and $\row \gets\row'$}
    \State{Go back to line 2}
\end{algorithmic}
\end{algorithm}

The size of $\RowsMIN$ is bounded by $M=2^{3|\varphi|^2}$. By Theorem~\ref{thm:path_iff_sat}, $\varphi$ is satisfiable if and only if there is path in the
finite graph  $G_{\varphi}^{min}=(\RowsMIN,\RelMIN)$ of length at most $M$ from a $\varphi$-row  in $\RowsMIN$   of length $1$ to a $\varphi$-row
 $\row_2\in \RowsMIN$ such that $\varphi\in \row_2[i]$, for some $0\leq i<|\row_2|$.
The \emph{non-deterministic} procedure \texttt{SAT}$(\varphi)$ in Algorithm~\ref{NDAlgo} exploits such a characterization to decide the satisfiability
of a \hsDhom-formula $\varphi$. Initially, the algorithm guesses a  $\varphi$-atom having no temporal request, that is, a row in $\RowsMIN$ of length~$1$.
At the $j$-th iteration, if the currently processed $\varphi$-row $\row\in \RowsMIN$ has some atom which contains $\varphi$, then the algorithm  terminates with success. Otherwise, the algorithm guesses a successor $\row'$ of the current $\varphi$-row $\row$ in $G_{\varphi}^{min}$.
The procedure terminates after at most $M$ iterations. 
The working space used by the procedure is polynomial:
$M$
and $step$ (which ranges in $[0,M-1]$) can be encoded
in binary with $\lceil \log_2 M \rceil +1=O(|\varphi|^2)$ bits. Moreover,
at each step, the algorithm keeps in memory only two minimal initialized $\varphi$-rows: the current one $\row$ and the guessed successor $\row'$ in
$G_{\varphi}^{min}$. By Lemma~\ref{lem:minimal_representative}, each minimal initialized $\varphi$-row can be represented by using space polynomial in $\varphi$. Thus, since $\NPsp=\Psp$, we obtain the following result.

%
%
\begin{thm}\label{thm:pspace}
The satisfiability problem for \hsDhom-formulas over finite linear orders is in $\Psp$.
\end{thm}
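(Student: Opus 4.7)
The plan is to justify correctness and space-efficiency of the nondeterministic procedure \texttt{SAT}$(\varphi)$ in Algorithm~\ref{NDAlgo}, and then invoke Savitch's theorem. The backbone is Theorem~\ref{thm:path_iff_sat}, which has already reduced satisfiability of $\varphi$ to a reachability problem in the finite graph $G_{\varphi}^{min}$: $\varphi$ is satisfiable if and only if some initialized minimal $\varphi$-row $\row_2$ containing an occurrence of $\varphi$ is reachable from an initialized minimal $\varphi$-row of length $1$. Since the graph is finite, if such a path exists then one exists of length at most $|\RowsMIN|-1 \leq M-1$, and the algorithm simulates precisely such a bounded-length nondeterministic traversal.

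For correctness, I would first observe that the initial row picked at line~1 is a legal element of $\RowsMIN$: any atom $A$ with $\reqD(A)=\emptyset$ is an initialized $\varphi$-row of length $1$, and it is trivially minimal. At each iteration, the successor $\row'=\SuccMIN(\row,A)$ produced at line~7 is, by Definition~\ref{def:equivalencegraph}, precisely an edge of $G_{\varphi}^{min}$; moreover, since $\reqD(A)=\emptyset$, $\row'$ has $\reqD(\row'[0])=\emptyset$, so $\row'$ is initialized. By Theorem~\ref{thm:path_iff_sat}, the procedure accepts along some computation branch if and only if $\varphi$ is satisfiable, as soon as the counter bound $M-1$ is large enough to cover the length of the shortest path, which holds because $|\RowsMIN|\leq M=2^{3|\varphi|^2}$.

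The main obstacle is the space analysis, and in particular showing that the single step $\row'\gets\SuccMIN(\row,A)$ can be carried out in polynomial space. Here I would use Lemma~\ref{lem:minimal_representative}: every minimal $\varphi$-row has length $O(|\varphi|^2)$, so the current row $\row$ is written in $O(|\varphi|^3)$ space (polynomial-many atoms, each a subset of $\CL(\varphi)$). The successor $\SUCC(\row,A)$ has length $|\row|+1$, and, by Definition~\ref{def:rownext} together with Lemma~\ref{lem:functional}, can be computed atom-by-atom in polynomial space. Collapsing the resulting $\varphi$-row to its unique minimal representative (Lemma~\ref{lem:minimal_representative}) amounts, by Definition~\ref{def:equivalence_class}, to scanning the maximal factorization $A_0^{m_0}\cdots A_n^{m_n}$ and replacing each exponent $m_i$ by $\min(m_i,\rank(A_i)+1)$; each $\rank(A_i)$ is bounded by $|\varphi|$ and computable directly from $\reqD(A_i)$, so the whole reduction fits in polynomial space.

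It remains to track the global memory footprint. The counter $step$ ranges over $[0,M-1]$ and can be stored in $\lceil\log_2 M\rceil+1=O(|\varphi|^2)$ bits; at any moment only the current row $\row$ and the guessed successor $\row'$ are kept in memory, each of polynomial size. The test at line~2 is a straightforward scan over the $O(|\varphi|^2)$ atoms of $\row$. Therefore \texttt{SAT}$(\varphi)$ is an $\NPsp$ algorithm, and Savitch's theorem ($\NPsp=\Psp$) yields the claimed $\Psp$ upper bound. The only subtlety I expect to need real care with is the space-efficient computation of $\SuccMIN$, where one must be explicit that intermediate words are never expanded beyond polynomial length; all other ingredients follow directly from the lemmas already proved.
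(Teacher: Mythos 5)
Your proposal is correct and follows essentially the same route as the paper: both reduce satisfiability to bounded-length reachability in $G_{\varphi}^{min}$ via Theorem~\ref{thm:path_iff_sat}, analyze Algorithm~\ref{NDAlgo} as keeping only the step counter (in $O(|\varphi|^2)$ bits) and two minimal rows (polynomial-size by Lemma~\ref{lem:minimal_representative}) in memory, and conclude by $\NPsp=\Psp$. Your explicit account of how $\SuccMIN$ is computed in polynomial space --- building $\SUCC(\row,A)$ atom-by-atom via Lemma~\ref{lem:functional} and then capping each exponent $m_i$ at $\rank(A_i)+1$ --- is a correct elaboration of a step the paper leaves implicit.
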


\subsection{The easy adaptation to \hsDhomStrict{}}\label{sec:SATStrict}

We conclude the section by sketching the changes to the previous notions that allow us to prove the decidability of the satisfiability problem for \hsDhomStrict{} over finite linear orders. 
As a  matter of fact, it suffices to replace the definitions
of $\genDphi$,  $\varphi$-$\row$, and $\SUCC$
by the following ones.
For the sake of simplicity, we introduce a dummy atom $\boxdot$, for which we assume $\reqD(\boxdot)=\obsD(\boxdot)=\emptyset$.

\begin{defi}\label{def:d_generatorS}
Given the $\varphi$-atoms $A_1, A_3,A_4\in\Atoms$ and $A_2\in\Atoms\cup\{\boxdot\}$, we say that
$A_4$ is \mbox{$\Dphi\ssubint$-generated} by $A_1,A_2,A_3$
(written $A_1,A_2, A_3\genDphiS A_4$) if:
\begin{itemize}
    \item $A_4\cap\AP = A_1 \cap A_3 \cap \AP$ and
    \item $\reqD(A_4)=\reqD(A_1)\cup \reqD(A_3) \cup \obsD(A_2 )$.
\end{itemize}
\end{defi}
The intuition behind Definition \ref{def:d_generatorS} is that if an interval $[x,y]$, with $x<y$, is labeled by $A_4$, and its three sub-intervals $[x,y-1]$, $[x+1,y-1]$, and $[x+1,y]$ are labeled by $A_1,A_2$, and $A_3$, respectively, we require that $A_1,A_2, A_3\genDphiS A_4$ holds. In particular, in the special case where $x=y-1$, we have $A_2=\boxdot$ since $[x+1,y-1]$ is not a valid interval. Moreover, since the only strict sub-interval is $[x+1,y-1]$ (i.e., $[x+1,y-1]\,\ssubint\,[x,y]$), we require that  $\obsD(A_2)\subseteq\reqD(A_4)$. Finally, since the requests of $A_1$ and $A_3$ are fulfilled by a strict sub-interval of $[x,y]$, we require that $\reqD(A_1)\subseteq\reqD(A_4)$ and $\reqD(A_3)\subseteq\reqD(A_4)$.

\begin{defi}\label{def:rowS}
A \emph{$\varphi$-$\ssubint$-row $\row$} is a non-empty finite sequence of $\varphi$-atoms such that for all  $0\leq i<|\row|-1$,
$\reqD(\row[i]) \subseteq \reqD (\row[i+1])$
 and $(\row[i]\cap \AP)\supseteq (\row[i+1]\cap \AP)$. The $\varphi$-$\ssubint$-row $\row$ is \emph{initialized} if $\reqD(\row[0])=\emptyset$.

\end{defi}

\begin{defi}\label{def:rownextS}
Given a $\varphi$-atom $A$ and a  $\varphi$-$\ssubint$-row
$\row$, with $|\row| =n$, the \emph{$A$-$\ssubint$-successor of $\row$}, denoted by $\SUCCStrict(\row,A)$, is the sequence $B_0\ldots B_{n}$ of $\varphi$-atoms defined as follows:
$B_0= A$ and $\row[i]\row[i-1] B_i \genDphi B_{i+1}$ for all $i\in [0,n-1]$, 
with $\row[i-1]=\boxdot$ for $i=0$.
\end{defi}

Once the above-defined changes have been to the basic notions, following exactly the same steps
of the proof of Theorem~\ref{thm:pspace}, we can show that like \hsDhom{},  satisfiability for \hsDhomStrict{} over finite linear orders is in $\Psp$.

\begin{thm}\label{thm:SATStrict}
The satisfiability problem for \hsDhomStrict-formulas over finite linear orders is in $\Psp$.
\end{thm}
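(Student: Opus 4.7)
The plan is to replay the architecture of the proof of Theorem~\ref{thm:pspace}, replacing the binary generator $\genDphi$ by the ternary $\genDphiS$ of Definition~\ref{def:d_generatorS} and the successor $\SUCC$ by $\SUCCStrict$ of Definition~\ref{def:rownextS}. Since the fulfilling-compass characterization of Proposition~\ref{prop:satiffcompass} is already stated uniformly for both logics, the first step is to prove the strict analogue of Lemma~\ref{lem:compass_hom_gen}: a weak compass $\varphi$-structure $\cG=(\bbP_\bbD,\cL)$ is homogeneous and fulfilling for \hsDhomStrict\ iff $\reqD(\cL(x,x))=\emptyset$ for every diagonal point and, for every $(x,y)$ with $x<y$, $\cL(x,y-1),\cL(x+1,y-1),\cL(x+1,y)\genDphiS\cL(x,y)$ (with $\cL(x+1,y-1)$ read as the dummy $\boxdot$ when $x+1>y-1$). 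The argument mirrors Lemma~\ref{lem:compass_hom_gen}: homogeneity forces the proposition-letter clause of Definition~\ref{def:d_generatorS}, while every strict sub-interval of $[x,y]$ is either the interval $[x+1,y-1]$ itself or a strict sub-interval of $[x,y-1]$ or of $[x+1,y]$, so the requests of $\cL(x,y)$ decompose exactly as $\reqD(\cL(x,y-1))\cup\reqD(\cL(x+1,y))\cup\obsD(\cL(x+1,y-1))$.

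Next I would carry over the notion of $\varphi$-$\ssubint$-row (Definition~\ref{def:rowS}): the pointwise $\Dphi$-consistency constraint is relaxed to the two monotonicities $\reqD(\row[i])\subseteq\reqD(\row[i+1])$ and $(\row[i]\cap\AP)\supseteq(\row[i+1]\cap\AP)$, which are precisely the properties of the slice $\cL(y,y)\cdots\cL(0,y)$ of a homogeneous fulfilling strict compass. The strict analogues of Lemmas~\ref{lem:compas_implies_rowA}, \ref{lem:compas_implies_rowB}, and~\ref{lem:row_successor} then follow immediately, the last now using $\SUCCStrict$ in place of $\SUCC$. Since the rank-based equivalence $\sim$, the notion of minimal row, and Lemma~\ref{lem:minimal_representative} depend only on these two monotonicities, minimal $\varphi$-$\ssubint$-rows still have length $O(|\varphi|^{2})$ and the number of equivalence classes is still bounded by $2^{3|\varphi|^{2}}$.

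The hard part will be the strict analogue of Lemma~\ref{lem:succ_preserves_equiv}: if $\row\sim\row'$ then $\SUCCStrict(\row,A)\sim\SUCCStrict(\row',A)$. The delicate point is that $\SUCCStrict$ reads a two-atom sliding window $\row[i],\row[i-1]$ of the input row, whereas $\sim$ is controlled blockwise through the maximal factorization. I would first handle a pure block $B^{m}$ as in Lemma~\ref{lem:succ_preserves_prelimin}: on inner positions both input arguments of the ternary generator equal $B$, so the recurrence reads $B_{i+1}\cap\AP=B\cap B_i\cap\AP$ and $\reqD(B_{i+1})=\reqD(B)\cup\reqD(B_i)\cup\obsD(B)$, which is monotone in exactly the same way as the binary recurrence. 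Hence $\SUCCStrict(B^{m},A)$ is again of the form $A\,A_1\cdots A_k^{\ell}$, with a tail $A_k^{\ell}$ that stabilizes once $m$ exceeds $\rank(A_k)$. For a general row, the boundary between two consecutive blocks $B^{m}C^{m'}$ introduces a single ``mixed'' generator step on the pair $(C,B)$ whose outcome depends only on $B$, $C$, and the current $B_i$, not on the multiplicities, so it is identical in $\row$ and in the equivalent $\row'$. An induction on the number of distinct atoms, combined with the obvious strict analogue of Lemma~\ref{lem:row_concatenation}, then propagates the equivalence through $\SUCCStrict$.

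With these pieces in hand, the finite graph $G_{\varphi}^{min}$ of Definition~\ref{def:equivalencegraph} is built with $\SuccMIN$ induced by $\SUCCStrict$, the strict analogue of Theorem~\ref{thm:path_iff_sat} is proved by the same forward/backward argument, and Algorithm~\ref{NDAlgo} runs unchanged. Since the bounds on the length of minimal rows and on the number of equivalence classes are identical to those for \hsDhom, the procedure uses polynomial working space, and $\NPsp=\Psp$ yields the claim.
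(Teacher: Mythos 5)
Your proposal is correct and follows essentially the same route as the paper, which itself only sketches this result by stating that the modified definitions of $\genDphiS$, $\varphi$-$\ssubint$-row, and $\SUCCStrict$ allow one to replay the proof of Theorem~\ref{thm:pspace} step by step. You in fact supply more detail than the paper does, in particular the correct decomposition of strict sub-intervals of $[x,y]$ into $[x+1,y-1]$ and strict sub-intervals of $[x,y-1]$ or $[x+1,y]$, and the block-boundary analysis for the two-atom sliding window in $\SUCCStrict$.
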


$\Psp$-completeness of the satisfiability problem for \hsDhom{} and \hsDhomStrict{}  
will be proved in Section~\ref{sec:hardness}. In the next section, we focus on
their model checking problem. 

\section{Model checking of \hsDhom{} and \hsDhomStrict{} formulas over Kripke structures} \label{sec:mc}

In this section, we deal with the \emph{model checking} 
problem for \hsDhom\ and \hsDhomStrict, namely, the problem of checking whether  a model of a given system satisfies some behavioural properties expressed as \hsDhom- or \hsDhomStrict-formulas.
The usual models are \emph{Kripke structures},
which will now be introduced along with the definition of the semantics of \hsDhom{} and \hsDhomStrict{} formulas over them.

\begin{defi}
A \emph{finite Kripke structure} is a tuple $\Ku=(\AP,\States, \Edges,\mu,\sinit)$, where $\AP$ is a finite set of proposition letters, $\States$ is a finite set of states,
$\Edges\subseteq \States\times \States$ is a left-total binary relation over $\States$,
$\mu :\States\to 2^\AP$ is a  labelling function over $\States$, and $\sinit\in \States$ is the initial state.
\end{defi}

For all $s\in \States$, $\mu(s)$ is the set of proposition letters that hold on $s$,
while $\Edges$ is the transition relation that describes the evolution of the system over time.

\begin{figure}[tb]
\centering
\begin{tikzpicture}[->,>=stealth,thick,shorten >=1pt,auto,node distance=2cm,every node/.style={circle,draw}]
    \node [style={double}](v0) {$\stackrel{s_0}{p}$};
    \node (v1) [right of=v0] {$\stackrel{s_1}{q}$};
    \draw (v0) to [bend right] (v1);
    \draw (v1) to [bend right] (v0);
    \draw (v0) to [loop left] (v0);
    \draw (v1) to [loop right] (v1);
\end{tikzpicture}
\caption{Kripke structure $\mathpzc{K}_2$.}\label{KEquiv}
\end{figure}
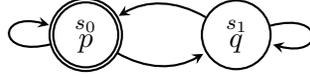

Figure~\ref{KEquiv} depicts a finite Kripke structure $\mathpzc{K}_2 = (\{p,q\},\{s_0,s_1\},\Edges,\mu,\sinit)$,
with
$\Edges=\{(s_0,s_0), (s_0,s_1), (s_1,s_0), (s_1,s_1)\}$,
$\mu(s_0)=\{p\}$, and $\mu(s_1)=\{q\}$.
The initial state $\sinit$ is identified by a double circle.

\begin{defi}[Paths and traces]
Given  a finite Kripke structure $\Ku=(\AP,\States, \Edges,\mu,\sinit)$, a \emph{path of $\Ku$} is a non-empty finite sequence of states
$\rho =s_1\cdots s_n$  such that $(s_i,s_{i+1})\in \Edges$ for $i = 1,\ldots ,n-1$. A path is \emph{initial} if it starts from the initial state
of $\Ku$.

We extend the labeling $\mu$ to paths of $\Ku$ in the usual way: for a path $\rho=s_1\ldots s_n$, $\mu(\rho)$ denotes the word over $2^{\AP}$
of length $n$ given by $\mu(s_1)\ldots \mu(s_n)$. A \emph{trace} of $\Ku$ is a non-empty finite word  over $2^{\AP}$ of the form $\mu(\rho)$ for some path
$\rho$ of $\Ku$. A trace is \emph{initial} if it is of the form  $\mu(\rho)$ for some initial path $\rho$ of  $\Ku$.
\end{defi}

%

Given a non-empty finite word $w$ over $2^{\AP}$, we can associate with $w$,  in a natural way,  a homogeneous interval model $\bM(w)$ over the finite linear order induced by $w$.

\begin{defi}\label{def:inducedmodel}
For a non-empty finite word $w$ over $2^{\AP}$,
the interval model
$\bM(w)=\langle\mathbb{I(S)},\cV\rangle$ \emph{induced by $w$} is the homogeneous interval model defined as follows:
\begin{enumerate}
    \item $\mathbb{S} = \langle S, <\rangle$, where $S=\{0,\ldots,|w|-1\}$, and
    \item for every interval $[x,y]\in \mathbb{I(S)}$ and $p\in\AP$, $[x,y]\in\cV(p)$ if and only if $p\in w[x']$ for~all~$x'\in [x,y]$.
\end{enumerate}
\end{defi}

\begin{defi}[Model checking of \hsDhom{} and \hsDhomStrict{} formulas] Let $\varphi$ be \hsDhom-formula (resp., \hsDhomStrict-formula).
Given a non-empty finite word $w$ over $2^{\AP}$,   \emph{$w$ satisfies $\varphi$}, denoted by $w\models \varphi$, if $\bM(w),[0,|w|-1]\models\varphi$.
A finite Kripke structure $\Ku$ over $\AP$ is \emph{a model of the formula $\varphi$}  if
for each initial trace $w$ of $\Ku$, it holds that $w\models \varphi$.
%
The \emph{model-checking} (\emph{MC}) \emph{problem}  for \hsDhom{} (resp., \hsDhomStrict{}) is
the problem of deciding for a given finite Kripke structure $\Ku$ and \hsDhom-formula (resp., \hsDhomStrict-formula) $\varphi$,  whether $\mathpzc{K}\models \varphi$.
\end{defi}


\begin{exa}\label{exKrSched}

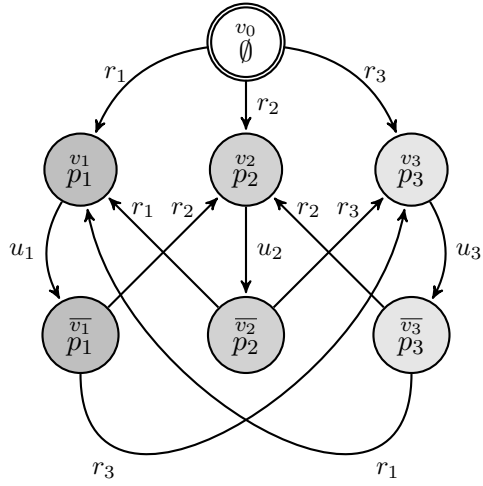
\begin{figure}[tb]
\centering
\begin{tikzpicture}[->,>=stealth',shorten >=1pt,auto,node distance=2.2cm,thick,main node/.style={circle,draw}]

  \node[main node,style={double}] (1) {$\stackrel{v_0}{\emptyset}$};
  \node[main node,fill=gray!35] (3) [below=0.7cm of 1] {$\stackrel{v_2}{p_2}$};
  \node[main node,fill=gray!50] (2) [left of=3] {$\stackrel{v_1}{p_1}$};
  \node[main node,fill=gray!20] (4) [right of=3] {$\stackrel{v_3}{p_3}$};
  \node[main node,fill=gray!50] (5) [below of=2] {$\stackrel{\overline{v_1}}{p_1}$};
  \node[main node,fill=gray!35] (6) [below of=3] {$\stackrel{\overline{v_2}}{p_2}$};
  \node[main node,fill=gray!20] (7) [below of=4] {$\stackrel{\overline{v_3}}{p_3}$};

  \path[every node/.style={font=\small}]
    (1) edge [bend right] node[left] {$r_1$} (2)
        edge node {$r_2$} (3)
        edge [bend left] node[right] {$r_3$} (4)
    (2) edge [bend right] node [left] {$u_1$} (5)
    (3) edge node {$u_2$} (6)
    (4) edge [bend left] node [right] {$u_3$} (7)
    (5) edge node[very near end,left] {$r_2$} (3)
    (5) edge [out=270,in=260,looseness=1.3] node [near start,swap] {$r_3$} (4)
    (6) edge node[very near end,right] {$r_1$} (2)
    (6) edge node[very near end,left] {$r_3$} (4)
    (7) edge [out=270,in=280,looseness=1.3] node [near start] {$r_1$} (2)
    (7) edge node[very near end,right] {$r_2$} (3)
    ;
\end{tikzpicture}
\vspace{-1cm}
\caption{The Kripke structure $\mathpzc{K}_{Sched}$ of Example~\ref{exKrSched}.}\label{KSched}
\end{figure}

In Figure~\ref{KSched}, we give an example of a finite Kripke structure $\mathpzc{K}_{Sched}$ that models the behaviour of a scheduler serving three processes which are continuously requesting the use of a common resource. The initial state
is $v_0$: no process is served in that state. In the states $v_i$ and $\overline{v}_i$, with $i \in \{1,2,3\}$, the $i$-th process is served (this is denoted by the fact that $p_i$ holds in those states). For the sake of readability, edges are marked either by $r_i$, for $request(i)$, or by $u_i$, for $unlock(i)$. Edge labels do not have a semantic value, that is, they are neither part of the structure definition, nor proposition letters; they are simply used to ease reference to edges.
Process $i$ is served in state $v_i$, then, after ``some time'', a transition $u_i$ from $v_i$ to $\overline{v}_i$ is taken; subsequently, process $i$ cannot be served again immediately, as $v_i$ is not directly reachable from $\overline{v}_i$ (the scheduler cannot serve the same process twice in two successive rounds). A transition $r_j$, with $j\neq i$, from $\overline{v}_i$ to $v_j$ is then taken and process $j$ is served. This structure can be easily generalised to an arbitrary number of processes.

We now show how some meaningful behavioural properties of
the Kripke structure $\mathpzc{K}_{Sched}$
can be expressed by \hsDhom-formulas. 

Preliminarily, we observe that the formula $len_{\geq i} := \D^{i-1} \top$ is satisfied by traces at least $i$ units long, and, analogously, $len_{\leq i} := \bD^i \bot$ by traces at most $i$ units long. We define $len_{=i}:=len_{\geq i} \wedge len_{\leq i}$.

In all the following formulas, we force the validity of the considered properties over all legal computation sub-intervals by using the modality $\bD$ (all computation sub-intervals are sub-intervals of at least one initial trace of the Kripke structure).

The first formula requires that \lq\lq at least 2 processes are witnessed in any sub-interval of length at least 5 of an initial trace\rq\rq. Since a process cannot be executed twice in a row, such a formula is satisfied by $\mathpzc{K}_{Sched}$:
\[
\Ku_{Sched}\models\bD\big( len_{\geq 5} \rightarrow \bigvee_{1\leq i<j \leq 3} (\D p_i \; \wedge \; \D p_j)\big).
\]

The second formula requires that \lq\lq in any sub-interval of length at least 11 of an initial trace, process 3 is executed at least once in some states\rq\rq{} (\emph{non starvation}). $\mathpzc{K}_{Sched}$ does not satisfy it, because the scheduler can postpone the execution of a process ad libitum:
\[
\Ku_{Sched}\not\models \bD\big( len_{\geq 11} \rightarrow \D p_3 \big).
\]

The third formula requires that \lq\lq in any sub-interval of length at least 6 of an initial trace, $p_1,p_2$, and $p_3$ are all witnessed\rq\rq. The only way to satisfy this property would be to force the scheduler to execute the three processes in a strictly periodic manner (\emph{strict alternation}), that is, $p_i p_j p_k p_i p_j p_k p_i p_j p_k\cdots$, for $i,j,k \in \{1,2,3\}$ and $i \neq  j \neq  k \neq i$, but $\mathpzc{K}_{Sched}$ does not meet such a
requirement:
\[
\Ku_{Sched}\not\models \bD\big(len_{\geq 6}\rightarrow (\D p_1\; \wedge \; \D p_2\; \wedge \D p_3 )\big).
\]

Finally, we write two formulas expressing \emph{safety} properties: \lq\lq it is never the case that processes 1 and 2 are executed consecutively\rq\rq, and \lq\lq it is never the case that a state where process 1 is executed, is reached\rq\rq. Neither of these is satisfied:
\[
\Ku_{Sched}\not\models \bD\big(len_{=4}\rightarrow (\neg\D p_1 \; \vee \neg\D p_2)\big),
\]
\[
\Ku_{Sched}\not\models\bD \neg p_1.
\]
\end{exa} 

We now show how, by slightly modifying the satisfiability checking procedure given in Section~\ref{sec:decidability}, it is possible to obtain an automata-theoretic MC algorithm for \hsDhom-formulas  over finite Kripke structures $\Ku$.
Let $G_{\varphi}^{min}=(\RowsMIN,\RelMIN)$ be the finite graph  of Definition~\ref{def:equivalencegraph} associated with the \hsDhom-formula $\varphi$. We first show that it is possible to construct a standard  deterministic finite automaton (DFA) $\tilde{\mathcal{N}}_\varphi$ over the alphabet $2^{\AP}$ with set of states $\RowsMIN$, which accepts all and only the non-empty finite words over $2^{\AP}$ that satisfy formula $\varphi$.
Next, given a finite Kripke structure $\Ku$ and a \hsDhom-formula $\varphi$, to check whether $\Ku$ is a model of $\varphi$, we apply the standard approach to MC taking the synchronous product  of $\Ku$ with the automaton $\tilde{\mathcal{N}}_{\neg\varphi}$ for the negation of the formula $\varphi$ ($\Ku\times \tilde{\mathcal{N}}_{\neg\varphi}$ for short).
$\Ku\times \tilde{\mathcal{N}}_{\neg\varphi}$ accepts all and only the initial traces of $\Ku$ that violate the property $\varphi$. Hence,
$\Ku$ is a model of $\varphi$ if and only if the language accepted by $\Ku\times \tilde{\mathcal{N}}_{\neg\varphi}$ is empty. 

We now provide the technical details.

A nondeterministic finite-state automaton (NFA) is a tuple $\mathcal{N} =(\Sigma,Q,q_1,\delta,F)$, where $\Sigma$ is a finite alphabet, $Q$ is a finite set of states, $q_1\in Q$ is the \emph{initial} state, $\delta: Q\times \Sigma \to  2^Q$ is the transition function, and $F\subseteq Q$ is the set of \emph{accepting} states. Given a finite word $w$ over $\Sigma$, with $|w|=n$, a computation of $\mathcal{N}$ over $w$ is a finite sequence of states $q_1',\ldots,q_{n+1}'$ such that $q_1'=q_1$, and for all $i\in [0,n-1]$, $q_{i+2}' \in \delta(q_{i+1}',w[i])$. The language $\mathcal{L}(\mathcal{N})$  accepted by $\mathcal{N}$ consists of the finite words $w$ over $\Sigma$ such that there is a computation over $w$ ending in some accepting state.
A deterministic finite-state automaton (DFA) is an NFA $\tilde{\mathcal{N}} =(\Sigma,\tilde{Q},\tilde{q_1},\tilde{\delta},\tilde{F})$ such that for all $(q,c)\in \tilde{Q}\times \Sigma$, $\tilde{\delta}(q,c)$ is a singleton.

Let  $\Ku=(\AP,\States, \Edges,\mu,\sinit)$ be a finite Kripke structure and $\mathcal{N} =(2^{\AP},Q,q_1,\delta,F)$ be an NFA\@. The \emph{synchronous product}  of $\Ku$ and $\mathcal{N}$ (denoted by $\Ku\times \mathcal{N}$) is the NFA $(2^{\AP},\States\times Q,(\sinit,q_1),\delta', \States\times F)$, where for all $(s,q)\in \States\times Q$ and $P\in 2^{\AP}$, $\delta'((s,q),P)=\emptyset$ if $P\neq \mu(s)$, and $\delta'((s,q),P)$ is the set of pairs $(s',q')\in \States \times Q$
such that $(s,s')\in \Edges$ and $q'\in \delta(q,P)$ otherwise. 
It can be easily seen that $\Ku\times \mathcal{N}$ accepts all and only the initial traces of $\Ku$ which are accepted by $\mathcal{N}$.

Let $\varphi$ be a \hsDhom-formula and $G_{\varphi}^{min}=(\RowsMIN,\RelMIN)$ be the finite graph of Definition~\ref{def:equivalencegraph}, where $\RowsMIN$ is the set of initialized minimal $\varphi$-rows and $\row\,\RelMIN\,\row'$  if and only if  $\row' = \SuccMIN(\row,\row'[0])$.
\begin{defi}\label{def:automatonForD}
Let $P\subseteq \AP$ be a set of proposition letters and $\varphi$ be a \hsDhom-formula. We denote by $A(P)$ the unique $\varphi$-atom such that $A(P)\cap\AP=P$ and $\reqD(A(P))=\emptyset$.
We associate with $\varphi$ the DFA $\tilde{\mathcal{N}}_\varphi =(2^{\AP},\RowsMIN\cup \{\tilde{q_1}\},\tilde{q_1},\tilde{\delta},\tilde{F})$, where $\tilde{\delta}$ and $\tilde{F}$ are defined as follows:
\begin{itemize}
\item $\tilde{\delta}(\tilde{q_1},P)= A(P)$, for all $P\in 2^{\AP}$;
\item $\tilde{\delta}(\row, P) = \SuccMIN(\row,A(P))$, for all $P\in 2^{\AP}$ and $\row\in \RowsMIN$;
\item $\tilde{F}$ is the set of $\varphi$-row $\row\in \RowsMIN$ such that $\varphi\in \row[n-1]$, where $n=|\row|$.
\end{itemize}
\end{defi}

By exploiting Theorem~\ref{thm:path_iff_sat}, we get the following result, that outlines an automata-theoretic approach to MC for \hsDhom.

  \begin{thm}\label{thm:AutomatonFormula}
Let $\varphi$ be a \hsDhom-formula. Then, the DFA $\tilde{\mathcal{N}}_\varphi$ accepts all and only the non-empty finite words over $2^{\AP}$ that satisfy $\varphi$.
\end{thm}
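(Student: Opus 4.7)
The plan is to show that the computation of $\tilde{\mathcal{N}}_\varphi$ on a non-empty word $w$ over $2^{\AP}$ faithfully reconstructs, row by row and modulo $\sim$, the canonical homogeneous fulfilling compass $\varphi$-structure $\cG_w = (\bbP_\bbD, \cL)$ induced by $\bM(w)$. Once this is established, $w$ being accepted amounts to $\varphi$ belonging to $\cL(0,|w|-1)$, which by Proposition~\ref{prop:compassstructure} is equivalent to $\bM(w),[0,|w|-1]\models\varphi$, i.e., $w \models \varphi$.

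Fix $w = P_0 \cdots P_{n-1}$ with $n = |w|$, let $\cG_w$ be the compass obtained by setting $\cL(x,y)$ to be the set of formulas in $\CL(\varphi)$ that are true at $[x,y]$ in $\bM(w)$, and let $\row_y$ denote the $y$-th row of $\cG_w$. The key initial observation is that $\cL(x,x) = A(P_x)$: indeed, $\reqD(\cL(x,x))=\emptyset$ (since $\cG_w$ is fulfilling), $\cL(x,x)\cap \AP = P_x$ by Definition~\ref{def:inducedmodel}, and Proposition~\ref{prop:unique} then forces the identity. In particular $\row_0 = A(P_0)$, and Lemma~\ref{lem:row_successor} gives $\row_{y+1} = \SUCC(\row_y, A(P_{y+1}))$ for every $y < n-1$.

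Next, I would prove by induction on $y \in [0,n-1]$ that, after reading $w[0]\cdots w[y]$, the DFA state $r_y$ is the unique minimal representative of $[\row_y]_\sim$ (whose existence is guaranteed by Lemma~\ref{lem:minimal_representative}). The base case is $r_0 = \tilde{\delta}(\tilde{q_1},P_0) = A(P_0) = \row_0$, which is already minimal. For the inductive step, from $r_y \sim \row_y$ and Lemma~\ref{lem:succ_preserves_equiv} we obtain $\SUCC(r_y, A(P_{y+1})) \sim \SUCC(\row_y, A(P_{y+1})) = \row_{y+1}$, so the transition value $r_{y+1} = \SuccMIN(r_y, A(P_{y+1}))$ is precisely the unique minimal representative of $[\row_{y+1}]_\sim$.

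To conclude, $\tilde{\mathcal{N}}_\varphi$ accepts $w$ iff $\varphi$ belongs to the last atom of $r_{n-1}$. Here the crucial point is that $\sim$ preserves the sequence of distinct atoms in the maximal factorization (only exponents may vary), so the \emph{last} atom of a row is a $\sim$-invariant; hence the last atom of $r_{n-1}$ equals the last atom of $\row_{n-1}$, which is exactly $\cL(0,n-1)$. Combining everything, acceptance is equivalent to $\varphi \in \cL(0,n-1)$, i.e., $w \models \varphi$. I expect the main delicate point to be precisely this $\sim$-invariance of the last atom, since $r_{n-1}$ may be much shorter than $\row_{n-1}$ (whose length is $n$, while $|r_{n-1}| = O(|\varphi|^2)$); however, it is read off immediately from the maximal factorization form in Definition~\ref{def:equivalence_class}.
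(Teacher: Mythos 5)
Your proof is correct and follows essentially the same route as the paper's: the paper establishes the same correspondence between DFA states and the minimal $\sim$-representatives of the rows of the canonical compass structure, invoking (the proofs of) Theorem~\ref{thm:path_iff_sat} for each direction together with Lemma~\ref{lem:succ_preserves_equiv}, and likewise relies on the fact that $\sim$ preserves the sequence of distinct atoms, hence the last atom. Your single induction on the prefix length merely packages the two implications into one invariant, and your identification of the last-atom $\sim$-invariance as the one delicate point is accurate.
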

\begin{proof}
Let $w$ be a non-empty finite word over $2^{\AP}$ and $n=|w|-1$. We show that $\bM(w),[0,n]\models\varphi$ if and only if $w\in\mathcal{L}(\tilde{\mathcal{N}}_\varphi)$.

($\Rightarrow$)
Let us assume that $\bM(w),[0,n]\models\varphi$. We define $\cG$ as the weak compass $\varphi$-structure $(\bbP_\bbD,\cL)$, where $\bbD =(\{0,\ldots,n\},<)$ and, for all points $(x,y)$, $\cL(x,y)$ is the set of formulas $\psi\in\CL(\varphi)$ such that $\bM(w),[x,y]\models \psi$.
By the semantics of \hsDhom{}, $\cG$ is a homogeneous fulfilling compass $\varphi$-structure. For all $i\in [0,n]$, let $\row_i$ be the $i^{th}$ $\varphi$-row of  $\cG$. By construction, $\varphi\in \row_n[n]$ and $\row_i[0]=A(w[i])$, for all $i\in [0,n]$. By the proof of Theorem~\ref{thm:path_iff_sat} (right implication), there exist  $n+1$ initialized  minimal $\varphi$-rows $\row_0^{min},\ldots,\row_n^{min}$ such that $\row_0^{min}=\row_0$, $\row_i^{min}$ is equivalent to $\row_i$, for all $i\in [0,n]$, and $\row^{min}_{i+1}=\SuccMIN(\row^{min}_i,\row^{min}_{i+1}[0])$, for all $i\in [0,n-1]$.
It follows that the last atom of $\row_n^{min}$ contains $\varphi$ and $\row_i^{min}[0]=A(w[i])$, for all $i\in [0,n]$. By Definition~\ref{def:automatonForD}, it follows that there is an accepting computation of $\tilde{\mathcal{N}}_\varphi$ over $w$, that is, $w\in\mathcal{L}(\tilde{\mathcal{N}}_\varphi)$.

($\Leftarrow$)
Let us assume that $w$ is accepted by $\tilde{\mathcal{N}}_\varphi$. By Definition~\ref{def:automatonForD}, there are $n+1$ initialized  minimal $\varphi$-rows $\row_0^{min},\ldots,\row_n^{min}$ such that $\row_0^{min}=A(w(0))$, $\varphi \in \row_n^{min}[n]$, $\row_i^{min}[0]=A(w(i))$, for all $0\leq i\leq n$,  and $\row^{min}_{i+1}=\SuccMIN(\row^{min}_i,\row^{min}_{i+1}[0])$ for all $0\leq i<n$. By the proof of Theorem~\ref{thm:path_iff_sat} (left implication), there is a fulfilling homogeneous compass $\varphi$-structure  $\cG=(\bbP_\bbD, \cL)$, with $S= \{0,\ldots, n\}$, such that, for all $0\leq j<n$, its $j^{th}$ row $\row_j$ is equivalent to $\row_j^{min}$.
Hence, $\row_i[0]=A(w(i))$, for all $0\leq i\leq n$, and the last atom of $\row_n$ contains $\varphi$. Since in a homogeneous interval model the valuation function is completely specified by the values taken at the singleton intervals, it follows that $\bM(\cG)=\bM(w)$. Moreover, by Proposition~\ref{prop:compassstructure}, $\bM(\cG),[0,n]\models\varphi$. Hence,
   $\bM(w),[0,n]\models\varphi$ and the thesis follows.
\end{proof}

By Theorem~\ref{thm:AutomatonFormula}, we get the main result of the section.

\begin{thm}\label{thm:pspaceMC}
The MC problem for \hsDhom-formulas (resp., \hsDhomStrict{}-formulas) over finite linear orders is in $\Psp$. 
For constant-length formulas, it is in $\NLOGSP$.
\end{thm}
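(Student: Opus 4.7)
The plan is to obtain the $\Psp$ upper bound by combining Theorem~\ref{thm:AutomatonFormula} with the standard automata-theoretic approach to model checking and an on-the-fly emptiness check. First, observe that $\Ku\models \varphi$ if and only if no initial trace of $\Ku$ is accepted by the DFA $\tilde{\mathcal{N}}_{\neg\varphi}$, which, by Theorem~\ref{thm:AutomatonFormula}, is equivalent to emptiness of the synchronous product $\Ku\times \tilde{\mathcal{N}}_{\neg\varphi}$. Hence the complement of the MC problem reduces to non-emptiness of the product NFA.

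The main obstacle is that $\tilde{\mathcal{N}}_{\neg\varphi}$ has $|\RowsMIN|+1\le 2^{O(|\varphi|^2)}$ states, so we cannot build it explicitly. We work around this by checking non-emptiness on the fly, exploiting $\NPsp=\Psp$: nondeterministically guess a computation of the product automaton one step at a time, maintaining only the current state and a step counter. By Lemma~\ref{lem:minimal_representative}, any minimal initialized $\neg\varphi$-row admits a polynomial-size encoding (in terms of its maximal factorization $A_0^{m_0}\cdots A_n^{m_n}$ with $n=O(|\varphi|)$ and $m_i\leq |\varphi|$), so the current configuration $(s,\row)\in \States\times \RowsMIN$ fits in space $O(\log |\States|+|\varphi|^{2})$. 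The transition from $(s,\row)$ to $(s',\row')$ amounts to: guess $s'$ with $(s,s')\in \Edges$, set $P=\mu(s')$, compute $A(P)$, and compute $\row'=\SuccMIN(\row,A(P))$ by applying Definition~\ref{def:rownext} componentwise to $\row$ and then minimizing according to $\sim$. Both operations can be carried out in polynomial working space. The number of steps needed to witness acceptance is bounded by $|\States|\cdot|\RowsMIN|\leq |\States|\cdot 2^{O(|\varphi|^2)}$, and a binary counter of this magnitude fits in polynomial space as well. If an accepting state (of the form $(s,\row)$ with $\neg\varphi\in \row[|\row|-1]$) is reached within the budget, reject; otherwise accept. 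This yields a $\co\NPsp=\Psp$ decision procedure.

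For the \hsDhomStrict{} case, the very same algorithm applies, using the DFA induced by the variants of $\genDphiS$, $\varphi$-$\ssubint$-row and $\SUCCStrict$ from Subsection~\ref{sec:SATStrict}: all relevant size bounds (in particular the polynomial-size encoding of minimal rows and the exponential bound $2^{O(|\varphi|^2)}$ on the number of equivalence classes) carry over verbatim, because the adapted notions only change the local generation rule while preserving the monotonicity of $\reqD$ and the inclusion of $\AP$-components along a row that underlie Lemmata~\ref{lem:compas_implies_rowA} and~\ref{lem:minimal_representative}.

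Finally, when $|\varphi|$ is a constant, the number of $\varphi$-atoms, the size of any minimal $\varphi$-row, and the index of $\sim$ are all constants. Hence both $\row\in \RowsMIN$ and the step counter shrink to $O(\log|\States|)$ bits, the computation of $\SuccMIN$ takes constant time and space, and the whole nondeterministic simulation runs in logarithmic space. By $\NLOGSP=\co\NLOGSP$ (Immerman–Szelepcsényi), this gives the $\NLOGSP$ bound announced in the statement.
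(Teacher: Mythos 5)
Your proposal is correct and takes essentially the same route as the paper: both reduce $\Ku\models\varphi$ to emptiness of $\Ku\times\tilde{\mathcal{N}}_{\neg\varphi}$ via Theorem~\ref{thm:AutomatonFormula} and decide non-emptiness by nondeterministic on-the-fly reachability using space logarithmic in the (exponentially large) product, concluding by closure of $\NPsp$ and $\NLOGSP$ under complementation. Your write-up merely makes explicit some details the paper leaves implicit (the polynomial encoding of minimal rows, the step counter, and the on-the-fly computation of $\SuccMIN$).
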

\begin{proof}
 By Theorem~\ref{thm:AutomatonFormula},  given a finite Kripke structure $\Ku$ and a \hsDhom-formula $\varphi$,
$\Ku \not\models \varphi$ if and only if
the language accepted by
$\Ku\times \tilde{\mathcal{N}}_{\neg\varphi}$
is not empty.
Similarly to Algorithm~1 of Section~\ref{sec:decidability}, the problem of establishing whether $\mathcal{L}(\Ku\times \tilde{\mathcal{N}}_{\neg\varphi})\neq \emptyset$ can be solved by a nondeterministic algorithm which uses space logarithmic in the number of states of $\Ku\times \tilde{\mathcal{N}}_{\neg\varphi}$  and checks whether some accepting state is reachable from the initial one. Since the number of states in $\Ku\times \tilde{\mathcal{N}}_{\neg\varphi}$ is linear in the number of states of $\Ku$ and singly exponential in the length  of $\varphi$, and the complexity classes $\NPsp=\Psp$ and $\NLOGSP$ are closed under complementation, the result for \hsDhom{} directly follows. The MC procedure for \hsDhom{} can be easily adapted to \hsDhomStrict{} by making use of Definitions~\ref{def:d_generatorS}--\ref{def:rownextS}.
\end{proof}


%
 In the next section, we will prove that MC for \hsDhom-formulas and \hsDhomStrict-formulas  is $\Psp$-hard. 

\section{Hardness of MC and satisfiability checking of \hsDhom{} and \hsDhomStrict{} formulas over finite linear orders}\label{sec:hardness}

In this section, we provide lower bounds for MC and satisfiability checking for
\hsDhom{} (resp., \hsDhomStrict{}) over finite linear orders that match the upper bounds of Theorem~\ref{thm:pspaceMC} and
Theorem~\ref{thm:pspace} (resp., Theorem~\ref{thm:SATStrict}). 

 \begin{thm}\label{theorem:hardness}
MC and satisfiability checking for \hsDhom-formulas (resp., \hsDhomStrict-formulas) over finite linear orders  are both $\Psp$-hard.
\end{thm}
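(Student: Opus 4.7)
The plan is to give a polynomial-time reduction from a $\Psp$-complete bounded domino-tiling problem with rows of linear unary width $n$ to both satisfiability and model checking. A candidate tiling of an $n \times m$ grid is encoded as a finite word over $2^{\AP}$ representing the grid row by row, using one propositional letter per domino type, column markers $c_0,\ldots,c_{n-1}$ cycling cell by cell, and a row-boundary marker~$\sharp$. Using the length formulas $len_{\geq i}$ and $len_{=i}$ already introduced in Example~\ref{exKrSched}, I would form a conjunction $\varphi = \varphi_{\textup{wf}} \wedge \varphi_{\textup{hor}} \wedge \varphi_{\textup{ver}} \wedge \varphi_{\textup{init}}$, where $\varphi_{\textup{wf}}$ forces well-formedness of the encoding (exactly one domino per cell, cyclic column markers, row length~$n$ between consecutive $\sharp$s), $\varphi_{\textup{hor}}$ forbids each horizontally incompatible domino pair on two adjacent in-row cells, $\varphi_{\textup{ver}}$ enforces vertical color matching, and $\varphi_{\textup{init}}$ fixes the initial row. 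Satisfiability of $\varphi$ is then equivalent to the existence of a tiling.

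For the MC lower bound, I would factor the same idea differently, letting a Kripke structure $\Ku$ generate precisely the well-formed candidate encodings as its initial traces (this is a regular property, so a polynomial-size $\Ku$ suffices), and letting a formula $\psi$ hold on a trace iff that trace \emph{violates} some tiling constraint. Then $\Ku \models \psi$ iff no valid tiling exists, and $\Psp$-hardness of MC follows from $\Psp$-hardness of non-tileability together with closure of $\Psp$ under complementation.

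The main obstacle is writing $\varphi_{\textup{ver}}$. Vertical matching requires relating cells at the same column in consecutive rows, i.e.\ positions exactly $n+1$ apart in the encoding, and the $\D$ modality under homogeneity is essentially order-symmetric (it cannot directly name the ``left'' or ``right'' endpoint of an interval, since sub-interval containment is a symmetric relation). The planned trick is to concentrate on sub-intervals of length exactly $n+2$: by the cyclic column structure enforced by $\varphi_{\textup{wf}}$, each such interval straddles exactly one $\sharp$, and the unique column marker $c_j$ appearing \emph{twice} inside it labels precisely the two vertically adjacent cells that must be matched, while every other column marker appears at most once. Using $\D$ one can then quantify over all $(n+2)$-intervals and, for every forbidden vertical pair $(d,d')$ and every column $j$, forbid the co-occurrence of a $c_j \wedge d$ witness together with a $c_j \wedge d'$ witness inside. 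The order-symmetry of $\D$ does not hurt here because the vertical-mismatch constraint is itself symmetric in the two endpoints. The construction for \hsDhomStrict{} is analogous, with length parameters shifted by one to account for the strict sub-interval semantics.
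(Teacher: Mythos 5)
Your overall strategy --- a polynomial-time reduction from the linear-width domino-tiling problem, with the column index made explicit in the alphabet and the vertical constraint enforced by quantifying over sub-intervals of a fixed length spanning just over one row period --- is essentially the paper's. The well-formedness, horizontal, and initialization conjuncts, and the MC factoring through a polynomial-size Kripke structure that generates exactly the well-formed encodings, all go through much as in Subsections~\ref{sec:hardnessMC} and~\ref{sec:hardnessSAT}.

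The gap is in $\varphi_{\textup{ver}}$, precisely at the point where you dismiss the order-symmetry of $\D$. The vertical constraint is \emph{not} symmetric in the two endpoints: it requires $[d_\ell]_{\Up}=[d_{\ell'}]_{\Down}$ where $\ell$ is the \emph{earlier} (lower-row) of the two positions sharing the repeated column marker, and in general $[d]_{\Up}=[d']_{\Down}$ does not imply $[d']_{\Up}=[d]_{\Down}$. Inside your $(n+2)$-interval, $\D$ can only detect the unordered set $\{d,d'\}$ of domino types carrying the repeated marker $c_j$, not which of the two occurs first. Forbidding the co-occurrence for every ordered forbidden pair $(d,d')$ rejects legal tilings in which $d'$ happens to sit below $d$ with $[d']_{\Up}=[d]_{\Down}$; forbidding only pairs that are forbidden in both orders accepts illegal encodings. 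So $\varphi_{\textup{ver}}$ as described is neither sound nor complete. The paper resolves exactly this difficulty by tagging each row with an alternating bit $b\in\{0,1\}$ and recovering the relative order of the two matched positions from which \emph{other} cell indexes occur with which tag inside the window (requirement (C-Req) and the formula $\varphi_C$). You would need an analogous mechanism --- e.g.\ replacing or supplementing your $\sharp$ separator with an alternating row tag, or otherwise expressing that the $c_j\wedge d$ witness lies on a given side of the $\sharp$ via further nested fixed-length sub-intervals --- before the argument is correct; the same repair is needed in both the satisfiability and the model-checking reductions.
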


By a trivial reduction from the \emph{problem of \mbox{(non-)reachability}} of two nodes in a directed~graph, it easily follows that MC for \emph{constant-length} \hsDhom-formulas (resp., \hsDhomStrict{}-formulas) is $\NLOGSP$-hard. By taking into account the upper bounds given by Theorems~\ref{thm:pspaceMC}, \ref{thm:pspace}, and~\ref{thm:SATStrict},  we obtain the following corollary.

\begin{cor}
MC and satisfiability checking for \hsDhom-formulas (resp., \hsDhomStrict-formulas) over finite linear orders  are both  $\Psp$-complete. Moreover, when the length of
the formula is fixed, MC is $\NLOGSP$-complete.
\end{cor}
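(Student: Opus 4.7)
The plan is a polynomial-time reduction from a well-known $\Psp$-complete variant of the domino-tiling problem: given a tile set $T$, horizontal/vertical compatibility relations $H, V \subseteq T \times T$, a row length $n$ in unary, and initial/final conditions, decide whether there exist $m \geq 1$ and a valid tiling of an $n \times m$ grid. $\Psp$-hardness of this problem is standard, via encoding configurations of a polynomial-space Turing machine as rows with $V$ simulating the transition relation. Since the reduction I design produces formulas of size polynomial in $n$ and $|T|$, hardness for both satisfiability and model checking will follow in one shot.

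I encode a candidate tiling as a finite trace of length $n\cdot m$ whose $k$-th position represents cell $(k \bmod n,\lfloor k/n\rfloor)$. The proposition letters are $\{p_t : t\in T\}$ (tile types) and $\{c_i : 0\leq i<n\}$ (column markers), together with a few auxiliary propositions used below to break the left-right symmetry of $\D$. I then build a \hsDhom-formula $\varphi$, universally guarded by $\BD$, conjoining sub-formulas enforcing: (i) each singleton carries exactly one $p_t$ and exactly one $c_i$; (ii) the column sequence cycles as $c_0,c_1,\ldots,c_{n-1},c_0,\ldots$, expressed via length-$1$ intervals using the length predicate $L_{=k} := \D^k \top \wedge \neg\,\D^{k+1}\top$, which has polynomial size since $n$ is in unary; (iii) horizontal compatibility on adjacent-column pairs; (iv) vertical compatibility on pairs of cells at trace distance $n$, realized via length-$n$ intervals whose two endpoints carry the same column marker; (v) the initial/final conditions, enforced by a gadget on the two extreme singletons. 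For model checking, I take $\Ku$ to be a small Kripke structure whose initial traces range over all conceivable candidate encodings; then $\Ku\models\neg\varphi$ iff no valid tiling exists, and $\Psp$-hardness of MC follows because $\Psp$ is closed under complement. Replacing the $\sqsubset$-based length gadget by its $\ssubint$-counterpart (Definitions~\ref{def:d_generatorS}--\ref{def:rownextS}) delivers the analogous result for \hsDhomStrict.

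The main obstacle is clause (iv): $\D$ is blind to the orientation of intervals, so on a length-$n$ interval $[x,x+n]$ joining a cell to the one directly below it, the formula can observe only the unordered pair of endpoint labels, yet $V$ is in general asymmetric. I plan to address this by enriching the alphabet with a ``below'' copy of each tile: each position is additionally labeled with the tile of the cell immediately below it (with a distinguished bottom marker used on the last row). Vertical consistency then becomes a purely local condition --- the ``below'' tile recorded at one endpoint of a length-$n$ interval must coincide with the tile sitting at the other endpoint --- which is symmetric in the two endpoints and thus easy to enforce via $\D$. With this symmetry-breaking trick in place, the remaining sub-formulas (i)--(iii), (v) are routine conjunctions of constraints over length-$1$ intervals and over the two boundary singletons.
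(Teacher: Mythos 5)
Your overall strategy --- a polynomial-time reduction from the domino-tiling problem for grids with rows of linear, unary-encoded length, encoding a tiling row by row as a trace over blocks of $n$ positions, and obtaining the MC lower bound by pairing the formula with a small Kripke structure generating all candidate encodings --- is exactly the route the paper takes. The gap is in your clause (iv): you correctly identify the danger (the orientation-blindness of $\D$), but your fix does not remove it. The ``below''-copy makes the compatibility check $(\mathit{tile}(k),\mathit{below}(k))\in V$ local, but the \emph{propagation} constraint $\mathit{below}(x)=\mathit{tile}(x+n)$ remains directed: on an interval $[x,x+n]$, $\D$ (under homogeneity) only lets you observe \emph{which} singleton atoms occur in it, not their order, so the strongest sound condition you can state is the disjunction ``$\mathit{below}(x)=\mathit{tile}(x+n)$ or $\mathit{below}(x+n)=\mathit{tile}(x)$''. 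A trace in which some (or all) below-labels point backwards satisfies this disjunction together with the local $V$-membership check, yet at those positions it enforces $V$ with its arguments swapped; for an asymmetric $V$ this accepts encodings of non-tilings, so the reduction is unsound. The paper resolves precisely this orientation problem by tagging each row with an alternating bit $b\in\{0,1\}$: in a window of $n+1$ positions the two positions sharing a cell index are the endpoints and carry distinct tags, and the earlier one is identified by asking whether the interior position with the \emph{successor} cell index carries the same tag as it --- a condition on which atoms occur, hence visible to the unordered view $\D$ provides. You need this (or an equivalent) mechanism; the below-copy alone cannot supply it.

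Two smaller points. First, your clause (ii) only constrains \emph{adjacent} pairs of column markers to be cyclically consecutive, which permits the column sequence to wander back and forth (e.g.\ $c_0,c_1,c_2,c_1,c_2,\ldots$); after that, ``length-$n$ intervals whose endpoints share a column marker'' no longer identify vertically adjacent cells. The paper additionally requires every window of $n$ positions to contain every cell index, which pins the sequence down to a fixed cyclic order up to global reversal, and then handles the reversal explicitly. Second, the corollary also asserts $\NLOGSP$-completeness of MC for fixed-length formulas; your reduction produces formulas whose size grows with $n$ and says nothing about this part, which the paper dispatches by a separate trivial reduction from directed-graph reachability.
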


The rest of the section is devoted to the proof of Theorem~\ref{theorem:hardness}. We focus on  \hsDhom. The proof for \hsDhomStrict{} is very similar, and thus we omit it.

We prove Theorem~\ref{theorem:hardness}   by means of a polynomial time reduction from a domino-tiling problem for grids with rows of linear length~\cite{harel92}. To start with,
we fix an instance $\Instance$ of such a problem, which is a tuple $\Instance =\tupleof{C,\Delta,n,d_\Init,d_\Final}$, where $C$ is a finite set of colors, $\Delta \subseteq C^{4}$ is a set of tuples $\tupleof{c_\Down,c_\Left,c_\Up,c_\Right}$ of four colors, called \emph{domino-types}, $n>1$ is a  natural number encoded in \emph{unary},
and $d_\Init,d_\Final\in\Delta$ are two distinguished domino-types (respectively, the initial and final ones).
A \emph{grid} of $\Instance$ is a mapping $f: [1, \ell]\times[1,n]\mapsto  \Delta$  for
some natural number $\ell>0$. Note that each row of a grid consists of $n$ cells and each cell contains a domino type.
A \emph{\emph{tiling} of $\Instance$} is a grid $f: [1, \ell]\times[1,n]\mapsto  \Delta$ satisfying the following additional requirements.
\begin{itemize}
  \item two adjacent cells in a row have the same color on the shared edge, namely, for all $(i,j)\in [1,\ell]\times [1,n-1]$,
   $[f(i,j)]_{\Right}=[f(i,j+1)]_{\Left}$ (\emph{row constraint});
  \item two adjacent cells in a column have the same color on the shared edge, namely, for all $(i,j)\in [1,\ell-1]\times [1,n]$,
   $[f(i,j)]_{\Up}=[f(i+1,j)]_{\Down}$ (\emph{column constraint});
  \item $f(1,1)=d_\Init$ (\emph{initialization}) and $f(\ell,n)=d_\Final$ (\emph{acceptance}).
\end{itemize}

\begin{rem}\label{remark:hardness}
Without loss of generality, we can assume that for all domino-types $d,d'\in\Delta$:
\begin{itemize}
  \item if $[d]_{\Right}=[d']_{\Left}$, then $d\neq d_\Final$ and $d'\neq d_\Init$;
 \item if $[d]_{\Up}=[d']_{\Down}$, then $d\neq d_\Final$ and $d'\neq d_\Init$.
\end{itemize}
This ensures that the first cell of a tiling $f$ is the only one containing $d_\Init$, and the last cell of $f$ is the only one containing $d_\Final$.
\end{rem}

It is well known that the problem of checking the existence of a tiling of $\Instance$ is $\Psp$-complete~\cite{harel92}.
To prove the statement about MC of Theorem~\ref{theorem:hardness}, we show 
 how to construct, in polynomial time, a finite Kripke structure $\Ku_\Instance$ and a \hsDhom{} formula
$\Phi_\Instance$ such that there is a tiling of $\Instance$ if and only if $\Ku_\Instance\not\models \neg\Phi_\Instance$ (Subsection~\ref{sec:hardnessMC}).
As for the claim 
about satisfiability checking, we explain  how to construct, in polynomial time  for the given instance $\Instance$, a \hsDhom{} formula which is satisfiable if and only if there is a tiling of $\Instance$ (Subsection~\ref{sec:hardnessSAT}).

First, we define a suitable encoding of the tilings of $\Instance$ by non-empty finite words over $2^{\AP}$, where
the set $\AP$ of atomic propositions is given by  $\AP = \Delta \times [1,n]\times \{0,1\}$. In the following, we identify non-empty finite words $w$ over $2^{\AP}$ with the induced homogeneous interval models $\bM(w)$.

We encode the row of a tiling
by concatenating the codes of the row’s cells starting from the first cell, and by marking the encoding with a tag which is a bit in $\{0,1\}$. Each cell code keeps track of the associated content and position along the row. Formally, a \emph{row-code} with tag $b\in\{0,1\}$   is a word over $2^{\AP}$ of length $n$ having the form
$
\{(d_1,1,b)\} \ldots \{(d_n,n,b)\}
$
 such that the following holds:
\begin{itemize}
  \item for all $i\in [1,n-1]$, $[d_i]_{\Right}=[d_{i+1}]_{\Left}$ (\emph{row constraint}).
\end{itemize}
 A sequence $\nu$ of row-codes is \emph{well-formed} if for each non-last row-code in $\nu$ with tag
$b$, the next row-code in $\nu$ has tag $1-b$ for all $b\in \{0,1\}$ (i.e., the tag changes in moving from a row-code to the next one).
Tilings $f$  are then encoded by  words over $2^{\AP}$ corresponding to well-formed concatenations of the codes
of the rows of $f$ starting from the first row.

Note that by adding two additional padding colors $c_L$ and $c_R$ and domino types whose up and down parts are distinct from $c_L$ and $c_R$, we can also assume that each \emph{partial} row-code $\{(d_i,i,b)\} \ldots \{(d_j,j,b)\}$, where $i\leq j$, $d_i\neq d_\Init$ if $i\neq 1$, and $d_j\neq d_\Final$ if $j\neq n$,  can always be extended to a whole row-code.

\subsection{$\Psp$-hardness for MC against \hsDhom{}}\label{sec:hardnessMC} Let as define an \emph{initialized} well-formed sequence of row-codes as a sequence whose first symbol is $\{(d_\Init,1,0)\}$). By construction, the following result
trivially holds.

\begin{prop}\label{prop:KripkeStrucutureHardness}  A finite Kripke structure $\Ku_\Instance$ over $2^{\AP}$ such that the initial traces $w$ of $\Ku$ correspond to the non-empty prefixes of the initialized well-formed sequences of row-codes can be built in time polynomial in the size of $\Instance$.
\end{prop}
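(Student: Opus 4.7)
The plan is to build $\Ku_\Instance$ as a small finite-state ``generator'' that outputs, one cell at a time, the encoding of a row-code and then nondeterministically chooses how to start the next row-code, enforcing exactly the local/syntactic conditions defining an initialized well-formed sequence of row-codes (row constraint, correct position counter, tag alternation, correct initial symbol).

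Concretely, I would take the state set to be
$\States = \{s_{(d,i,b)} \mid d\in\Delta,\; i\in [1,n],\; b\in\{0,1\}\}$,
let $\mu(s_{(d,i,b)}) = \{(d,i,b)\}$, and designate $\sinit = s_{(d_\Init,1,0)}$. The transitions would be:
(i) within a row, $(s_{(d,i,b)},s_{(d',i+1,b)})\in \Edges$ whenever $i<n$ and $[d]_{\Right}=[d']_{\Left}$, so that the position advances by one, the tag is preserved, and the row constraint is enforced locally;
(ii) at the end of a row, $(s_{(d,n,b)},s_{(d',1,1-b)})\in \Edges$ for every $d'\in\Delta$, so that the next row-code starts at position~$1$ with the flipped tag. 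By construction, an initial path of $\Ku_\Instance$ produces a word whose letters are singletons $\{(d_k,i_k,b_k)\}$; an easy induction on its length shows that the word is a non-empty prefix of a concatenation $c_1 c_2\cdots$ of row-codes whose tags alternate and whose first symbol is $\{(d_\Init,1,0)\}$, that is, a non-empty prefix of an initialized well-formed sequence of row-codes. Conversely, every such prefix can be traced back through (i)--(ii) from $\sinit$, so the initial traces of $\Ku_\Instance$ are exactly the required prefixes.

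It remains to check left-totality of $\Edges$, which is the only mild obstacle. For states with $i=n$ the transitions~(ii) always provide $|\Delta|\geq 1$ successors. For states with $i<n$, using the implicit padding-color assumption mentioned just before the subsection (which guarantees that any partial row-code can be extended to a whole row-code), for every $d\in\Delta$ there exists at least one $d'\in\Delta$ with $[d]_{\Right}=[d']_{\Left}$, so transitions~(i) always provide a successor. Hence $\Edges$ is left-total as required by the definition of a finite Kripke structure.

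Finally, polynomial size is immediate: $|\States|=2n\cdot |\Delta|$ and $|\Edges|\leq 2n\cdot|\Delta|^2$, and $n$ is given in unary, so the Kripke structure $\Ku_\Instance$ can indeed be produced in time polynomial in the size of $\Instance$. The construction is purely syntactic; the harder semantic conditions (column constraint, acceptance via $d_\Final$) will not be enforced by $\Ku_\Instance$ but left to the \hsDhom-formula $\Phi_\Instance$ in the subsequent step of the reduction.
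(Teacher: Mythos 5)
Your overall approach---one state per labelled cell $(d,i,b)$, within-row edges enforcing the row constraint and the position increment, and end-of-row edges flipping the tag and resetting the position to $1$---is exactly the construction the proposition calls for (the paper itself gives no proof, dismissing the statement as holding ``by construction''), and your size and polynomial-time bounds are fine. There is, however, a concrete bug in the transition relation and in your left-totality argument. Under Remark~\ref{remark:hardness}, no $d'\in\Delta$ satisfies $[d_\Final]_{\Right}=[d']_{\Left}$, so your claim that ``for every $d\in\Delta$ there exists at least one $d'\in\Delta$ with $[d]_{\Right}=[d']_{\Left}$'' is false precisely for $d=d_\Final$; the padding assumption you invoke cannot rescue this case, because it explicitly excludes partial row-codes ending in $d_\Final$ at a position $j\neq n$. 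The consequence is not confined to unreachable states: your end-of-row transitions (ii) go from $s_{(d,n,b)}$ to $s_{(d',1,1-b)}$ for \emph{every} $d'\in\Delta$, including $d'=d_\Final$, so $s_{(d_\Final,1,1-b)}$ is reachable, is a dead end when $n>1$, and the initial trace ending there is \emph{not} a prefix of any initialized well-formed sequence of row-codes, since no row-code can place $d_\Final$ at a position other than $n$. Thus both left-totality and the exact correspondence between initial traces and the required prefixes---which is the whole content of the proposition---fail for the structure as you defined it.

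The repair is easy: only allow transitions into states $s_{(d',i,b)}$ such that the single cell $(d',i)$ actually occurs in some row-code. By the padding assumption together with Remark~\ref{remark:hardness}, this excludes exactly $d'=d_\Final$ at positions $i<n$ (the states $s_{(d_\Init,i,b)}$ with $i>1$ are already never entered by your edges, since no within-row edge can target $d_\Init$). Concretely, restrict (ii) to $d'\neq d_\Final$ and restrict (i) so that $d_\Final$ is entered only at position $n$; equivalently, prune in polynomial time all states from which position $n$ cannot be reached by within-row edges. With this pruning, every reachable state has a successor and the two directions of the trace characterization go through as you argue.
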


We now construct in polynomial time  a \hsDhom{} formula $\Phi_\Instance$ such that, given a non-empty prefix $w$ of an initialized well-formed sequence of row-codes, $w$ is a model of  $\Phi_\Instance$ if and only if $w$ has some prefix encoding a tiling of $\Instance$. Hence, by Proposition~\ref{prop:KripkeStrucutureHardness}, there exists a tiling of $\Instance$ if and only if there is some initial trace $w$ of $\Ku_\Instance$ such that $\Ku_\Instance,w\models \Phi_\Instance$ if and only if $\Ku_\Instance\not\models \neg \Phi_\Instance$.

Let $w$ be a non-empty prefix of an initialized well-formed sequence of row-codes. Formula $\Phi_\Instance$ must enforce the acceptance requirement and the column constraint between adjacent row-codes. For acceptance, it suffices to force  $w$ to visit some position where proposition $(d_\Final,n,b)$ holds for some tag $b\in\{0,1\}$. This can be done with the following \hsDhom{} formula: 
\[
\varphi_\Final := \displaystyle{\bigvee_{b\in\{0,1\}}} \D  (d_\Final,n,b)
\]
For the column constraint, it suffices to ensure the following requirement.
\begin{description}
  \item[(C-Req)] for each infix $\eta$ of $w$ of length $n+1$, let $\ell$ and $\ell'$ be the unique two distinct positions
  of $\eta$, with $\ell$ preceding $\ell'$, with the same cell index. Then, it holds that $[d_\ell]_\Up = [d_{\ell'}]_\Down$, where $d_\ell$ (resp., $d_{\ell'}$) is the domino-type of position $\ell$ (resp., $\ell'$).
\end{description}
In order to express this requirement in \hsDhom{}, we crucially observe that positions $\ell$ and $\ell'$ of the infix $\eta$ of length $n+1$ have distinct tag. Moreover, $\ell$ precedes $\ell'$ if and only if \emph{either} (i) $\ell$ and $\ell'$ have cell index
$n$ and there is a position of $\eta$ with cell index $1$ and the same tag as $\ell'$, \emph{or} (ii) $\ell$ and $\ell'$ have cell index $i$, for some  $i<n$, and there is a position of $\eta$ with cell index $i+1$ and the same tag as $\ell$.  Requirement~(C-Req) can thus be expressed by the following \hsDhom{} formula: 
\[
\begin{array}{ll}
   \varphi_{C} := & \bD\bigl(len_{=n+1} \longrightarrow \\
   & \displaystyle{\bigvee_{d,d'\in\Delta\mid [d]_\Up = [d']_\Down}\,\,\bigvee_{b\in\{0,1\}}}\bigl[
   (\D(d,n,b)\wedge \D(d',n,1-b)\wedge \displaystyle{\bigvee_{d''\in\Delta}}(d'',1,1-b))\,\vee
     \\
    & \phantom{\displaystyle{\bigvee_{d,d'\in\Delta\mid [d]_\Up = [d']_\Down}}\bigl[} \displaystyle{\bigvee_{i\in[1,n-1]}} (\D(d,i,b)\wedge \D(d',i,1-b)\wedge \displaystyle{\bigvee_{d''\in\Delta}}(d'',i+1,b))\,\bigr]\bigr)
\end{array}
\]
where $len_{=n+1}$ is the formula $\D^{n} \top\wedge \bD^{n+1} \bot$ capturing the infixes of length $n+1$. The desired
formula $\Phi_\Instance$ is given by
$\varphi_\Final\wedge \varphi_C$. This concludes the proof of the statement about MC of 
Theorem~\ref{theorem:hardness} (assuming the $\subint$-semantics).

\subsection{$\Psp$-hardness for \hsDhom{} satisfiability checking}\label{sec:hardnessSAT}

For an instance $\Instance$ of the considered domino-tiling problem, we construct in polynomial time a
\hsDhom{} formula $\Psi_\Instance$ (under the $\subint$-semantics) such that $\Instance$ has some tiling if and only if $\Psi_\Instance$ is satisfiable.

To build the formula $\Psi_\Instance$, we exploit as conjuncts the two \hsDhom{} formulas
$\varphi_\Final$ and $\varphi_C$, ensuring the acceptance requirement and the
column constraint, respectively, given in Subsection~\ref{sec:hardnessMC}.  Additionally, we introduce as a third conjunct in the definition of $\Psi_\Instance$, the \hsDhom{} formula $\varphi_{wf}$, which, intuitively, ``emulates'' the behaviour of the Kripke structure $\Ku_\Instance$ in Proposition~\ref{prop:KripkeStrucutureHardness}. Thus, $\Psi_\Instance:= \varphi_{wf}\wedge \varphi_\Final \wedge \varphi_C$.

The models of $\varphi_{wf}$ are all and only the non-empty finite words $w$ over $2^{\AP}$ satisfying the following condition: either $w$ or its reverse $w^{R}$ coincides with the infix $\eta$ of some well-formed sequence  of row-codes such that (i) $\eta$ visits some cell code
with cell index $1$ and content $d_\Init$, and (ii) $\eta$ contains some row-code. By construction (see Subsection~\ref{sec:hardnessMC}), in both cases ($w=\eta$ or $w^{R}=\eta$) the conjunct $\varphi_C$ in the definition of
$\Psi_\Instance$ enforces the column constraint on $\eta$. Moreover,
  our assumption on the instance $\Instance$ (see Remark~\ref{remark:hardness}) entails that the
$d_\Final$-positions in $\eta$ follow the $d_\Init$-positions. This ensures that (i) for each model $w$ of
$\psi_{\Instance}$, either $w$ or its reverse $w^{R}$ has an infix  encoding a tiling  of $\Instance$, and (ii) each tiling of $\Instance$ is a model of $\Psi_\Instance$.

We now provide the technical details of the construction of $\varphi_{wf}$, which is the conjunction of the following subformulas (w.l.o.g., we assume $n>2$): 
\begin{enumerate}
  \item \emph{Propositional mutual exclusion}: at every position, at most one proposition letter holds. This is expressed  by the \hsDhom formula
  \[
\neg \displaystyle{\bigvee_{p,p'\in\AP\mid p\neq p'}} \D (p\wedge p')
  \]
  \item \emph{Initialization}: there is some position where $(d_\Init,1,b)$ holds for some tag $b\in\{0,1\}$. This is captured by the in \hsDhom{} formula
    \[
     \displaystyle{\bigvee_{b\in\{0,1\}}} \D  (d_\Init,1,b)
    \]
  \item \emph{$n$-length constraint}: there is some infix of the given word $w$ ($w$ included) having length
  $n$ and such that all its positions have the same tag. This requirement can be formalized in \hsDhom{} as $\D \psi \vee \psi$,  where $\psi$ is the \hsDhom{} formula
    \[
\psi:= len_{=n} \wedge \displaystyle{\bigvee_{b\in \{0,1\}}} \bD (len_{=1} \rightarrow  \displaystyle{\bigvee_{d\in \Delta}\bigvee_{i\in [1,n]}}(d,i,b))
    \]
     \item \emph{Fixed ordering of cell indexes}: for each infix $\eta$ of the given word $w$ ($w $ included) having length
  $n$ and for each cell index $i\in [1,n]$, there is some position of $\eta$ with cell index $i$. This requirement ensures that there is a bijection (permutation) $\pi$ over $[1,n]$ such that the projection of $w$ over the $[1,n]$-component of $\AP$ has the form
    \[
    \pi(i+1)\ldots \pi(n) \,[\pi(1)\ldots \pi(n)]^{*}\, \pi(1)\ldots \pi(j-1) \text{ for some }i,j\in [1,n]
    \]
    Such a requirement can be expressed as $\bD \xi \wedge \xi$ where $\xi$ is the \hsDhom{} formula
    \[
\xi:= len_{=n}  \longrightarrow  \displaystyle{\bigwedge_{i\in [1,n]}} \D  \displaystyle{\bigvee_{d\in \Delta}\bigvee_{b\in \{0,1\}}}(d,i,b)
    \]
    Note that Requirements~(1) and~(4) also ensure that exactly one proposition letter holds at each position.
     \item \emph{Correct ordering of cell indexes, row constraint, and well-formedness}: we require that for  each infix $\eta$ of the given word $w$   having length
  $2$, the following holds:
   \begin{itemize}
     \item \emph{either} the two positions of $\eta$ have the same tag and consecutive cell indexes $i$ and $i+1$ for some $1\leq i<n$ such that $[d_i]_\Right = [d_{i+1}]_\Left$, where $d_i$ (resp., $d_{i+1}$) is the domino-type of position with cell index $i$ (resp., $i+1$),
     \item \emph{or} the two positions of $\eta$ have distinct tag  and cell indexes $1$ and $n$, respectively.
   \end{itemize}
   This ensures, in particular, that the permutation $\pi$ in Requirement~(4) corresponds either to the natural ordering of cell indexes or to its reverse. Requirement~(4) can be easily captured by the  \hsDhom{} formula
\[
\begin{array}{ll}
   \varphi_{C} := & \bD\bigl(len_{=2} \longrightarrow \\
   & \bigl[ \displaystyle{\bigvee_{d,d'\in\Delta\mid [d]_\Right = [d']_\Left}\,\,\bigvee_{b\in\{0,1\}}\,\,\bigvee_{i\in [1,n-1]}}
   (\D(d,i,b)\wedge \D(d',i+1,b))\,\vee
    \vspace{0.1cm} \\
    & \displaystyle{\bigvee_{d,d'\in\Delta}\,\,\bigvee_{b\in\{0,1\}}}  (\D(d,1,b)\wedge \D(d',n,1-b))\,\bigr]
\end{array}
\]
\end{enumerate}

By construction, it easily follows that a word $w$ is a model of $\varphi_{wf}$ if and only if (i) $w$ visits some position with the initial domino-type and cell index $1$, and (ii) $w$ or its reverse corresponds to an infix $\eta$ of some well-formed sequence  of row-codes
such that $\eta$ contains some row-code.
This concludes the proof of the statement about satisfiability checking of Theorem~\ref{theorem:hardness} (assuming the  $\subint$-semantics).

\section{Conclusions}
In this paper, we have determined the exact complexity of  the satisfiability checking and model checking problem for the logic \hsD\ of sub-intervals over finite linear orders and finite Kripke structures, respectively, under the homogeneity assumption. 
Since the two problems have been proved to be $\Psp$-complete, we have enriched the set of known ``tractable'' fragments  of HS (see \cite{TCSBMMPS19}) with a meaningful logic. 

The tractability of \hsD\ gives an additional insight in the open problem of the existence of elementary decision procedures for model checking the logic \hsBE\ and the full logic HS\@. \hsD\ can be considered as the most meaningful logic included in \hsBE, and its tractability leaves open the possibility of a positive answer to the open 
question. 

A way to tackle the question is to investigate the HS fragments featuring the modality \hsD\ together with either the modality \hsB\ (for the relation \emph{started-by}) or the modality \hsE\ (for the relation \emph{finished-by}) so to gradually approach the expressive power of \hsBE. 

The complexity of model checking the logic \hsBE\ and the full HS logic remains the most intriguing and challenging open question.

\paragraph*{Acknowledgements.}
The work by Alberto Molinari, Angelo Montanari, and Pietro Sala has been supported by the GNCS project \emph{Logic and Automata for Interval Model Checking}.

\bibliographystyle{alphaurl}
\bibliography{biblio}
\end{document}